\numberwithin{equation}{section}
\newtheorem{theorem}{Theorem}
\newtheorem{lemma}{Lemma}
\newtheorem{corollary}{Corollary}
\newtheorem{proposition}{Proposition}
\newtheorem{Assumption}{Assumption}
\DeclareMathOperator*{\res}{Res}
\begin{document}

\title{ Long-time asymptotic behavior of the modified   Schr\"{o}dinger  equation via  $\bar{\partial}$-steepest descent method  }
\author{Yiling YANG$^1$ and Engui FAN$^{1}$\thanks{\ Corresponding author and email address: faneg@fudan.edu.cn } }
\footnotetext[1]{ \  School of Mathematical Sciences, Fudan University, Shanghai 200433, P.R. China.}

\date{ }

\maketitle
\begin{abstract}
	\baselineskip=17pt

In this paper, we consider  the Cauchy problem for      the modified NLS equation
\begin{align}
&iu_t+u_{xx}+2\rho|u|^2u+i(|u|^2 u)_x=0,  \nonumber\\
&u(x,0)=u_0(x)\in H^{2,2}(R),\nonumber
\end{align}	
 where   $H^{2,2}(R)$ is  a weighted Sobolev space.
Using    nonlinear steepest descent method  and   combining   the $\bar{\partial}$-analysis,
  we show  that inside any fixed cone
\begin{equation}
C(x_1,x_2,v_1,v_2)=\left\lbrace (x,t)\in R^2|x=x_0+vt, x_0\in[x_1,x_2]\text{, }v\in[v_1+4\rho,v_2+4\rho]\right\rbrace.\nonumber
\end{equation}
the long time asymptotic behavior of the solution  $u(x,t)$   for the modified NLS equation  can
be characterized with  an $N(I)$-soliton   on discrete spectrum  and    leading order term $ O(t^{-1/2}) $  on continuous   spectrum  up to an residual error order $O(t^{-3/4})$.
\end{abstract}

\baselineskip=17pt

\newpage

\section {Introduction}
In this paper we study the long time asymptotic behavior for   the initial value problem  of the  modified NLS
  equation
\begin{equation}
iu_t+u_{xx}+2\rho|u|^2u+i(|u|^2 u)_x=0, \hspace{0.5cm} u(x,0)=u_0(x), \label{MNLS}
\end{equation}	
where  $\rho\in R$   and the   initial data  $u_0(x)$  belongs to   the weighted Sobolev space
\begin{equation*}
H^{2,2}(R)=\left\lbrace f\in L^2(R);x^2f,f''\in L^2(R)\right\rbrace.
\end{equation*}
The modified NLS equation  (\ref{MNLS})   was proposed to describe the nonlinear propagation of the Alfv$\grave{e}$n waves, the femtosecond optical pulses in a nonlinear single-mode optical fiber and the deep-water gravity waves \cite{Mio1,Stiassnie1}.  The term $i(|u|^2u)_x$ in the  equation (\ref{MNLS})  is called the self-steepening term, which causes an optical pulse to become asymmetric and steepen upward at the trailing edge \cite{Al, Yang}.  The equation (\ref{MNLS}) also describes the short pulses propagate in a long optical fiber characterized by a nonlinear refractive index \cite{Nakatsuka1,Tzoar1}.
Brizhik et al  showed  that the modified NLS equation (\ref{MNLS}), unlike the classical  NLS equation
\begin{equation}
iu_t+u_{xx}+2|u|^2u=0, \label{nls}
\end{equation}
possesses static localized solutions when the effective nonlinearity parameter is larger than a certain critical value \cite{RN13}.
  1970s, Wadati et al  showed that the equation  (\ref{MNLS})  is  completely integrable by  inverse scattering transformation  \cite{Wadati1979}. In recent years,  various exact  solutions for  the equation  (\ref{MNLS}) also  has  been extensively discussed  by   analytical method,  Hirota bilinear method and   Darboux transformation respectively \cite{DM1993, yangxiao, RN59, wenXY}.
The Hamiltonian structure from  mathematical structures  for the equation (\ref{MNLS}) was given \cite{TK2011}.   The inverse transformation  and dressing method  were used to
 construct N-soliton solutions of the modified  NLS  equation  (\ref{MNLS}) with zero boundary conditions were considered  \cite{chen1990,chen1991,Doktorov}.   Recently,  we  presented   inverse transformation
  for the  modified  NLS  equation  (\ref{MNLS}) with nonzero boundary conditions   by using Riemann-Hilbert  method \cite{yangfan2019}.
 From the determinant expressions of N-soliton solutions of the   modified NLS equation (\ref{MNLS}),  the   asymptotic behaviors of the N-soliton solutions  in the case of  $t \rightarrow\infty$  was  directly derived \cite{RN11}.
Kitaev and   Vartanian  applied  Deift-Zhou method  to  obtain  long-time  asymptotic solution  of  this equation (\ref{MNLS}) with decaying initial value.
They  derived an explicit functional form for the next-to-leading-order $O(t^{-1/2})$ term, that is \cite{AVKitaev},
\begin{align}
&u(x,t)=\frac{c}{\sqrt{t}} + O(t^{-1} \text{log} t),\label{pop}
\end{align}
 where  $c$ is related to initial value and phase point.

The  study  on the long-time behavior of nonlinear wave equations solvable by the inverse scattering method was first carried out by Manakov \cite{Manakov1974}.  Zakharov and Manakov   give the first result   for large-time asymptotic  of solutions for the  NLS equation (\ref{nls}) with  decaying initial value  in 1976 \cite{ZM1976}.    The inverse scattering method    also    worked  for long-time behavior of integrable systems    such as  KdV,  Landau-Lifshitz  and the reduced Maxwell-Bloch   system \cite{SPC,BRF,Foka}.   In 1993,    Deift and Zhou developed a  nonlinear steepest descent method to rigorously obtain the long-time asymptotics behavior of the solution for the MKdV equation
  by deforming contours to reduce the original  Riemann-Hilbert problem to a  model one  whose solution is calculated in terms of parabolic cylinder functions \cite{RN6}.    Later this method
  was applied  to  the focusing NLS equation, KdV equation, Fokas-Lenells equation and derivative NLS equation etc. \cite{RN9,RN10,Grunert2009,xu2015,xufan2013}.

In recent years,   McLaughlin and   Miller further  presented a $\bar\partial$ steepest descent method to analyze asymptotic of orthogonal polynomials with non-analytical weights.  This
   method   combine    steepest descent  with  $\bar{\partial}$-problem  rather than the asymptotic analysis of singular integrals on contours \cite{MandM2006,MandM2008}.
  When  it  is applied  to integrable systems,   the $\bar\partial$ steepest descent method  also has advantages,  such as   avoiding     delicate estimates involving $L^p$ estimates  of Cauchy projection operators, and leading  the non-analyticity in the RHP reductions to a $\bar{\partial}$-problem in some sectors of the complex plane  which can be solved by being recast into an integral equation and by using Neumann series.   Dieng and  McLaughin use it to study the defocusing NLS equation  under essentially minimal regularity assumptions on finite mass initial data \cite{DandMNLS};  Cussagna and  Jenkins study the defocusing NLS equation with finite density initial data \cite{SandRNLS}; They were also successfully applied to prove asymptotic stability of N-soliton solutions to focusing NLS \cite{fNLS} which  has been conjectured for a long time \cite{VEZ1972};  Jenkins and Liu study the derivative nonlinear Schr$\ddot{o}$dinger equation for generic initial data in a weighted Sobolev space \cite{Liu3}. Their work decomposes the solution into the sum of a finite number of separated solitons and a radiative partas when $t\to\infty$. And  the dispersive part contains two components, one coming from the continuous spectrum and another from the interaction of the discrete and continuous spectrum.

In our paper, we obtain the long-time asymptotic behavior of the solution of modified NLS equation (\ref{MNLS}) with initial data $u_0\in H^{2,2}$   by using the steepest descent method   and  $\bar\partial$ steepest descent method.   In  the recent work on the focusing NLS equation,  Borghese, Jenkins and McLaughlin  showed  how to treat a problem with discrete and  continuous spectral data \cite{fNLS};  The work on derivative NLS  equation due to  Jenkins and Liu  is the special case of  modified NLS with $\rho=0$  \cite{Liu3}.

This paper is arranged as follows.   Following the idea in  \cite{Liu3},    we reduce  the \textbf{RHP 1} in following context  into two parts,  one describes the asymptotic behavior of solitons,  and another model  computes the contributions due to the interactions of solitons and radiation.    In section 2,  we first make gauge transformation
\begin{equation*}
u=qe^{-i\int_{-\infty}^x |q|^2dy}, \hspace{0.5cm}u_0=q_0e^{-i\int_{-\infty}^x |q_0|^2dy},
\end{equation*}
and change the modified NLS equation (\ref{MNLS}) into  a new  equation which is more convenient to be dealt with.  We  describe the forward scattering transform step and  necessary results,
and establish  the  inverse scattering transform  with a  vector \textbf{RHP 1}.  In section 3, we   define a new row-vector RHP $M^{(1)}$ by (\ref{transm1})  which  deforms  the contour $R$  such that    the jump matrix (\ref{jump3}) approaches the identity exponentially fast away from the critical point $z_0$ (see Figure \ref{fig1}).    In section 4, we  deform \textbf{RHP 2} on a new contour $\Sigma^{(2)}$ whose jump matrices approach the identity exponentially fast away from the critical point $z_0$ by  defining a new unknown $M^{(2)}$(\ref{transm2}) which solves a mixed $\bar{\partial}$-Riemann-Hilbert problem--\textbf{RHP 3}. In section 5, we decompose $M^{(2)}$   into a model Riemann-Hilbert problem--\textbf{RHP 4} with solution $M^{RHP}$ and a pure $\bar{\partial}$-Problem--\textbf{$\bar{\partial}$-problem 5} with solution $M^{(3)}$.   To  solve  $M^{RHP}$,  we divide   it into  an outer model $M^{(out)}$ for the soliton components in Section 6,   and an inner model $M^{(z_0)}$ for the stationary phase point $z_0$ which is constructed by $M^{pc}$ by parabolic cylinder functions  in Section 7.    The outer and inner models together with error $E(z)$ build $M^{RHP}$ in (\ref{transm4}), where $E(z)$ is solution of a  small-norm Riemann-Hilbert problem in Section 8. Then we solve the \textbf{$\bar{\partial}$-problem 5} for $M^{(3)}$ in Section 9.  Thus, combinng  previous result we obtain
\begin{align}
M(z)=&M^{(3)}(z)M^{RHP}(z)R^{(2)}(z)^{-1}T(z)^{\sigma_3},\nonumber
\end{align}
for  brevity we omitting the dependence of (x,t). Then from the  asymptotic behavior as $|t|\to\infty$ of every function and the reconstruction formula we get (\ref{resultq}). But our ultimate purpose is to get the long-time  asymptotic behavior of $u(x,t)$, so we need to  establish the asymptotic formula of $e^{-i\int_{-\infty}^x |q|^2dy}$. Unlike in the reference
\cite{Liu3}, which is  our  special case  with $\rho=0$, we need to calculate $M_+(\rho)$ and  study its long-time  asymptotic behavior. At the point $\rho$, $M(z)$ doesn't have such simple nice properties like it at $\rho=0$, for example, his $M(z)$ is continuous at 0 and its calculation is simple. We calculate the values of $M_+(\rho)$ the parts in corresponding section and combine all result to obtain a long-time estimation of the phase factor $e^{-i\int_{-\infty}^x |q|^2dy}$ in Section 10.  Compared with the result  (\ref{pop}),    we   get a more general result (\ref{longtime1})-(\ref{longtime2}), which not only  improve   error estimate  as  a sharp error  $O(t^{-3/4})$  for more general initial data  $u_0\in H^{2,2}$.

\section {The inverse scattering method}
The modified NLS equation (\ref{MNLS}) admits the Lax pair \cite{Foka}
\begin{equation}
\phi_x = L_0\phi,\hspace{0.5cm}\phi_t = M_0\phi, \label{lax0}
\end{equation}
where
\begin{equation}
L_0=-i (k^2-\rho)\sigma_3+kU,\nonumber
\end{equation}
\begin{equation}
M_0=-2i(k^2-\rho)^2\sigma_3+2 k(k^2-\rho)U-ik^2U^2\sigma_3-ikU_x\sigma_3+kU^3\sigma_3,\nonumber
\end{equation}
and
\begin{equation}
\hspace{0.5cm}\sigma_3=\left(\begin{array}{cc}
1 & 0   \\
0 & -1
\end{array}\right),\hspace{0.5cm}U=\left(\begin{array}{cc}
0 & u  \\
-\bar{u} &0
\end{array}\right).\nonumber
\end{equation}
To avoid the imaginary axis becoming its boundary, we use a new Lax pair which is written in terms of $z=k^2$. And we discard the symmetry of the equation to avoid the factor $z^{1/2}$.
So  we  make   transformation
\begin{align}
&\phi =\left(\begin{array}{cc}
k & 0   \\
0 & 1
\end{array}\right)\psi,\nonumber 
\end{align}
and  get Lax pair \cite{chen1991} 
\begin{equation}
\psi_x = L\psi,\hspace{0.5cm}\psi_t = M\psi, \label{lax1}
\end{equation}
where
\begin{equation}
L=-i (z-\rho)\sigma_3+\Lambda U,\nonumber
\end{equation}
\begin{equation}
M=-2i(z-\rho)^2\sigma_3+2 (z-\rho)\Lambda U-izU^2\sigma_3-i\Lambda U_x\sigma_3+\Lambda U^3\sigma_3,\nonumber
\end{equation}
and
\begin{equation}
\hspace{0.5cm}\Lambda =\left(\begin{array}{cc}
1 & 0   \\
0 & z
\end{array}\right).\nonumber
\end{equation}

In order to have good asymptotics at $x \to \pm\infty$, because of the necessity of that the  diagonal line of $\phi\to I$ as  $x \to \pm\infty$, we first make following  transformation
\begin{equation}
u=qe^{-i\int_{-\infty}^x |q|^2dy}, \hspace{0.5cm}u_0=q_0e^{-i\int_{-\infty}^x |q_0|^2dy}.\label{trans1}
\end{equation}
This nonlinear, invertible mapping: $u\to q$ is an isometry of $L^2(R)$, which maps soliton solutions to soliton solutions, and maps dense open sets to dense open sets in weighted Sobolev spaces. It has  inverse transformation
\begin{equation}
q=ue^{i\int_{-\infty}^x |u|^2dy}.
\end{equation}
Then equation (\ref{MNLS}) is gauge-equivalent to
\begin{equation}
iq_t+q_{xx}+2\rho|q|^2q+i(|q|^2 q)_x-2i|q|^2q_x=0, \label{MNLS1}
\end{equation}
with Lax pair transformation
\begin{equation}
\psi=e^{-i/2\int_{-\infty}^x |q|^2dy\sigma_3}\Psi,
\end{equation}
from which we obtaina new Lax pair
\begin{align}
&\Psi_x =  -i (z-\rho)\sigma_3\Psi+Q \Psi,\label{lax2.1}\\
&\Psi_t =  -2i(z-\rho)^2\sigma_3\Psi+P \Psi, \label{lax2.2}
\end{align}
where
\begin{equation}
Q=\Lambda Q_0+\dfrac{i}{2}q^2\sigma_3,\nonumber \hspace{0.5cm}Q_0=\left(\begin{array}{cc}
0 & q  \\
-\bar{q} &0
\end{array}\right),
\end{equation}
\begin{equation}
P=2 (z-\rho)\Lambda Q_0-izQ_0^2\sigma_3-i\Lambda (Q_0)_x\sigma_3+\Lambda Q_0^3\sigma_3.\nonumber
\end{equation}

First, we consider the situation  that $q$ only dependent on $x \in R$.
Consider the Jost solutions of (\ref{lax2.1}) , which are restricted by the boundary conditions that
\begin{equation}
\Psi_\pm \sim  e^{-i(z-\rho)\sigma_3x}, \hspace{0.5cm}x\to \pm\infty.\label{asyx}
\end{equation}
They can be expressed as
\begin{align*}
\Psi_\pm =\left(\begin{array}{cc}
\Psi^1_\pm & \hat{\Psi}^1_\pm   \\
\Psi^2_\pm & \hat{\Psi}^2_\pm
\end{array}\right).
\end{align*}
The first and second columns of  the equation (\ref{lax2.1}) can be written respectively as
\begin{align}
\left(\begin{array}{c}
\Psi^1_\pm    \\
\Psi^2_\pm
\end{array}\right)_x=-i(z-\rho)\sigma_3\left(\begin{array}{c}
\Psi^1_\pm    \\
\Psi^2_\pm
\end{array}\right)+Q\left(\begin{array}{c}
\Psi^1_\pm    \\
\Psi^2_\pm
\end{array}\right),\\
\left(\begin{array}{c}
\hat{\Psi}^1_\pm    \\
\hat{\Psi}^2_\pm
\end{array}\right)_x=-i(z-\rho)\sigma_3\left(\begin{array}{c}
\hat{\Psi}^1_\pm    \\
\hat{\Psi}^2_\pm
\end{array}\right)+Q\left(\begin{array}{c}
\hat{\Psi}^1_\pm    \\
\hat{\Psi}^2_\pm
\end{array}\right),
\end{align}
from which we can obtain
\begin{align}
&(\Psi^2_\pm )_x=i(z-\rho)\Psi^2_\pm-z\bar{q}\Psi^1_\pm-\dfrac{i}{2}|q|^2\Psi^2_\pm,\\
&(\Psi^1_\pm )_x=-i(z-\rho)\Psi^1_\pm+q\Psi^2_\pm+\dfrac{i}{2}|q|^2\dfrac{i}{2}|q|^2\Psi^1_\pm,\\
&(\hat{\Psi}^1_\pm )_x=-i(z-\rho)\hat{\Psi}^1_\pm+q\hat{\Psi}^2_\pm+\dfrac{i}{2}|q|^2\hat{\Psi}^1_\pm ,\\
&(\hat{\Psi}^2_\pm )_x=i(z-\rho)\hat{\Psi}^2_\pm-z\bar{q}\hat{\Psi}^1_\pm-\dfrac{i}{2}|q|^2\hat{\Psi}^2_\pm .
\end{align}
By simple calculation, we obtain
\begin{align}
&\overline{(-\dfrac{1}{z}\Psi^2_\pm )_x}=-i(\bar{z}-\rho)(\overline{-\dfrac{1}{z}\Psi^2_\pm})+q\overline{\Psi^1_\pm}+\dfrac{i}{2}|q|^2(\overline{-\dfrac{1}{z}\Psi^2_\pm}),\\
&\overline{(\Psi^1_\pm )_x}=i(\bar{z}-\rho)\overline{\Psi^1_\pm}-\bar{z}\bar{q}(\overline{-\dfrac{1}{z}\Psi^2_\pm})-\dfrac{i}{2}|q|^2\overline{\Psi^1_\pm}.\\
\end{align}
Comparing their asymptotic condition (\ref{asyx}) and equation respectively, we get the symmetry of $\Psi$ as follow
\begin{align}
\overline{\Psi^1_\pm(\bar{z})}=\hat{\Psi}^2_\pm ,\hspace{0.5cm}\overline{-\dfrac{1}{z}\Psi^2_\pm(\bar{z})}=\hat{\Psi}^1_\pm.\label{symPsi}
\end{align}
These  solutions can be expressed as Volterra type integrals
\begin{equation}
\left(\begin{array}{c}
\Psi^1_\pm    \\
\Psi^2_\pm
\end{array}\right)=e^{-i(z-\rho)\sigma_3}+\int_{x}^{\pm\infty}e^{i(z-\rho)\hat{\sigma}_3}Q\left(\begin{array}{c}
\Psi^1_\pm    \\
\Psi^2_\pm
\end{array}\right)dy.
\end{equation}
Then we obtain that $\Psi^1_+$ and $\Psi^2_+$ are analysis in $C^-$. In the same way we obtain that  $\Psi^1_-$ and $\Psi^2_-$ are analysis in $C^+$.

By making transformation
\begin{equation}
\varphi_\pm=\Psi_\pm e^{i(z-\rho)\sigma_3},\label{trans2}
\end{equation}
we then  have
\begin{equation*}
\varphi_\pm \sim I, \hspace{0.5cm} x \rightarrow \pm\infty.
\end{equation*}
Moreover,  $\varphi_\pm$    satisfy an  equivalent Lax pair
\begin{align}
&(\varphi_\pm)_x = \left(-i (z-\rho)[\sigma_3,\varphi_\pm]+Q\right)\varphi_\pm,\label{lax3.1}\\
&(\varphi_\pm)_t = \left(-2i(z-\rho)^2[\sigma_3,\varphi_\pm]+P\right)\varphi_\pm, \label{lax3.2}
\end{align}
For  convenience, we denote two elements of the first column of $\varphi_\pm$ as $\varphi^2_\pm$  and $\varphi^1_\pm$  respectively.
Since ${\rm tr}\left(-i (z-\rho)\sigma_3+Q\right)=0$  in  (\ref{lax2.1}) and (\ref{lax2.2}),  by using Able formula, we  have
\begin{equation}
(\det\Psi_\pm)_x=0. \label{abl}
\end{equation}
Again by using the relation
\begin{equation*}
\det(\varphi_\pm)=\det(\Psi_\pm e^{i(z-\rho)\sigma_3})=\det(\Psi_\pm),
\end{equation*}
we get  $(\det\varphi_\pm)_x=0$, which means  that  $\det(\varphi_\pm)$ is independent of $x$.  So we obtain that
\begin{equation}
\det\varphi_\pm=\lim_{x \to \pm\infty}\det(\varphi_\pm)=\det I=1, \label{detphi}
\end{equation}
which implies that $\varphi_\pm$ are inverse matrices.

Since   $\Psi_\pm$ are two fundamental matrix solutions of the  Lax  pair (\ref{lax2.1}) and (\ref{lax2.2}),  there exists a linear  relation between $\Psi_+$ and $\Psi_-$, namely
\begin{equation}
\Psi_+(x,z)=\Psi_-(x,z)S(z),\hspace{0.5cm} z\in R,\label{scattering}
\end{equation}
where $S(z)$ is called scattering matrix and  $\det S(z)=1$  by using  (\ref{detphi}). Form the symmetry of  $\Psi_\pm$ (\ref{symPsi}),   $S(z)$ can be written as
\begin{align*}
S(z) =\left(\begin{array}{cc}
a(z) & b(z)   \\[4pt]
-z\overline{b(z)} & \overline{a(z)}
\end{array}\right),
\end{align*}
then we have
\begin{equation}
|a(z)|^2+z|b(z)|^2=1.
\end{equation}
We introduce the reflection coefficient
\begin{equation}
r(z)=\dfrac{\overline{b(z)}}{a(z)}.
\end{equation}
Then we immediately have $1+z|r(z)|^2=|a(z)|^{-2}$.
Calculating individual elements of  the matrix equation (\ref{scattering}), then we obtain that the function a(z) and b(z) may be computed via the  Wronskian formula:
\begin{equation}
a(z)=\Psi_+^1\hat{\Psi}_-^2-\Psi_+^2\hat{\Psi}_-^1,\hspace{0.5cm}b(z)=\hat{\Psi}_+^1\hat{\Psi}_-^2-\hat{\Psi}_+^2\hat{\Psi}_-^1.\label{ab}
\end{equation}
Thus a(z) has an analytic continuation to $C^+$, but b(z) has no analyticity.

To  get   the Riemann-Hilbert problem , it is necessary to discuss the asymptotic behaviors of the Jost solutions and scattering matrix as $z \rightarrow \infty$. We denote 	We consider the following  asymptotic  expansions
\begin{align}
&\varphi_\pm(x,z)=\varphi_\pm^{(0)}(x)+\frac{\varphi_\pm^{(1)}(x)}{z}+\frac{\varphi_\pm^{(2)}(x)}{z^2}+O(z^{-3}),\hspace{0.5cm}\text{as }z \rightarrow \infty.\label{expansion1}
\end{align}
Substituting   (\ref{expansion1})   into the Lax pair (\ref{lax3.1}) and comparing the coefficients,  we obtain
\begin{align*}
&2i\varphi^{2,(0)}_\pm-\bar{q}\varphi^{1,(0)}_\pm=0,\\
&(\varphi^{1,(0)}_\pm)_x=\frac{i}{2}|q|^2\varphi^{1,(0)}_\pm+q\varphi^{2,(0)}_\pm.
\end{align*}
Then we obtain $(\varphi^{1,(0)}_\pm)_x=0$, which means
\begin{align*}
&\varphi^{1}_\pm \rightarrow 1,\hspace{0.5cm}\text{as }z \rightarrow \infty,\\
&\varphi^{2}_\pm \rightarrow -\dfrac{i}{2}\bar{q},\hspace{0.5cm}\text{as }z \rightarrow \infty.
\end{align*}

Therefore from (\ref{ab}) we have
\begin{align*}
&a(z)\rightarrow 1,\hspace{0.5cm}\text{as }z \rightarrow \infty.
\end{align*}
When $z$ is real, we have
\begin{align*}
b(z)=O(z),\hspace{0.5cm}r(z)=O(z),\hspace{0.5cm}\text{as }z \rightarrow \infty.
\end{align*}

Zeros of $a$ on R are known to occur and they correspond to spectral singularities \cite{RN3}. They are excluded from our analysis in the this paper. To deal with our following job, we assume our initial data satisfy this assumption.
\begin{Assumption}\label{initialdata}
	The initial data $u_0 \in H^{2,2}(R)$     and it generates generic scattering data which satisfy that
	
	\textbf{1. }a(z) has no zeros on $R$.
	
	\textbf{2. }a(z) only has finite number of simple zeros.
	
	\textbf{3. }a(z) and r(z) belong  $H^{2,2}(R)$.
\end{Assumption}
In the absence of spectral singularities (real zeros of $a(z)$), there also exist $c\in (0,1)$ such that $c<|a(z)|<1/c$ for $z\in R$, which implies $1+z|r(z)|>c^2>0$ for $z\in R$.

We assume that $a(z)$ has N simple zeros $z_n \in C^+$, n = 1, 2,.., N, $a(z_n) = 0$. Denote $Z$=$\left\lbrace z_n\right\rbrace ^N_{n=1}$ which is the set of the zeros of $a(z)$.  A standard result of the scattering theory have following trace formula for the transmission coefficient:
\begin{equation}
a(z)=\prod_{k=1}^N\dfrac{z-z_k}{z-\bar{z}_k}\exp\left(i\int _{R}\dfrac{k(s)ds}{s-z}\right),\label{az}
\end{equation}
where
\begin{equation}
k(s)=-\frac{1}{2\pi}\log(1+s|r(s)|^2)\label{ks}.
\end{equation}
Combining (\ref{scattering}) and (\ref{trans2}) we obtain
\begin{equation}
\varphi_+=\varphi_-e^{-(z-\rho)x\hat{\sigma}_3}S(z),
\end{equation}
from which we have
\begin{equation}
a(z)=\lim_{x \to -\infty}\varphi_+^1.
\end{equation}
Then rewrite (\ref{lax3.1}), we have
\begin{equation}
(\varphi_+)_x=(z-\rho)\left( \varphi_+\left(\begin{array}{cc}
i & 0   \\
0 & -i
\end{array}\right)-\left(\begin{array}{cc}
i & 0   \\
\bar{q} & -i
\end{array}\right)\varphi_+ \right)+\left(\begin{array}{cc}
\frac{i}{2}|q|^2 & q   \\
-\rho\bar{q} & \frac{-i}{2}|q|^2
\end{array}\right)\varphi_+.
\end{equation}
Note that $\lim_{x \to +\infty}\varphi_+=I$, as $z=\rho$ is a regular point of this system of equations, we have
\begin{align}
\varphi_+(x,t,\rho)=&e^{\frac{-i\sigma_3}{2}\int_{x}^{+\infty}|q(y,t)|^2dy}\nonumber\\
&\left(\begin{array}{cc}
1 & -\int_{x}^{+\infty}q(y,t) e^{i\int_{y}^{+\infty}|q(s,t)|^2ds}dy \\
\int_{x}^{+\infty}\bar{q}(y,t) e^{-i\int_{y}^{+\infty}|q(s,t)|^2ds}dy & 1
\end{array}\right).\label{intq}
\end{align}
Then we take $x\to\-\infty$ and obtain that
\begin{equation}
a(\rho)=e^{-\frac{i}{2}\int_{-\infty}^{+\infty}|q(y,t)|^2dy}=\prod_{k=1}^N\dfrac{\rho-z_k}{\rho-\bar{z}_k}exp\left(i\int _{R}\dfrac{k(s)ds}{s-\rho}\right)\label{arho}.
\end{equation}
Similarly we have
\begin{equation}
(\varphi_-)_x=(z-\rho)\left( \varphi_-\left(\begin{array}{cc}
i & 0   \\
0 & -i
\end{array}\right)-\left(\begin{array}{cc}
i & 0   \\
\bar{q} & -i
\end{array}\right)\varphi_- \right)+\left(\begin{array}{cc}
\frac{i}{2}|q|^2 & q   \\
-\rho\bar{q} & \frac{-i}{2}|q|^2
\end{array}\right)\varphi_-,
\end{equation}
and
\begin{align}
\varphi_-(x,t,\rho)=&e^{\frac{i\sigma_3}{2}\int_{-\infty}^{x}|q(y,t)|^2dy}\nonumber\\
&\left(\begin{array}{cc}
1 & -\int_{-\infty}^{x}q(y,t) e^{-i\int_{-\infty}^{y}|q(s,t)|^2ds}dy \\
\int_{-\infty}^{x}\bar{q}(y,t) e^{i\int_{-\infty}^{y}|q(s,t)|^2ds}dy & 1
\end{array}\right).\label{varphi-rho}
\end{align}

Then we begin to calculate residue conditions.
At any $z_n$, $(\Psi_+^1,\Psi_+^2)$ and $(\Psi_-^1,\Psi_-^2)$ are linearly dependent. Specifically, a constant $b_k$ exists such that:
\begin{equation*}
(\Psi_+^1,\Psi_+^2)=b_k(\Psi_-^1,\Psi_-^2).
\end{equation*}
Denote\emph{ norming constant} $c_k=b_k/a'(z_k)$, for initial data $q_0$, the collection D=$\left\lbrace  r(z),\left\lbrace z_k,c_k\right\rbrace^N_{k=1}\right\rbrace $ is called the \emph{scattering data} for $q_0$.
It is an elementary calcution to show that the sectionally meromorphic matrices defined as follow
\begin{equation}
M(x,z)=\Bigg\{\begin{array}{ll}
M^+=\left(\begin{array}{cc}
a(z)^{-1}\varphi_+^1, & \hat{\varphi}^1_-\\[4pt]
a(z)^{-1}\varphi_+^2, & \hat{\varphi}^2_-
\end{array}\right), &\text{as } z\in D^+,\\[6pt]
M^-=\left(\begin{array}{cc}
\varphi_-^1, & \overline{a(\bar{z})}^{-1}\hat{\varphi}_+^1\\[4pt]
\varphi_-^2, & \overline{a(\bar{z})}^{-1}\hat{\varphi}_+^2
\end{array}\right), &\text{as }z\in D^-,\\
\end{array}
\end{equation}
solves the following Riemann-Hilbert problem:

\textbf{Riemann-Hilbert problem 0}. Find a matrix-valued function z$\in C \rightarrow m(z;x)$ which satisfies:

$\bullet$ Analyticity: $M(x,z)$ is meromorphic in $\mathbb{C}\setminus R$ and has single poles;

$\bullet$ Jump condition: M has continuous boundary values $M_\pm$ on $R$ and
\begin{equation}
M^+(x,z)=M^-(x,z)V(z),\hspace{0.5cm}z \in R,\label{jump}
\end{equation}
where
\begin{equation}
V(z)=\left(\begin{array}{cc}
1+z|r(z)|^2 & -e^{-2i(z-\rho)x}r(z)\\
-e^{2i(z-\rho)x}z\overline{r(z)} & 1
\end{array}\right);
\end{equation}

$\bullet$ Asymptotic behaviors:There exists p independent of z that
\begin{align}
&M(x,z) = \left(\begin{array}{cc}
1 & 0\\
p & 1
\end{array}\right)+O(z^{-1}),\hspace{0.5cm}z \rightarrow \infty;\label{asymbehv1}
\end{align}

$\bullet$ Residue conditions: M has simple poles at each point in $ Z\bigcup \bar{Z}$ with:
\begin{align}
&\res_{z=z_n}M(z)=\lim_{z\to z_n}M(z)\left(\begin{array}{cc}
0 & 0\\
c_ne^{2i(z_n-\rho)x} & 0
\end{array}\right),\label{RES1}\\
&\res_{z=\bar{z}_n}M(z)=\lim_{z\to \bar{z}_n}M(z)\left(\begin{array}{cc}
0 & -z_n^{-1}\bar{c}_ne^{-2i(\bar{z}_n-\rho)x}\\
0 & 0
\end{array}\right).\label{RES2}
\end{align}
From the asymptotic behavior of the functions $\varphi_\pm$, we have following reconstruction formula:
\begin{equation}
q(x)=2i\lim_{z\to \infty}[zM(z;x)]_{12}.
\end{equation}

Now we are going to take into account the time. If $q$ also depends on t (i.e.
$q$ = $q(x, t))$, we can obtain the functions a and b as above for all times t $\in R$. Taking account of (\ref{lax2.2}), we have
\begin{align}
\left( a(z;t)\right) _t=0,\hspace{0.5cm}\left( b(z;t)\right) _t=-4i(z-\rho)^2 b(z;t).
\end{align}
Then we can obtain time dependences of scattering data which can be expressed as the following replacement
\begin{align}
&c(z_n)\rightarrow c(t,z_n)=c(0,z_n)e^{-4i(z_n-\rho)^2t},\\
&r(z)\rightarrow r(t,z)=r(0,z)e^{-4i(z-\rho)^2t}
\end{align}
In particular, if at time $t = 0$ the initial function $q(x, 0)$ produces N simple zeros $z_1$,...,$z_N$ of $a(z; 0)$ and if $q$ evolves accordingly to the (\ref{MNLS1}), then $q(x, t)$ will produce exactly the same N simple zeros at any other time $t \in   R$. Altogether the scattering data of a function $q$(x, t), which is a solution of  (\ref{MNLS1}), is given at time $t$ by
\begin{equation*}
\left\lbrace  e^{-4i(z-\rho)^2t}r(z),\left\lbrace z_k,e^{-4i(z_n-\rho)^2t}c_k\right\rbrace^N_{k=1}\right\rbrace,
\end{equation*}
where $\left\lbrace  r(z),\left\lbrace z_k,c_k\right\rbrace^N_{k=1}\right\rbrace$ are obtained from the initial data $q(x, 0) =
q_0(x)$.
Let us introduce the phase function:
\begin{equation}
\theta(z)=(z-\rho)\frac{x}{t}+2(z-\rho)^2.
\end{equation}
For convenience we denote $\theta_n=\theta(z_n)$.   The (time-dependent) inverse spectral problem is defined by the following Riemann-Hilbert problem (RHP MNLS) and reconstruction formula:

\textbf{RHP modified NLS\label{RHPMNLS}}. Find a matrix-valued function z$\in C \rightarrow m(z;x,t)$ which satisfies:

$\bullet$ Analyticity: $M(z;x,t)$ is meromorphic in $\mathbb{C}\setminus R$ and has single poles;

$\bullet$ Jump condition: M has continuous boundary values $M_\pm$ on $R$ and
\begin{equation}
M^+(z;x,t)=M^-(z;x,t)V(z),\hspace{0.5cm}z \in R,\label{jump1}
\end{equation}
where
\begin{equation}
V(z)=\left(\begin{array}{cc}
1+z|r(z)|^2 & -e^{-2i\theta t}r(z)\\
-e^{2it\theta t}z\overline{r(z)} & 1
\end{array}\right);
\end{equation}

$\bullet$ Asymptotic behaviors:There exists p independent of z that
\begin{align}
&M(z;x,t) \sim \left(\begin{array}{cc}
1 & 0\\
p & 1
\end{array}\right)+O(z^{-1}),\hspace{0.5cm}z \rightarrow \infty;\label{asymbehv2}
\end{align}

$\bullet$ Residue conditions: M has simple poles at each point in $ Z\bigcup \bar{Z}$ with:
\begin{align}
&\res_{z=z_n}M(z)=\lim_{z\to z_n}M(z)\left(\begin{array}{cc}
0 & 0\\
c_ne^{2i\theta_n t} & 0
\end{array}\right),\label{RES3}\\
&\res_{z=\bar{z}_n}M(z)=\lim_{z\to \bar{z}_n}M(z)\left(\begin{array}{cc}
0 & -z_n^{-1}\bar{c}_ne^{-2i\bar{\theta}_n t}\\
0 & 0
\end{array}\right).\label{RES4}
\end{align}
Reconstruction formula is
\begin{equation}
q(x,t)=2i\lim_{z\to \infty}[zM(z;x,t)]_{12}\label{reconsq}.
\end{equation}
Inserting the time dependence into (\ref{RES1}) and (\ref{RES2}) we end up exactly with (\ref{RES3}) and (\ref{RES4}). Summarized the method of (inverse) scattering works as follows:
\begin{align*}
\begin{matrix}\begin{array}{ccc}
q_0(x)=q(x,0) & \longrightarrow\longrightarrow\longrightarrow& \left\lbrace  r(z),\left\lbrace z_k,c_k\right\rbrace^N_{k=1}\right\rbrace\\
\hspace{0.5cm}\hspace{0.5cm}\hspace{0.55cm}\downarrow & \text{scattering data} & \downarrow\\
\text{solve} (\ref{MNLS1})\downarrow  & \hspace{0.5cm}\hspace{0.5cm}\hspace{0.5cm}  & \hspace{2.83cm}\downarrow\text{time dependences}\\
q(x,t) &\longleftarrow\longleftarrow\longleftarrow & \left\lbrace  e^{-4i(z-\rho)^2t}r(z),\left\lbrace z_k,e^{-4i(z_n-\rho)^2t}c_k\right\rbrace^N_{k=1}\right\rbrace\\
\hspace{0.5cm} & \text{solve RHP MNLS} & \hspace{0.5cm}
\end{array}\end{matrix}
\end{align*}
Because \textbf{RHP MNLS} is not properly normalized, we seek a row vector-valued solution:

\textbf{Riemann-Hilbert Problem 1}. Find a row vector-valued function z$\in C \rightarrow M(z;x,t)$ which satisfies:

$\bullet$ Analyticity: $M(z;x,t)$ is meromorphic in $\mathbb{C}\setminus R$ and has single poles;

$\bullet$ Jump condition: M has continuous boundary values $M_\pm$ on $R$ and
\begin{equation}
M^+(z;x,t)=M^-(z;x,t)V(z),\hspace{0.5cm}z \in R,\label{jump0}
\end{equation}
where
\begin{equation}
V(z)=\left(\begin{array}{cc}
1+z|r(z)|^2 & -e^{-2i\theta t}r(z)\\
-e^{2it\theta t}z\overline{r(z)} & 1
\end{array}\right);
\end{equation}

$\bullet$ Asymptotic behaviors:There exists p independent of z that
\begin{align}
&M(x,z) = \left(\begin{array}{cc}
1 & 0\\
p & 1
\end{array}\right)+O(z^{-1}),\hspace{0.5cm}z \rightarrow \infty;\label{asymbehv1}
\end{align}

$\bullet$ Residue conditions: M has simple poles at each point in $ Z\bigcup \bar{Z}$ with:
\begin{align}
&\res_{z=z_n}M(z)=\lim_{z\to z_n}M(z)\left(\begin{array}{cc}
0 & 0\\
c_ne^{2i\theta_n t} & 0
\end{array}\right),\label{RES5}\\
&\res_{z=\bar{z}_n}M(z)=\lim_{z\to \bar{z}_n}M(z)\left(\begin{array}{cc}
0 & -z_n^{-1}\bar{c}_ne^{-2i\bar{\theta}_n t}\\
0 & 0
\end{array}\right).\label{RES6}
\end{align}

\section{Conjugation}
\quad The long-time asymptotic analysis of RHP 1 is determined by the growth and decay of the exponential function $e^{2it\theta}$ appearing in both the jump relation and the residue conditions. In this section, we describe a new transform: $M(z)\to M^{(1)}(z)$, from which we make that the \textbf{RHP} is well behaved as $|t|\to \infty$ along any characteristic line. Let $z_0=-\frac{4x}{t}+\rho$ be the (unique) critical point of the phase fuction $\theta(z)$. Then we have
\begin{equation}
\theta = (z-\rho)(2z-4z_0+2\rho),\hspace{0.5cm}Re (2it\theta)=-8t\text{Imz}(\text{Re}z-z_0).\label{theta}
\end{equation}
Recall the partition $\Delta^\pm_{z_0,\eta}$ of $\left\lbrace 1,...,N\right\rbrace $ for $z_0 \in R$, $\eta$ = $sgn(t)$ defined as follow:
\begin{align*}
\Delta^+_{z_0,1}=\Delta^-_{z_0,-1}=\left\lbrace k \in \left\lbrace 1,...,N\right\rbrace  |Re(z_k)>z_0\right\rbrace, \\
\Delta^-_{z_0,1}=\Delta^+_{z_0,-1}=\left\lbrace k \in \left\lbrace 1,...,N\right\rbrace  |Re(z_k)<z_0\right\rbrace. \\
\end{align*}
This partition splits the residue coefficients $c_n$ in two sets which is shown in Figure. \ref{fig1}.
\begin{figure}[H]
	\centering
		\centering
	\subfigure[]{
		\begin{tikzpicture}[node distance=2cm]
	\filldraw[yellow,line width=2] (2.4,0.01) rectangle (0.01,2.4);
	\filldraw[yellow,line width=2] (-2.4,-0.01) rectangle (-0.01,-2.4);
	\draw[->](-2.5,0)--(2.5,0)node[right]{Re$z$};
	\draw[->](0,2.5)--(0,2.5)node[above]{$t<0$};
	\draw[->](0,0)--(-0.8,0);
	\draw[->](-0.8,0)--(-1.8,0);
	\draw[->](0,0)--(0.8,0);
	\draw[->](0.8,0)--(1.8,0);
	\coordinate (A) at (0.5,1.2);
	\coordinate (B) at (0.6,-1.2);
	\coordinate (G) at (-0.6,1.2);
	\coordinate (H) at (-0.5,-1.2);
	\coordinate (I) at (0,0);
	\fill (A) circle (0pt) node[right] {$|e^{2it\theta}|\to \infty$};
	\fill (B) circle (0pt) node[right] {$|e^{2it\theta}|\to 0$};
	\fill (G) circle (0pt) node[left] {$|e^{2it\theta}|\to 0$};
	\fill (H) circle (0pt) node[left] {$|e^{2it\theta}|\to \infty$};
	\fill (I) circle (1pt) node[below] {$z_0$};
	\label{zplane1}
	\end{tikzpicture}
}
	\subfigure[]{
		\begin{tikzpicture}[node distance=2cm]
		\filldraw[yellow,line width=2] (2.4,0.01) rectangle (0.01,-2.4);
		\filldraw[yellow,line width=2] (-2.4,-0.01) rectangle (-0.01,2.4);
		\draw[->](-2.5,0)--(2.5,0)node[right]{Re$z$};
		\draw[->](0,2.5)--(0,2.5)node[above]{$t>0$};
		\draw[->](0,0)--(-0.8,0);
		\draw[->](-0.8,0)--(-1.8,0);
		\draw[->](0,0)--(0.8,0);
		\draw[->](0.8,0)--(1.8,0);
		\coordinate (A) at (0.6,1.2);
		\coordinate (B) at (0.5,-1.2);
		\coordinate (G) at (-0.5,1.2);
		\coordinate (H) at (-0.6,-1.2);
		\coordinate (I) at (0,0);
		\fill (A) circle (0pt) node[right] {$|e^{2it\theta}|\to 0$};
		\fill (B) circle (0pt) node[right] {$|e^{2it\theta}|\to \infty$};
		\fill (G) circle (0pt) node[left] {$|e^{2it\theta}|\to \infty$};
		\fill (H) circle (0pt) node[left] {$|e^{2it\theta}|\to 0$};
		\fill (I) circle (1pt) node[below] {$z_0$};
		\label{zplane2}
		\end{tikzpicture}
	}
	\caption{In the yellow region, $|e^{2it\theta}|\to \infty$ when $t\to\pm\infty$ respectively. And in white region, $|e^{2it\theta}|\to 0$ when $t\to\pm\infty$ respectively}
	\label{fig1}
\end{figure}
To introduce a transformation which renormalizes \textbf{RHP 1} with well conditioned fo $|t|\to \infty$ and fixed $z_0$, we denote following functions:
\begin{align}
&I_{z_0}^\eta=\left\lbrace  s\in R | -\infty <\eta s \leq \eta z_0\right\rbrace, \hspace{0.5cm}\delta (z)=\delta (z,z_0,\eta)=\exp\left(i\int _{I_{z_0}^\eta}\dfrac{k(s)ds}{s-z}\right)\\
&T(z)=T(z,z_0,\eta)=\prod_{k\in \Delta^-_{z_0,\eta}}\dfrac{z-\bar{z}_k}{z-z_k}\delta (z) \label{T}, \\
&\beta(z,z_0,\eta)=-\eta k(z_0)\log(\eta(z-z_0+1))+\int _{I_{z_0}^\eta}\dfrac{k(s)-X(s)k(z_0)}{s-z}ds,\\
&T_0=T(z_0,\eta)=\prod_{k\in \Delta^-_{z_0,\eta}}\dfrac{z_0-\bar{z}_k}{z_0-z_k}e^{i\beta (z_0,z_0,\eta)},
\end{align}
where $k(s)$ is defined in (\ref{ks}), and $X(s)$ is the characteristic function of the interval $\eta z_0-1<\eta s<\eta z_0$. In all of the above formulas, we choose the principal branch of power and logarithm functions.
From (\ref{az}) we find that the function $T(z,z_0)$ is a partial transmission coefficient which approaches $a(z)^{-1}$ as $z_0\to \eta$.
\begin{proposition}\label{proT}
	The function defined by (\ref{T}) has following properties:\\
	(a) $T$ is meromorphic in $C\setminus I_{z_0}^\eta$, for each $n\in\Delta^-_{z_0,\eta}$, $T(z)$ has a simple pole at $z_n$ and a simple zero at $\bar{z}_n$;\\
	(b) For $z\in C\setminus I_{z_0}^\eta$, $\overline{T(\bar{z})}T(z)=1$;\\
	(c) For $z\in I_{z_0}^\eta$, as z approaches the real axis from above and below, $T$ has boundary values $T_\pm$, which satisfy:
	\begin{equation}
	T_+(z)=(1+z|r(z)|^2)T_-(z),\hspace{0.5cm}z\in I_{z_0}^\eta;
	\end{equation}
	(d) As $|z|\to \infty$ with $|arg(z)|\leq c<\pi$,
	\begin{equation}
	T(z)=1+\frac{i}{z}\left[ 2\sum_{k\in \Delta^-_{z_0,\eta}} Im(z_k)-\int _{I_{z_0}^\eta}k(s)ds\right] \label{expT};
	\end{equation}
	(e) As $z\to z_0$, along $z=z_0+e^{i\psi}l$, $l>0$, $|\psi|\leq c<\pi$,
	\begin{equation}
	|T(z,z_0,\eta)-T_0(z_0,\eta)(\eta(z-z_0))^{i\eta k(z_0)}|\leq C|z-z_0|^{1/2}.
	\end{equation}
\end{proposition}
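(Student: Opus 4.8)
\noindent\emph{Proof proposal.}\quad The plan is to read (a)--(d) off the two factors of $T$ --- the finite Blaschke--type product $B(z)=\prod_{k\in\Delta^-_{z_0,\eta}}\frac{z-\bar z_k}{z-z_k}$ and the scalar Cauchy integral $\delta(z)=\exp\big(i\int_{I_{z_0}^\eta}\frac{k(s)\,ds}{s-z}\big)$ --- and to concentrate the work on (e). For (a): every $z_k$ is non-real, so neither $z_k$ nor $\bar z_k$ lies on $I_{z_0}^\eta\subset R$; hence $\delta$ is holomorphic and zero-free off $I_{z_0}^\eta$, while $B$ is rational with exactly a simple pole at each $z_k$ and a simple zero at each $\bar z_k$ for $k\in\Delta^-_{z_0,\eta}$, which gives the claimed meromorphy of $T=B\delta$. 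For (b): $k(s)=-\frac1{2\pi}\log(1+s|r(s)|^2)$ is real on $I_{z_0}^\eta$ (there $1+s|r(s)|^2>c^2>0$), so $\overline{\delta(\bar z)}=\delta(z)^{-1}$ and $\overline{B(\bar z)}=B(z)^{-1}$, whence $\overline{T(\bar z)}\,T(z)=1$. For (c): $B$ extends continuously across $I_{z_0}^\eta$, and Plemelj's formula for $\delta$ yields the multiplicative jump $\delta_+/\delta_-=1+z|r(z)|^2$ (using $e^{-2\pi k(z)}=1+z|r(z)|^2$), hence $T_+=(1+z|r(z)|^2)T_-$. For (d): restricting $z$ to $|\arg z|\le c<\pi$ keeps $z$ at distance $\gtrsim|z|$ from $I_{z_0}^\eta$, so $\frac{z-\bar z_k}{z-z_k}=1+\frac{2i\,\mathrm{Im}\,z_k}{z}+O(z^{-2})$ and $\int_{I_{z_0}^\eta}\frac{k(s)}{s-z}\,ds=-\frac1z\int_{I_{z_0}^\eta}k(s)\,ds+O(z^{-2})$, and multiplying the two expansions yields (\ref{expT}).

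The heart of the matter is (e). First I would record the local factorization $\delta(z)=(\eta(z-z_0))^{i\eta k(z_0)}\exp\big(i\beta(z,z_0,\eta)\big)$, valid near $z_0$ with principal branches and the cut along $I_{z_0}^\eta$: substituting $k(s)=\big(k(s)-X(s)k(z_0)\big)+X(s)k(z_0)$ in the integrand, the contribution of the second piece is $k(z_0)$ times the elementary integral of $(s-z)^{-1}$ over the unit interval adjacent to $z_0$, which produces a $\log(z-z_0)$ singularity plus a term analytic at $z_0$; collecting all terms one recognizes exactly the two summands defining $\beta$, while $X$ and the shift by $1$ are precisely the bookkeeping that renders $\beta$ well defined. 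Consequently $T(z)-T_0\,(\eta(z-z_0))^{i\eta k(z_0)}=(\eta(z-z_0))^{i\eta k(z_0)}\big[B(z)e^{i\beta(z,z_0,\eta)}-B(z_0)e^{i\beta(z_0,z_0,\eta)}\big]$, and since $|(\eta(z-z_0))^{i\eta k(z_0)}|=\exp\big(-\eta k(z_0)\arg(\eta(z-z_0))\big)$ is bounded on the cone $|\psi|\le c<\pi$, it suffices to prove $|B(z)e^{i\beta(z,z_0,\eta)}-B(z_0)e^{i\beta(z_0,z_0,\eta)}|\le C|z-z_0|^{1/2}$.

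Since $B$ is holomorphic at the real point $z_0$ one has $|B(z)-B(z_0)|\lesssim|z-z_0|$, so everything reduces to boundedness of $\mathrm{Im}\,\beta$ near $z_0$ together with $|\beta(z,z_0,\eta)-\beta(z_0,z_0,\eta)|\lesssim|z-z_0|^{1/2}$. The logarithmic summand of $\beta$ is analytic at $z_0$ and contributes $O(|z-z_0|)$; the difference of the remaining Cauchy integrals equals $(z-z_0)\int_{I_{z_0}^\eta}\frac{k(s)-X(s)k(z_0)}{(s-z)(s-z_0)}\,ds$. The geometric input, forced by the cone condition, is that $|s-z|\gtrsim|s-z_0|+|z-z_0|$ for every $s\in I_{z_0}^\eta$ when $|\psi|\le c<\pi$; combining this with the regularity of $k$ inherited from $r\in H^{2,2}$ (in particular $|k(s)-k(z_0)|\lesssim|s-z_0|^{1/2}$ on the unit interval where $X\equiv1$) and the decay of $k$ away from $z_0$, the substitution $u=|s-z_0|$ bounds the near part by $|z-z_0|\int_0^1 u^{-1/2}(u+|z-z_0|)^{-1}\,du\lesssim|z-z_0|^{1/2}$ and the far part by $O(|z-z_0|)$; the same $L^1$ majorant shows, by dominated convergence, that $\beta$ is continuous at $z_0$ (so $\beta(z_0,z_0,\eta)$ is the honest limit) and that $\mathrm{Im}\,\beta$ is bounded. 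The main obstacle is exactly this last estimate: squeezing the sharp exponent $\tfrac12$ out of a singular Cauchy integral whose density is only as smooth as $k$, uniformly in the opening angle $\psi$, is what forces both the cone restriction $|\psi|\le c<\pi$ and the cutoff $X$ into the statement.
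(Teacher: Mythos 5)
Your proposal is correct and follows essentially the same route as the paper: parts (a)--(d) read off the Blaschke product and the Cauchy exponent, and for (e) you use the same factorization $\delta(z)=(\eta(z-z_0))^{i\eta k(z_0)}e^{i\beta}$, bound the power factor on the cone, get an $O(|z-z_0|)$ bound for the product difference, and an $O(|z-z_0|^{1/2})$ bound for $\beta(z)-\beta(z_0)$ from the $L^2$ bound on $k'$ after splitting the Cauchy integral into the near interval $[z_0-\eta,z_0]$ (where $X\equiv1$) and the far part. The only difference is cosmetic --- you convert $k'\in L^2$ into a pointwise H\"older-$1/2$ bound and integrate explicitly, whereas the paper applies Cauchy--Schwarz directly to the singular integral --- and you additionally make explicit the boundedness of $\mathrm{Im}\,\beta$ that the paper leaves implicit.
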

\begin{proof}
	Prats (a) and (b) are elementary consequences of the definition (\ref{T}). And for part (c), as it definition we only need to consider $\int _{I_{z_0}^\eta}\dfrac{k(s)ds}{s-z}$, and via the residue theorem we obtain the consequence. For part (d), we  expand  the product term and the factor $(s-z)^{-1}$ for large $z$
	\begin{align}
	&\prod_{k\in \Delta^-_{z_0,\eta}}\dfrac{z-\bar{z}_k}{z-z_k}=\prod_{k\in \Delta^-_{z_0,\eta}}\left[ 1+\frac{2i}{z}Im(z_k)+O(z^{-2})\right] ,\label{expprod}\\
	&\delta (z)=exp\left(-\frac{i}{z}\int _{I_{z_0}^\eta}k(s)ds+O(z^{-2})\right)=1-\frac{i}{z}\int _{I_{z_0}^\eta}k(s)ds+O(z^{-2}).\label{expdelta}
	\end{align}
	Noting the fact that $\parallel k\parallel_{L^1(R)}\leq (2\pi)^{-1}\parallel r\parallel_{H^{2,2}(R)}$, the integration in (\ref{expdelta}) also make sense. Combine (\ref{expprod}) and (\ref{expdelta}) we obtain (\ref{expT}) immediately. For part (e), we rewrite T:
	\begin{align*}
	T(z,z_0)=\prod_{k\in \Delta^-_{z_0,\eta}}\dfrac{z-\bar{z}_k}{z-z_k}(\eta(z-z_0))^{i\eta k(z_0)}exp(i\beta(z,z_0,\eta).
	\end{align*}
	Thus
	\begin{align}
		&|T(z,z_0,\eta)-T_0(z_0,\eta)(\eta(z-z_0))^{i\eta k(z_0)}|=\notag\\
		&|(\eta(z-z_0))^{i\eta k(z_0)}||\prod_{k\in \Delta^-_{z_0,\eta}}\dfrac{z-\bar{z}_k}{z-z_k}e^{i\beta(z,z_0,\eta)}-\prod_{k\in \Delta^-_{z_0,\eta}}\dfrac{z_0-\bar{z}_k}{z_0-z_k}e^{i\beta (z_0,z_0,\eta)}|.
	\end{align}
	We consider the case that $\eta=1$, the other case $\eta=-1$ is similarly. Note the fact that
	\begin{align}
	|(\eta(z-z_0))^{i k(z_0)}|&=|l^{i k(z_0)}exp\left( \frac{\psi}{2\pi}log(1+z_0|r(z_0|^2)\right)|\notag\\
	&\leq exp\left( \frac{1}{2}log(1+z_0|r(z_0|^2)\right)\notag\\
	&=\sqrt{1+z_0|r(z_0|^2},\label{z-z0}
	\end{align}
	and
	\begin{align}
	|\prod_{k\in \Delta^-_{z_0,\eta}}\dfrac{z-\bar{z}_k}{z-z_k}-\prod_{k\in \Delta^-_{z_0,\eta}}\dfrac{z_0-\bar{z}_k}{z_0-z_k}|=O(l),\hspace{0.5cm}\text{as }l\to 0,
	\end{align}
	with constants depending on $\psi$. And for function $\beta(z,z_0,1)$ we have
	\begin{align}
	&|\beta(z,z_0,1)-\beta(z_0,z_0,1)|=\notag\\
	&|-k(z_0)log(le^{i\psi}+1)+\int _{-\infty}^{z_0}\left(\dfrac{1}{s-z}-\dfrac{z}{s-z_0} \right) \left( k(s)-X(s)k(z_0)\right) ds|\notag\\
	&|\int _{-\infty}^{z_0-1}\left(\dfrac{1}{s-z}-\dfrac{z}{s-z_0} \right)k(s)ds+\int _{z_0-1}^{z_0}\left(\dfrac{1}{s-z}-\dfrac{z}{s-z_0} \right) \left( k(s)-X(s)k(z_0)\right) ds+O(t)|,
	\end{align}
	where
	\begin{align}
	\int _{-\infty}^{z_0-1}\left(\dfrac{1}{s-z}-\dfrac{z}{s-z_0} \right)k(s)ds=\int _{-\infty}^{z_0-1}\dfrac{le^{i\psi}}{(s-z)(s-z_0)} k(s)ds=O(l),
	\end{align}
	and
	\begin{align}
	&|\int _{z_0-1}^{z_0}\left(\dfrac{1}{s-z}-\dfrac{z}{s-z_0} \right) \left( k(s)-X(s)k(z_0)\right) ds|=\notag\\
	&|\int _{z_0-1}^{z_0}\dfrac{le^{i\psi}}{s-z_0+le^{i\psi}}k'(z')ds|\leq l\left( \int _{z_0-1}^{z_0}|\dfrac{1}{s-z_0+le^{i\psi}}|^2ds\right) ^{1/2}\parallel k'\parallel_{L^2}\notag\\
	&\leq Cl^{\frac{1}{2}},
	\end{align}	
	with C depending on $\parallel k\parallel_{H^2}$. Combining above equations we finally get the result.	
\end{proof}
We now use $T$ to define a new unknown function $M^{(1)}$:
\begin{equation}
M^{(1)}(z)=M(z)T(z)^{-\sigma_3}.\label{transm1}
\end{equation}
Note that if $M(z)$ is solution of \textbf{RHP 1}, then $M^{(1)}(z)$ satisfies the following \textbf{RHP}.\\

\textbf{Riemann-Hilbert Problem 2}. Find a matrix-valued function z$\in C \rightarrow M^{(1)}(z;x,t)$ which satisfies:

$\bullet$ Analyticity: $M^{(1)}(z;x,t)$ is meromorphic in $\mathbb{C}\setminus R$ and has single poles;

$\bullet$ Jump condition: $M^{(1)}$ has continuous boundary values $M^{(1)}_\pm$ on $R$ and
\begin{equation}
M^{(1)}_+(z;x,t)=M^{(1)}_-(z;x,t)V^{(1)}(z),\hspace{0.5cm}z \in R,\label{jump3}
\end{equation}
where
\begin{align}
&\text{as } z\in D^+, V^{(1)}(z)=
\left(\begin{array}{cc}
1 & -r(z)T(z)^2e^{-2it\theta}\\
0 & 1
\end{array}\right)\left(\begin{array}{cc}
1 & 0\\
-z\overline{r(z)}T(z)^{-2}e^{2it\theta} & 1
\end{array}\right), \\
&\text{as }z\in D^-, V^{(1)}(z)=\left(\begin{array}{cc}
1 & 0\\
-\dfrac{z\overline{r(z)}T_-(z)^{-2}}{1+z|r(z)|^2}e^{2it\theta} & 1
\end{array}\right)\left(\begin{array}{cc}
1 & -\dfrac{r(z)T_+(z)^{2}}{1+z|r(z)|^2}e^{-2it\theta}\\
0 & 1
\end{array}\right);
\end{align}

$\bullet$ Asymptotic behaviors:
\begin{align}
&M^{(1)}(z;x,t)= \left(\begin{array}{cc}
1 & 0\\
p & 1
\end{array}\right)+O(z^{-1}),\hspace{0.5cm}z \rightarrow \infty;\label{asymbehv4}
\end{align}

$\bullet$ Residue conditions: M has simple poles at each point in $ Z\bigcup \bar{Z}$ with:

\hspace{1cm} 1.   When $n\in \Delta^+_{z_0,\eta}$,
\begin{align}
&\res_{z=z_n}M^{(1)}(z)=\lim_{z\to z_n}M^{(1)}(z)\left(\begin{array}{cc}
0 & 0\\
c_nT(z_n)^{-2}e^{2i\theta_n t} & 0
\end{array}\right),\label{RES7}\\
&\res_{z=\bar{z}_n}M^{(1)}(z)=\lim_{z\to \bar{z}_n}M^{(1)}(z)\left(\begin{array}{cc}
0 & -z_n^{-1}T(\bar{z}_n)^2\bar{c}_ne^{-2i\bar{\theta}_n t}\\
0 & 0
\end{array}\right).\label{RES8}
\end{align}

\hspace{1cm} 2.   When $n\in \Delta^-_{z_0,\eta}$,
\begin{align}
&\res_{z=z_n}M^{(1)}(z)=\lim_{z\to z_n}M^{(1)}(z)\left(\begin{array}{cc}
0 & (c_n(1/T)'(z_n)^2e^{2i\theta_n t})^{-1}\\
0 & 0
\end{array}\right),\label{RES9}\\
&\res_{z=\bar{z}_n}M^{(1)}(z)=\lim_{z\to \bar{z}_n}M^{(1)}(z)\left(\begin{array}{cc}
0 & 0\\
(-z_nT'(\bar{z}_k)^2\bar{c}_ne^{-2i\bar{\theta}_n t})^{-1} & 0
\end{array}\right).\label{RES10}
\end{align}
\begin{proof}
The analyticity, jump condition and asymptotic behaviors of $M^{(1)}(z)$ is directly from its definition, the proposition \ref{proT} and the properties of $M$. As for residues, because $T(z)$ is analytic at each $z_n$ and $\bar{z}_n$ for $n\in \Delta^+_{z_0,\eta}$, from (\ref{RES4}), (\ref{RES5}) and (\ref{transm1}) we obtain residue conditions at these point immediately. For $n\in \Delta^-_{z_0,\eta}$, we denote $M(z)=\left(M_1(z), M_2(z) \right) $, then $M^{(1)}(z)=\left(M^{(1)}_1(z), M^{(1)}_2(z) \right) $ =  $\left(M_1(z)T^{-1}(z), M_2(z)T(z) \right) $. $T(z)$ has a simple zero at $\bar{z}_n$ and a pole at $z_n$, so $z_n$ is no longer the pole of $M^{(1)}_1(z)$ with $\bar{z}_n$ becoming the pole of it. And $M^{(1)}_2(z)$ has opposite situation. It has pole at $z_n$ and a removable singularity at $\bar{z}_n$. We consider the residue condition of $M^{(1)}_2(z)$  at pole $z_n$,
\begin{align}
M^{(1)}_1(z_n)&=\lim_{z\to z_n}M_1(z)T^{-1}(z)=\res_{z=z_n}M_1(z)(1/T)'(z_n)\nonumber\\
&=c_n(1/T)'(z_n)e^{2i\theta_n t}M_2(z_n),\nonumber\\
\res_{z=z_n}M^{(1)}_2(z)&=\res_{z=z_n}M_2(z)T(z)=M_2(z_k)\left[ (1/T)'(z_k)\right] ^{-1}\nonumber\\
&=(c_n(1/T)'(z_n)^2e^{2i\theta_n t})^{-1}M^{(1)}_1(z_n).
\end{align}
Then we have (\ref{RES9}), and the (\ref{RES10}) is similarly.
\end{proof}

\section{Mixed $\bar{\partial}$-Riemann-Hilbert Problem }

\quad Next we introduce a transformations of the jump matrix which deform the contours to another contours defined as follow:
\begin{align}
&\Sigma_k=z_0+e^{(2k-1)i\pi/4}R_+,\hspace{0.5cm}k=1,2,3,4,\\
&\Sigma^{(2)}=\Sigma_1\cup\Sigma_2\cup\Sigma_3\cup\Sigma_4.
\end{align}
Along it the jumps are decaying. But the price we pay for this non-analytic transformation is that it appear new unknown nonzero $\bar{\partial}$ derivatives inside the regions in which the extensions are introduced and satisfies a mixed $\bar{\partial}$-Riemann-Hilbert problem.
$R$ and $\Sigma^{(2)}$ separate $C$ to six open sectors. We denote these sectors by $\Omega_k$, $k=1,...,6$,  starting with sector $\Omega_1$ between $I^\eta_{z_0}$ and $\Sigma_1$ and numbered consecutively continuing counterclockwise for $\eta$ = 1 ($\eta$ = -1 is similarly) as shown in Figure \ref{figR2}.
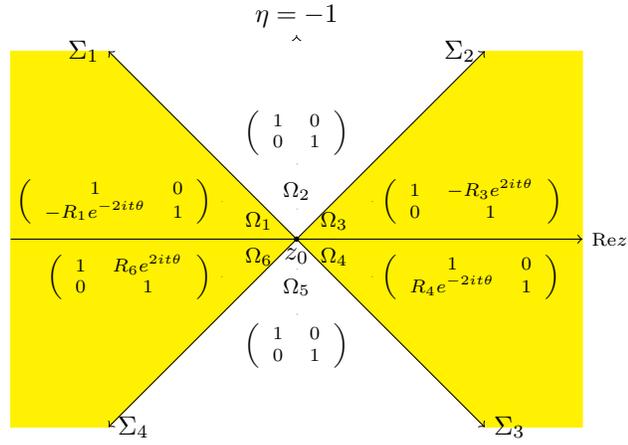
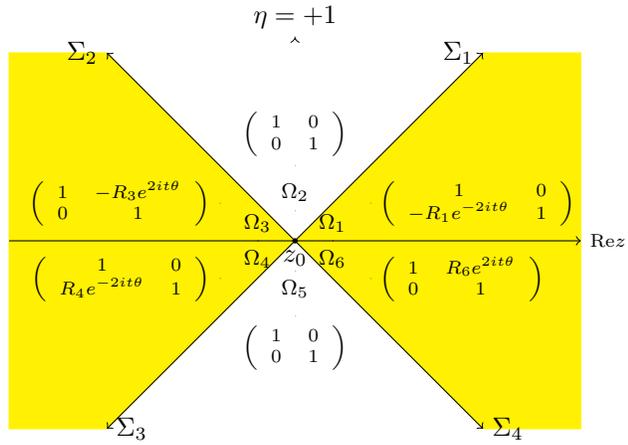
\begin{figure}
	\centering
	\subfigure[]{
		\begin{tikzpicture}[node distance=2cm]
		\draw[yellow, fill=yellow] (0,0)--(-2.5,2.5)--(-3.8,2.5)--(-3.8,-2.5)--(-2.5,-2.5)--(0,0);
		\draw[yellow, fill=yellow] (0,0)--(2.5,2.5)--(3.8,2.5)--(3.8,-2.5)--(2.5,-2.5)--(0,0);
		\draw[->](0,2.7)--(0,2.7)node[above]{$\eta=-1$};
		\draw[->](0,0)--(2.5,2.5)node[left]{$\Sigma_2$};
		\draw[->](0,0)--(-2.5,2.5)node[left]{$\Sigma_1$};
		\draw[->](0,0)--(-2.5,-2.5)node[right]{$\Sigma_4$};
		\draw[->](0,0)--(2.5,-2.5)node[right]{$\Sigma_3$};
		\draw[->](-3.8,0)--(3.8,0)node[right]{\scriptsize Re$z$};
		\coordinate (A) at (1,0.5);
		\coordinate (B) at (1,-0.5);
		\coordinate (G) at (-1,0.5);
		\coordinate (H) at (-1,-0.5);
		\coordinate (I) at (0,0);
		\coordinate (C) at (-0.5,0);
		\fill (C) circle (0pt) node[above] {\footnotesize$\Omega_1$};
		\coordinate (E) at (0,0.4);
		\fill (E) circle (0pt) node[above] {\footnotesize$\Omega_2$};
		\coordinate (D) at (0.5,0);
		\fill (D) circle (0pt) node[above] {\footnotesize$\Omega_3$};
		\coordinate (F) at (0.5,0);
		\fill (F) circle (0pt) node[below] {\footnotesize$\Omega_4$};
		\coordinate (J) at (0,-0.4);
		\fill (J) circle (0pt) node[below] {\footnotesize$\Omega_5$};
		\coordinate (k) at (-0.5,0);
		\fill (k) circle (0pt) node[below] {\footnotesize$\Omega_6$};
		\fill (A) circle (0pt) node[right] {\scriptsize$\left(\begin{array}{cc}
			1 & -R_3e^{2it\theta}\\
			0 & 1
			\end{array}\right)$};
		\fill (B) circle (0pt) node[right] {\scriptsize$\left(\begin{array}{cc}
			1 & 0\\
			R_4e^{-2it\theta} & 1
			\end{array}\right)$};
		\fill (G) circle (0pt) node[left] {\scriptsize$\left(\begin{array}{cc}
			1&0\\
			-R_1e^{-2it\theta}&1
			\end{array}\right)$};
		\fill (H) circle (0pt) node[left] {\scriptsize$\left(\begin{array}{cc}
			1 & R_6e^{2it\theta}\\
			0 & 1
			\end{array}\right)$};
		\fill (I) circle (1pt) node[below] {$z_0$};
		\coordinate (L) at (0,1);
		\coordinate (M) at (0,-1);
		\fill (L) circle (0pt) node[above] {\scriptsize$\left(\begin{array}{cc}
			1 & 0\\
			0 & 1
			\end{array}\right)$};
		\fill (M) circle (0pt) node[below] {\scriptsize$\left(\begin{array}{cc}
			1 & 0\\
			0 & 1
			\end{array}\right)$};
		\end{tikzpicture}
	}
	\subfigure[]{
		\begin{tikzpicture}[node distance=2cm]
		\draw[yellow, fill=yellow] (0,0)--(-2.5,2.5)--(-3.8,2.5)--(-3.8,-2.5)--(-2.5,-2.5)--(0,0);
		\draw[yellow, fill=yellow] (0,0)--(2.5,2.5)--(3.8,2.5)--(3.8,-2.5)--(2.5,-2.5)--(0,0);
		\draw[->](0,2.7)--(0,2.7)node[above]{$\eta=+1$};
		\draw[->](0,0)--(2.5,2.5)node[left]{$\Sigma_1$};
		\draw[->](0,0)--(-2.5,2.5)node[left]{$\Sigma_2$};
		\draw[->](0,0)--(-2.5,-2.5)node[right]{$\Sigma_3$};
		\draw[->](0,0)--(2.5,-2.5)node[right]{$\Sigma_4$};
		\draw[->](-3.8,0)--(3.8,0)node[right]{\scriptsize Re$z$};
		\coordinate (A) at (1,0.5);
		\coordinate (B) at (1,-0.5);
		\coordinate (G) at (-1,0.5);
		\coordinate (H) at (-1,-0.5);
		\coordinate (I) at (0,0);
		\coordinate (C) at (-0.5,0);
		\fill (C) circle (0pt) node[above] {\footnotesize$\Omega_3$};
		\coordinate (E) at (0,0.4);
		\fill (E) circle (0pt) node[above] {\footnotesize$\Omega_2$};
		\coordinate (D) at (0.5,0);
		\fill (D) circle (0pt) node[above] {\footnotesize$\Omega_1$};
		\coordinate (F) at (0.5,0);
		\fill (F) circle (0pt) node[below] {\footnotesize$\Omega_6$};
		\coordinate (J) at (0,-0.4);
		\fill (J) circle (0pt) node[below] {\footnotesize$\Omega_5$};
		\coordinate (k) at (-0.5,0);
		\fill (k) circle (0pt) node[below] {\footnotesize$\Omega_4$};
		\fill (A) circle (0pt) node[right] {\scriptsize$\left(\begin{array}{cc}
			1 & 0\\
			-R_1e^{-2it\theta} & 1
			\end{array}\right)$};
		\fill (B) circle (0pt) node[right] {\scriptsize$\left(\begin{array}{cc}
			1 & R_6e^{2it\theta}\\
			0 & 1
			\end{array}\right)$};
		\fill (G) circle (0pt) node[left] {\scriptsize$\left(\begin{array}{cc}
			1& -R_3e^{2it\theta}\\
			0&1
			\end{array}\right)$};
		\fill (H) circle (0pt) node[left] {\scriptsize$\left(\begin{array}{cc}
			1 & 0\\
			R_4e^{-2it\theta} & 1
			\end{array}\right)$};
		\fill (I) circle (1pt) node[below] {$z_0$};
		\coordinate (L) at (0,1);
		\coordinate (M) at (0,-1);
		\fill (L) circle (0pt) node[above] {\scriptsize$\left(\begin{array}{cc}
			1 & 0\\
			0 & 1
			\end{array}\right)$};
		\fill (M) circle (0pt) node[below] {\scriptsize$\left(\begin{array}{cc}
			1 & 0\\
			0 & 1
			\end{array}\right)$};
		\end{tikzpicture}
	}
	\caption{In the yellow region, $|e^{2it\theta}|\to \infty$ when $t\to\pm\infty$ respectively. And in white region, $|e^{2it\theta}|\to 0$ when $t\to\pm\infty$ respectively}
	\label{figR2}
\end{figure}

Additionally, let
\begin{equation}
\mu=\frac{1}{2}\min_{\lambda\neq\gamma\in Z \cup \bar{Z}}|\lambda -\gamma|.
\end{equation}
As we assuming there is no pole on the real axis, we obtain $\mu<$dist($Z,R$). Then we define $X_Z \in C_0^\infty(C,[0,1])$ which only supported on the
neighborhood of $Z\cup\bar{Z}$,
\begin{equation}
X_Z(z)=\Bigg\{\begin{array}{ll}
1 &\text{dist}(z,Z\cup\bar{Z})<\mu/3\\
0 &\text{dist}(z,Z\cup\bar{Z})>2\mu/3.\\
\end{array}
\end{equation}
In order to deform the contour $R$ to the contour $\Sigma^{(2)}$, we introduce a new unknown function $M^{(2)}$.
\begin{equation}
M^{(2)}(z)=M^{(1)}(z)R^{(2)}(z).\label{transm2}
\end{equation}
And we need $R$ satisfy the following conditions: First, $M^{(2)}$ has no jump on the real axis, so we choose the  the boundary values of $R^{(2)}(z)$  through the factorization of $V^{(1)}(z)$ in (\ref{jump3}) where the the new jumps on $\Sigma^{(2)}$ match a well known model RHP; Second, we need to control the norm of $R^{(2)}(z)$, so that the $\bar{\partial}$-contribution to the long-time asymptotics of $q(x,t)$ can be ignored; Third the residues are unaffected by the transformation.  So we choose $R^{(2)}(z)$ as shown in Figure \ref{figR2}, where the function $R_j$, $j=1,3,4,6$, is defined as follow.
\begin{proposition}\label{proR}
	 $R_j$: $\bar{\Omega}_j\to C$, $j=1,3,4,6$ have boundary values as follow:
\begin{align}
&R_1(z)=\Bigg\{\begin{array}{ll}
z\bar{r}(z)T(z)^{-2} & z\in I_{z_0}^\eta\\
z_0\bar{r}(z_0)T(z_0)^{-2}(\eta(z-z_0))^{-2i\eta k(z_0)}(1-X_Z(z))  &z\in \Sigma_1\\
\end{array},\\
&R_3(z)=\Bigg\{\begin{array}{ll}
\dfrac{r(z_0)T(z_0)^2}{1+z_0|r(z_0)|^2}(\eta(z-z_0))^{2i\eta k(z_0)}(1-X_Z(z))  &z\in \Sigma_2\\
\dfrac{r(z)T_+(z)^2}{1+z|r(z)|^2} & z\in R\setminus (I_{z_0}^\eta\cup\left\lbrace z_0 \right\rbrace )\\
\end{array},\\
&R_4(z)=\Bigg\{\begin{array}{ll}
-\dfrac{zr(z)T_-(z)^{-2}}{1+z|r(z)|^2} & z\in R\setminus (I_{z_0}^\eta\cup\left\lbrace z_0 \right\rbrace )\\
-\dfrac{z_0r(z_0)T(z_0)^{-2}}{1+z_0|r(z_0)|^2}(\eta(z-z_0))^{-2i\eta k(z_0)}(1-X_Z(z))  &z\in \Sigma_3\\
\end{array},\\
&R_6(z)=\Bigg\{\begin{array}{ll}
-r(z_0)T(z_0)^{2}(\eta(z-z_0))^{2i\eta k(z_0)}(1-X_Z(z))  &z\in \Sigma_4\\
-r(z)T(z)^{2} & z\in I_{z_0}^\eta\\
\end{array}.
\end{align}	
then we can find a fixed constant $c_1$=$c_1(q_0)$, have
\begin{align}
&|R_j(z)|\leq c_1\left( sin^2(arg(z-z_0))+\langle Re(z)\rangle^{-1/2}\right),\label{R}\\
&| \bar{\partial}R_j(z)|\leq c_1\left(|\bar{\partial}X_Z(z)|+|z^{m_k}p_k'(Rez)|+|z-z_0|^{-1/2}\right),\label{dbarRj}\\
&\bar{\partial}R_j(z)=0,\hspace{0.5cm}\text{if } z\in \Omega_2\cup\Omega_5\text{ or }\text{dist}(z,Z\cup\bar{Z})<\mu/3,
\end{align}
where $m_1=m_4=1$, $m_3=m_6=0$, and
\begin{align}
&p_1(z)=\bar{r}(z),\hspace{3cm}p_3(z)=\dfrac{r(z)}{1+z|r(z)|^2},\\
&p_4(z)=-\dfrac{zr(z)}{1+z|r(z)|^2},\hspace{1.5cm}p_6(z)=-r(z).
\end{align}
\end{proposition}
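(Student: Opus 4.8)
\emph{Proof idea.} The plan is to build each $R_j$ by hand, via an explicit interpolation between its two prescribed boundary values, and then to read the bounds \eqref{R}--\eqref{dbarRj} directly off that formula. Fix $j\in\{1,3,4,6\}$ and parametrise the sector $\Omega_j$ by polar coordinates centred at the stationary phase point, $z=z_0+\ell e^{i\psi}$, one edge of $\partial\Omega_j$ being a subray of $R$ and the other a subray of some $\Sigma_k$. On the $R$-edge the target value has the ``full'' form $z^{m_j}p_j(z)T_\pm(z)^{\mp2}$, while on the $\Sigma_k$-edge it has the ``frozen'' form $z_0^{m_j}p_j(z_0)T_0^{\mp2}(\eta(z-z_0))^{\pm2i\eta k(z_0)}$, both dressed by $1-X_Z$; these choices are exactly the ones making the resulting $M^{(2)}$ jump-free across $R$ and matching the parabolic-cylinder model along $\Sigma^{(2)}$. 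I extend $p_j$ off the real axis by $p_j(Re(z))$ (so that a $\bar{\partial}$ acting on it produces only $\tfrac12 p_j'(Re(z))$), and I interpolate the two edge values with a smooth function $\mathcal{K}(\psi)$ of the polar angle that equals $1$ on the $R$-edge and $0$ on the $\Sigma_k$-edge, chosen as a multiple of $\cos(2\psi)$ so that $1-\mathcal{K}(\psi)=2\sin^2(\arg(z-z_0))$. In the two remaining sectors $\Omega_2,\Omega_5$ I set $R^{(2)}\equiv I$ (equivalently $R_j\equiv0$ there), and the factor $1-X_Z$ forces $R_j$ together with all its derivatives to vanish in a neighbourhood of $Z\cup\bar{Z}$; this immediately yields the matching of boundary values and the support statement $\bar{\partial}R_j=0$ on $\Omega_2\cup\Omega_5$ and near the discrete spectrum.

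For the size bound \eqref{R} I use Proposition \ref{proT}(d)--(e), which keeps $|T|$ bounded above and below on $\bar{\Omega}_j$, together with the standard Sobolev control on $p_j$: since $r\in H^{2,2}(R)$ and $1+z|r|^2\geq c^2>0$ on $R$, the functions $z^{m_j}p_j$ are bounded on $R$ with decay in $\langle Re(z)\rangle$, and the power factor $(\eta(z-z_0))^{\pm2i\eta k(z_0)}$ has bounded modulus on the bounded angular sector $\Omega_j$. Hence $|R_j(z)|\lesssim|\mathcal{K}(\psi)|\,\langle Re(z)\rangle^{-1/2}+\bigl(1-\mathcal{K}(\psi)\bigr)\lesssim\langle Re(z)\rangle^{-1/2}+\sin^2(\arg(z-z_0))$, which is \eqref{R}.

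The core of the argument — and the step I expect to be the main obstacle — is the $\bar{\partial}$-estimate \eqref{dbarRj}. Writing $R_j=(1-X_Z)\bigl[B(z)+(A(z)-B(z))\mathcal{K}(\psi)\bigr]$ with $A$ the ``full'' and $B$ the ``frozen'' expression, and noting that $T(z)$ and $(\eta(z-z_0))^{\pm2i\eta k(z_0)}$ are holomorphic inside $\Omega_j$ because their branch cut lies along $I_{z_0}^\eta\subset R\setminus\bar{\Omega}_j$, the $\bar{\partial}$-derivative splits into exactly three pieces: a term carrying $\bar{\partial}X_Z$, supported on the compact annulus $\mu/3<\text{dist}(z,Z\cup\bar{Z})<2\mu/3$ where every factor is bounded; a term $\bar{\partial}A\cdot\mathcal{K}(\psi)=\tfrac12 z^{m_j}p_j'(Re(z))T^{\mp2}\mathcal{K}(\psi)$, bounded by $|z^{m_j}p_j'(Re(z))|$; and the term $(A-B)\,\bar{\partial}\mathcal{K}(\psi)$, where $|\bar{\partial}\mathcal{K}(\psi)|\lesssim|z-z_0|^{-1}$. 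The last piece is the delicate one, and it is precisely here that Proposition \ref{proT}(e) is indispensable: it gives $T(z)^{\mp2}=T_0^{\mp2}(\eta(z-z_0))^{\mp2i\eta k(z_0)}\bigl(1+O(|z-z_0|^{1/2})\bigr)$, which combines with the Lipschitz (in fact H\"older) continuity of $z^{m_j}p_j$ near the real point $z_0$, inherited from $r\in H^{2,2}(R)\hookrightarrow C^{1}(R)$, to yield $A(z)-B(z)=O(|z-z_0|)+O(|z-z_0|^{1/2})=O(|z-z_0|^{1/2})$ as $z\to z_0$; hence $(A-B)\bar{\partial}\mathcal{K}=O(|z-z_0|^{-1/2})$, and adding the three bounds gives \eqref{dbarRj}. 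The one genuinely new feature compared with the derivative-NLS case $\rho=0$ is the spectral weight $z$ in the jump, which forces the $z^{m_j}$ factors inside $p_j$; since near the phase point $z\to z_0$ is a bounded, nonvanishing multiplier and the $H^{2,2}$-weight supplies the decay at infinity, this introduces no essential additional difficulty.
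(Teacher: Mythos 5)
Your proposal is correct and follows essentially the same route as the paper: the same explicit interpolation $R_j=(1-X_Z)\bigl[B+(A-B)\,\eta\cos(2\psi)\bigr]$ between the real-axis and $\Sigma_k$ boundary values (the paper writes it through the auxiliary functions $f_1,f_3$), the same three-term splitting of $\bar{\partial}R_j$ into the $\bar{\partial}X_Z$ piece, the $\tfrac12 z^{m_j}p_j'(\mathrm{Re}\,z)$ piece, and the $(A-B)\,\bar{\partial}\cos(2\psi)=O(|z-z_0|^{-1})\cdot O(|z-z_0|^{1/2})$ piece controlled via Proposition \ref{proT}(e). The only (harmless) deviation is how the difference $z^{m_j}p_j(\mathrm{Re}\,z)-z_0^{m_j}p_j(z_0)$ is bounded: the paper uses Cauchy--Schwarz on $\int_{z_0}^{\mathrm{Re}\,z}(sp_j'+p_j)\,ds$ to get an $|z-z_0|^{1/2}$ bound with $L^2$-norm constants, whereas you invoke the embedding $H^{2}(R)\hookrightarrow C^{1}$ for a local Lipschitz bound; both yield \eqref{dbarRj} with a constant depending only on the scattering data.
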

\begin{proof}
	Define the functions
	\begin{align}
	&f_1(z)=z_0p_1(z_0)T(z)^2T(z_0)^{-2}(\eta(z-z_0))^{-2i\eta k(z_0)},\hspace{1cm}z\in \bar{\Omega}_1\\
	&f_3(z)=p_3(z_0)T(z)^{-2}T(z_0)^2(\eta(z-z_0))^{2i\eta k(z_0)},\hspace{1.45cm}z\in \bar{\Omega}_3
	\end{align}
	Let $z-z_0=le^{i\psi}$, then using the above function we can give the construction  for $R_1(z)$ and $R_3(z)$
	\begin{align}
	&R_1(z)=\left[ f_1(z)+\eta(Rezp_1(Rez)-f_1(z))cos(2\psi)\right]T(z)^{-2}(1-X_Z(z)) ,\\
	&R_3(z)=\left[ f_3(z)+\eta(p_3(Rez)-f_3(z))cos(2\psi)\right]T(z)^{2}(1-X_Z(z)).
	\end{align}
	The construction of $R_4(z)$ and $R_6(z)$ are similarly. We proof this proposition for example $R_1(z)$ , the case of others is same. And from the definition of $R_1(z)$ we can easily obtain that $\bar{\partial}R_j(z)=0$, $\text{if } z\in \Omega_2\cup\Omega_5\text{ or }\text{dist}(z,Z\cup\bar{Z})<\mu/3$. Then we estimate $|R_1(z)|$,
	\begin{align}
	|R_1(z)|\leq&|f_1(z)T(z)^{-2}(1-X_Z(z))(1-\eta cos(2\psi))|\nonumber\\
	&+|Rezp_1(Rez)T(z)^{-2}(1-X_Z(z))cos(2\psi)|.
	\end{align}
	Note that $(1-X_Z(z))$ is zero in $(z,Z\cup\bar{Z})<\mu/3$, so $|T(z)^{-2}(1-X_Z(z))|$ is bounded. And from $r\in H^2(R)$, which means $p_1\in H^2(R)$ we have $|p_1(u)|\lesssim \langle u\rangle^{-1/2}$. Together with (\ref{z-z0}) we have (\ref{R}). Since
	\begin{equation*}
	\bar{\partial}=\frac{1}{2}\left( \partial_x+i\partial_y\right) =\frac{e^{i\psi}}{2}\left( \partial_l+il^{-1}\partial_\psi\right),
	\end{equation*}
	we have
	\begin{align}
	\bar{\partial}R_1(z)=&-\left[ f_1(z)+\eta(Rezp_1(Rez)-f_1(z))cos(2\psi)\right]T(z)^{-2}\bar{\partial}X_Z(z)\nonumber\\
	&+\eta\frac{Rez}{2}p_1'(Rez)cos(2\psi)T(z)^{-2}(1-X_Z(z))\nonumber\\
	&-\dfrac{ie^{i\psi}\eta}{|z-z_0|}(Rezp_1(Rez)-f_1(z))sin(2\psi)T(z)^{-2}(1-X_Z(z)).
	\end{align}
	Because $\bar{\partial}X_Z(z)$ is supported in dist$(z,Z\cup\bar{Z})\in [\mu/3,2\mu/3]$, similarly we have the first two term are bounded. For the last term, we have
	\begin{align}
	|Rezp_1(Rez)-f_1(z)|\leq&|Rezp_1(Rez)-z_0p_1(z_0)|\nonumber\\
	&+|z_0p_1(z_0)||1-T(z)^2T(z_0)^{-2}(\eta(z-z_0))^{-2i\eta k(z_0)}|.
	\end{align}
	The first term using Cauchy-Schwarz we have
	\begin{align}
	|Rezp_1(Rez)-z_0p_1(z_0)|&\leq|\int_{z_0}^{Rez}sp'(s)+p(s)ds|\nonumber\\
	&\leq|z-z_0|^{1/2}\parallel sp'(s)+p(s)\parallel_{L^2(R)}\nonumber\\
	&\leq|z-z_0|^{1/2}2\parallel p(s)\parallel_{H^2(R)}.
	\end{align}
	And together with \textbf{Proposition \ref{proT}}, which also imply $|T(z_0)|$ and $|(\eta(z-z_0))^{-2i\eta k(z_0)}|$ are bounded in a neighborhood of $z_0$, we come to the consequence for the estimation of $|\bar{\partial}R_1|$.
\end{proof}
 Then we use $R^{(2)}$ to define a new unknown function
 \begin{equation}
 M^{(2)}=M^{(1)}R^{(2)},
 \end{equation}
which satisfies the following mixed $\bar{\partial}$-Riemann-Hilbert problem.

\textbf{mixed $\bar{\partial}$-Riemann-Hilbert problem 3}. Find a matrix-valued function z$\in C \rightarrow M^{(2)}(z;x,t)$ with following properties:

$\bullet$ Analyticity: $M^{(2)}(z;x,t)$ is continuous with sectionally continuous first partial derivatives in  $\mathbb{C}\setminus (\Sigma^{(2)}\cup Z\cup \bar{Z})$ and meromorphic in $\Omega_2\cup\Omega_5$;

$\bullet$ Jump condition: $M^{(2)}$ has continuous boundary values $M^{(2)}_\pm$ on $R$ and
\begin{equation}
M^{(2)}_+(z;x,t)=M^{(2)}_-(z;x,t)V^{(2)}(z),\hspace{0.5cm}z \in R,\label{jump4}
\end{equation}
where
\begin{equation}
V^{(2)}(z)=I+(1-X_Z(z))V_0^{(2)},
\end{equation}
\begin{equation}
V_0^{(2)}=\left\{\begin{array}{llll}
\left(\begin{array}{cc}
0 & 0\\
-z\overline{r(z)}T(z_0)^{-2}(\eta(z-z_0))^{-2i\eta k(z_0)}e^{2it\theta} & 0
\end{array}\right), & z\in \Sigma_1;\\
\\
\left(\begin{array}{cc}
0 & -\dfrac{r(z)T(z_0)^{2}}{1+z|r(z)|^2}(\eta(z-z_0))^{2i\eta k(z_0)}e^{-2it\theta}\\
0 & 0
\end{array}\right),  &z\in \Sigma_2;\\
\\
\left(\begin{array}{cc}
0 & 0\\
-\dfrac{z\overline{r(z)}T(z_0)^{-2}}{1+z|r(z)|^2}(\eta(z-z_0))^{-2i\eta k(z_0)}e^{2it\theta} & 0
\end{array}\right),  &z\in \Sigma_3;\\
\\
\left(\begin{array}{cc}
0 & -r(z)T(z_0)^2(\eta(z-z_0))^{2i\eta k(z_0)}e^{-2it\theta}\\
0 & 0
\end{array}\right),  &z\in \Sigma_4,\\
\end{array}\right.
\end{equation}

$\bullet$ Asymptotic behaviors:
\begin{align}
&M^{(2)}(z;x,t) \sim \left(\begin{array}{cc}
1 & 0
\end{array}\right)+O(z^{-1}),\hspace{0.5cm}z \rightarrow \infty.\label{asymbehv5}
\end{align}

$\bullet$ $\bar{\partial}$-Derivative: For $\mathbb{C}\setminus (\Sigma^{(2)}\cup Z\cup \bar{Z})$ we have $\bar{\partial}M^{(2)}=M^{(1)}\bar{\partial}R^{(2)}$, where
\begin{equation}
\bar{\partial}R^{(2)}=\left\{\begin{array}{llll}
\left(\begin{array}{cc}
0 & 0\\
-\bar{\partial}R_1e^{2it\theta} & 0
\end{array}\right), & z\in \Omega_1;\\ \\
\left(\begin{array}{cc}
0 & -\bar{\partial}R_3e^{-2it\theta}\\
0 & 0
\end{array}\right),  &z\in \Omega_3;\\ \\
\left(\begin{array}{cc}
0 & 0\\
\bar{\partial}R_4e^{2it\theta} & 0
\end{array}\right),  &z\in \Omega_4;\\ \\
\left(\begin{array}{cc}
0 & \bar{\partial}R_6e^{-2it\theta}\\
0 & 0
\end{array}\right),  &z\in \Omega_6;\\ \\
0,  &z\in \Omega_2\cup\Omega_5.\\
\end{array}\right. \label{DBARR2}
\end{equation}

$\bullet$ Residue conditions: M has simple poles at each point in $ Z\bigcup \bar{Z}$ with:

\hspace{1cm} 1.   When $n\in \Delta^+_{z_0,\eta}$,
\begin{align}
&\res_{z=z_n}M^{(2)}(z)=\lim_{z\to z_n}M^{(2)}(z)\left(\begin{array}{cc}
0 & 0\\
c_nT(z_n)^{-2}e^{2i\theta_n t} & 0
\end{array}\right),\\
&\res_{z=\bar{z}_n}M^{(2)}(z)=\lim_{z\to \bar{z}_n}M^{(2)}(z)\left(\begin{array}{cc}
0 & -z_n^{-1}T(\bar{z}_n)^2\bar{c}_ne^{-2i\bar{\theta}_n t}\\
0 & 0
\end{array}\right).
\end{align}

\hspace{1cm} 2.   When $n\in \Delta^-_{z_0,\eta}$,
\begin{align}
&\res_{z=z_n}M^{(2)}(z)=\lim_{z\to z_n}M^{(2)}(z)\left(\begin{array}{cc}
0 & (c_n(1/T)'(z_n)^2e^{2i\theta_n t})^{-1}\\
0 & 0
\end{array}\right),\\
&\res_{z=\bar{z}_n}M^{(2)}(z)=\lim_{z\to \bar{z}_n}M^{(2)}(z)\left(\begin{array}{cc}
0 & 0\\
(-z_nT'(\bar{z}_k)^2\bar{c}_ne^{-2i\bar{\theta}_n t})^{-1} & 0
\end{array}\right).
\end{align}

\section{  Model RH problem and  pure $\bar{\partial}$-problem }
\quad To solve \textbf{RHP 3}, we decompose it into a model RH  Problem and a pure $\bar{\partial}$-Problem.
First we build a solution $M^{RHP}$ to a model RH problem  as following

\textbf{Riemann-Hilbert problem 4}. Find a matrix-valued function  $  M^{RHP}(z;x,t)$ with following properties:

$\bullet$ Analyticity: $M^{RHP}(z;x,t)$ is meromorphic in $\mathbb{C}\setminus (\Sigma^{(2)}\cup Z\cup \bar{Z})$;

$\bullet$ Jump condition: $M^{RHP}$ has continuous boundary values $M^{RHP}_\pm$ on $R$ and
\begin{equation}
M^{RHP}_+(z;x,t)=M^{RHP}_-(z;x,t)V^{(2)}(z),\hspace{0.5cm}z \in R;\label{jump5}
\end{equation}

$\bullet$ Symmetry: $M^{RHP}_{22}(z)=\overline{M^{RHP}_{11}(\bar{z})}$, $M^{RHP}_{21}(z)=-z\overline{M^{RHP}_{12}(\bar{z})}$;

$\bullet$ Asymptotic behaviors:
\begin{align}
&M^{RHP}(z;x,t) \sim \left(\begin{array}{cc}
1 & 0\\
\alpha & 0
\end{array}\right)+O(z^{-1}),\hspace{0.5cm}z \rightarrow \infty\label{asymbehv6}
\end{align}
 for a constant $\alpha$ determined by the symmetry condition above;

$\bullet$ Residue conditions: $M^{RHP}$ has simple poles at each point in $ Z\bigcup \bar{Z}$ with:

\hspace{1cm} 1.   When $n\in \Delta^+_{z_0,\eta}$,
\begin{align}
&\res_{z=z_n}M^{RHP}(z)=\lim_{z\to z_n}M^{RHP}(z)\left(\begin{array}{cc}
0 & 0\\
c_nT(z_n)^{-2}e^{2i\theta_n t} & 0
\end{array}\right),\label{RES11}\\
&\res_{z=\bar{z}_n}M^{RHP}(z)=\lim_{z\to \bar{z}_n}M^{RHP}(z)\left(\begin{array}{cc}
0 & -z_n^{-1}T(\bar{z}_n)^2\bar{c}_ne^{-2i\bar{\theta}_n t}\\
0 & 0
\end{array}\right).\label{RES12}
\end{align}

\hspace{1cm} 2.   When $n\in \Delta^-_{z_0,\eta}$,
\begin{align}
&\res_{z=z_n}M^{RHP}(z)=\lim_{z\to z_n}M^{RHP}(z)\left(\begin{array}{cc}
0 & (c_n(1/T)'(z_n)^2e^{2i\theta_n t})^{-1}\\
0 & 0
\end{array}\right),\label{RES13}\\
&\res_{z=\bar{z}_n}M^{RHP}(z)=\lim_{z\to \bar{z}_n}M^{RHP}(z)\left(\begin{array}{cc}
0 & 0\\
(-z_nT'(\bar{z}_k)^2\bar{c}_ne^{-2i\bar{\theta}_n t})^{-1} & 0
\end{array}\right).\label{RES14}
\end{align}

And we will proof the existence of $M^{RHP}$ and construct its asymptotic expansion for $t\to\infty$ later. Before it we first consider to use $M^{RHP}$ to get the solution of following pure $\bar{\partial}$-problem
\begin{equation}
M^{(3)}(z)=M^{(2)}(z)M^{RHP}(z)^{-1}.\label{transm3}
\end{equation}

\textbf{$\bar{\partial}$-problem 5}. Find a matrix-valued function z$\in C \rightarrow M^{(2)}(z;x,t)$ with following properties:

$\bullet$ Analyticity: $M^{(3)}(z;x,t)$ is continuous with sectionally continuous first partial derivatives in  $\mathbb{C}\setminus (\Sigma^{(2)}\cup Z\cup \bar{Z})$ and meromorphic in $\Omega_2\cup\Omega_5$.

$\bullet$ Asymptotic behavior:
\begin{align}
&M^{(3)}(z;x,t) \sim \left(\begin{array}{cc}
1 & 0
\end{array}\right)+O(z^{-1}),\hspace{0.5cm}z \rightarrow \infty;\label{asymbehv7}
\end{align}

$\bullet$ $\bar{\partial}$-Derivative: For $\mathbb{C}\setminus (\Sigma^{(2)}\cup Z\cup \bar{Z})$ we have $\bar{\partial}M^{(3)}=M^{(3)}W^{(3)}$, where
\begin{equation}
W^{(3)}=M^{RHP}(z)\bar{\partial}R^{(2)}(z)M^{RHP}(z)^{-1.}
\end{equation}
\begin{proof}
	For the property of solutions $M^{(2)}$ and $M^{RHP}$ of \textbf{mixed-$\bar{\partial}$-RHP} and \textbf{RHP 4} respectively, the analyticity and asymptotic behavior come immediately. Since $M^{(2)}$ and $M^{RHP}$ have same jump matrix, we have
	\begin{align*}
	M_-^{(3)}(z)^{-1}M_+^{(3)}(z)&=M_-^{RHP}(z)M_-^{(2)}(z)^{-1}M_+^{(2)}(z)M_+^{RHP}(z)^{-1}\\
	&=M_-^{RHP}(z)V^{(2)}(z)\left( M_-^{RHP}(z)V^{(2)}(z)\right) ^{-1}\\
	&=I,
	\end{align*}
	which means $ M^{(3)}$ has no jumps and is everywhere continuous. Then we proof that $ M^{(3)}$ has no pole. For instance, if $\lambda \in Z\cup \bar{Z}$ and let $N$ denote the  nilpotent matrix which appears in the left side of the corresponding residue condition of \textbf{mixed-$\bar{\partial}$-RHP} and \textbf{RHP 4},  we have the Laurent expansions in $z-\lambda$
	\begin{equation}
	M^{(2)}(z)=a(\lambda) \left[ \dfrac{N}{z-\lambda}+I\right] +O(z-\lambda),\hspace{0.5cm}	M^{RHP}(z)=A(\lambda) \left[ \dfrac{N}{z-\lambda}+I\right] +O(z-\lambda),
	\end{equation}
where $a(\lambda)$ and $A(\lambda)$ are the constant row vector and matrix in their respective expansions. And from $M^{RHP}(z)^{-1}=\sigma_2M^{RHP}(z)^T\sigma_2$, we have
\begin{align}
M^{(3)}(z)&=\left\lbrace a(\lambda) \left[ \dfrac{N}{z-\lambda}+I\right]\right\rbrace \left\lbrace\left[ \dfrac{-N}{z-\lambda}+I\right]\sigma_2A(\lambda)^T\sigma_2\right\rbrace + O(z-\lambda)\nonumber\\
&=O(1),
\end{align}
which means  $ M^{(3)}$ has removable singularities at $\lambda$. And the $\bar{\partial}$-derivative of  $ M^{(3)}$ is from the $\bar{\partial}$-derivative of  $ M^{(3)}$ and the analyticity of $M^{RHP}$.
\end{proof}
Then we begin to prove the existence of  $M^{RHP}$ and explain its characteristics. We construct the solution $M^{RHP}$ as follow:
\begin{equation}
M^{RHP}=\left\{\begin{array}{ll}
E(z)M^{(out)} & z\notin U_{z_0}\\
E(z)M^{(z_0)}  &z\in U_{z_0}\\
\end{array}\right.,\label{transm4}
\end{equation}
where $ U_{z_0}$ is the neighborhood of $z_0$
\begin{equation}
U_{z_0}=\left\lbrace z:|z-z_0|\leq \mu/3\right\rbrace .
\end{equation}
From the definition we can easily find that $M^{RHP}$ is pole free. This decomposition decompose $M^{RHP}$ to two part: $M^{(out)}$ solves the pure RHP obtained by ignoring the jump conditions of \textbf{RHP 4} which is shown in Section 6; $M^{(z_0)}$ uses parabolic cylinder functions to build a matrix whose jumps exactly match those of $M^{(2)}$ in a neighborhood of the critical point $z_0$ which is shown in Section 7. And $E(z)$ is the error function, which is a solution of a small-norm Riemann-Hilbert problem shown in Section 9.

\section{The outer RH  model problem}
\quad In this section we build outer model Riemann-Hilbert problem and research its property. From  the previous section we find that the matrix function $M^{RHP}$ is meromorphic away from the contour $\Sigma^{(2)}$, and on the $\Sigma^{(2)}$, the boundry value satisfy $M^{RHP}_+(z;x,t)=M^{RHP}_-(z;x,t)V^{(2)}(z)$ which have following proposition.
\begin{proposition}\label{pro3v2}
	For the jump matrix of $M^{RHP}$, we have
	\begin{equation}
	\parallel V^{(2)}-I\parallel_{L^\infty(\Sigma^{(2)})}=O(e^{-4t|z-z_0|^2}).
	\end{equation}
\end{proposition}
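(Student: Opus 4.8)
The plan is to estimate $V^{(2)}-I$ ray by ray on $\Sigma^{(2)}=\Sigma_1\cup\Sigma_2\cup\Sigma_3\cup\Sigma_4$. Since $V^{(2)}(z)-I=(1-X_Z(z))V_0^{(2)}$ and $0\le X_Z\le 1$, it suffices to bound the single nonzero off-diagonal entry of $V_0^{(2)}$ on each $\Sigma_k$. In every case that entry is the product of three kinds of factor: a constant ``amplitude'' (made of $z_0\overline{r(z_0)}$ or $z_0 r(z_0)$, resp. $r(z_0)$; the constant $T(z_0)^{\pm2}$; and, on $\Sigma_2,\Sigma_3$, the constant $(1+z_0|r(z_0)|^2)^{-1}$), the power $(\eta(z-z_0))^{\pm 2i\eta k(z_0)}$, and the exponential $e^{\pm 2it\theta}$. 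I would first bound the first two types of factor uniformly, then extract the Gaussian decay from the third.

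For the amplitude: since $r\in H^{2,2}(R)\hookrightarrow L^\infty(R)$, $r(z_0)$ is a finite constant; $T(z_0)=T_0$ is a fixed nonzero constant by its definition (and Proposition \ref{proT}); and, because Assumption \ref{initialdata} rules out spectral singularities, $1+z_0|r(z_0)|^2=|a(z_0)|^{-2}$ is bounded above and below by positive constants. For the power: write $z-z_0=\ell e^{i\psi}$ with $\ell>0$ and $\psi$ equal to one of $\pi/4,3\pi/4,5\pi/4,7\pi/4$ (fixed along each ray); then $|(\eta(z-z_0))^{\pm 2i\eta k(z_0)}|$ is constant along $\Sigma_k$ and, by the computation already carried out in $(\ref{z-z0})$, bounded by a fixed constant depending only on $k(z_0)$. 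Hence ``amplitude $\times$ power'' is bounded by some $c=c(q_0)$ uniformly on $\Sigma^{(2)}$.

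The substance is the exponential. From $(\ref{theta})$, $\mathrm{Re}(2it\theta)=-8t\,\mathrm{Im}(z)(\mathrm{Re}(z)-z_0)$, and with $z-z_0=\ell e^{i\psi}$ on $\Sigma_k$ this equals $-8t\ell^{2}\sin\psi\cos\psi=-4t|z-z_0|^{2}\sin(2\psi)$. For $\eta=1$ the triangular structure of $V_0^{(2)}$ attaches $e^{2it\theta}$ to the rays $\Sigma_1,\Sigma_3$, on which $\sin(2\psi)=1$, so there $|e^{2it\theta}|=e^{\mathrm{Re}(2it\theta)}=e^{-4t|z-z_0|^{2}}$; and it attaches $e^{-2it\theta}$ to $\Sigma_2,\Sigma_4$, on which $\sin(2\psi)=-1$, so there $|e^{-2it\theta}|=e^{-\mathrm{Re}(2it\theta)}=e^{-4t|z-z_0|^{2}}$. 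For $\eta=-1$ the rays $\Sigma_k$ are relabeled as in Figure \ref{figR2} and $-t=|t|>0$, giving the same estimate with $t$ replaced by $|t|$. Multiplying the two bounds yields $|V^{(2)}(z)-I|\le c\,e^{-4|t|\,|z-z_0|^{2}}$ for all $z\in\Sigma^{(2)}$, which is the claim. I expect the only genuinely delicate point to be this last sign bookkeeping: one must check, ray by ray, that $V_0^{(2)}$ really pairs $e^{+2it\theta}$ with the rays where $\sin(2\psi)=\eta$ and $e^{-2it\theta}$ with those where $\sin(2\psi)=-\eta$ — but this is exactly how the sectors $\Omega_k$ and the matrix $R^{(2)}$ were chosen in Section 4 (cf. the definition of $R_j$ in Proposition \ref{proR}), so it amounts to a consistency check against that construction.
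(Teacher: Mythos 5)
Your proposal is correct and follows essentially the same route as the paper: on each ray of $\Sigma^{(2)}$ one bounds the amplitude and the oscillatory power $(\eta(z-z_0))^{\pm 2i\eta k(z_0)}$ uniformly (the paper does this via the bound on $|R_j|$ in Proposition \ref{proR} together with Proposition \ref{proT}, treating $\Sigma_1$ as the representative case) and then extracts the Gaussian factor from $\mathrm{Re}(2it\theta)=-8t\,\mathrm{Im}(z)(\mathrm{Re}(z)-z_0)=-4t|z-z_0|^2\sin(2\psi)$ along the rays, exactly as you do. Your ray-by-ray sign bookkeeping and the remark on the $\eta=-1$ relabeling are consistent with the paper's construction, so no gap remains.
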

\begin{proof}
	We proof it for example in $\Sigma_1$. The others case can be proof in the same way. For $z\in\Sigma_1$, we have
	\begin{equation}
	\parallel V^{(2)}-I\parallel_{L^\infty(\Sigma^{(2)})}=\parallel R_1(z)e^{2it\theta}\parallel_{L^\infty(\Sigma^{(2)})}.
	\end{equation}
	And from \textbf{Proposition \ref{proT}}, \textbf{Proposition \ref{proR}} and (\ref{theta}) we have
	\begin{equation}
	|R_1(z)e^{2it\theta}|\lesssim e^{-Im(2t\theta)}+\langle Re(z)\rangle^{-1/2}e^{-Im(2t\theta)}\lesssim \left( 1+\langle Re(z)\rangle^{-1/2}\right) e^{-4t|z-z_0|^2}.
	\end{equation}
	So we come to the consequence.
\end{proof}
This proposition means that the jump $V^{(2)}$ is uniformly near identity. So outside the $U_{z_0}$ there is only exponentially small error (in t) by completely ignoring the jump condition of $M^{RHP}$. Then we can introduce following outer model problem.

\textbf{Riemann-Hilbert problem 6}. Find a matrix-valued function z$\in C \rightarrow M^{(out)}(z;x,t)$ with following properties:

$\bullet$ Analyticity: $M^{(out)}(z;x,t)$ is meromorphic in $\mathbb{C}\setminus (\Sigma^{(2)}\cup Z\cup \bar{Z})$;

$\bullet$ Symmetry: $M^{(out)}_{22}(z)=\overline{M^{(out)}_{11}(\bar{z})}$, $M^{(out)}_{21}(z)=-z\overline{M^{(out)}_{12}(\bar{z})}$;

$\bullet$ Asymptotic behaviors:
\begin{align}
&M^{(out)}(z;x,t) \sim \left(\begin{array}{cc}
1 & 0\\
\alpha & 0
\end{array}\right)+O(z^{-1}),\hspace{0.5cm}z \rightarrow \infty
\end{align}
for a constant $\alpha$ determined by the symmetry condition above;

$\bullet$ Residue conditions: $M^{(out)}$ has simple poles at each point in $ Z\bigcup \bar{Z}$ satisfying the same residue relations of $M^{RHP}$.

\begin{proposition}\label{unim}	
the Riemann-Hilbert problem as follow, which is the reflectionless case $r\equiv0$ of \textbf{RHP MNLS}, exist unique solution.

\textbf{problem 1}. Given discrete data $\sigma_d=\left\lbrace(z_k,c_k) \right\rbrace _{k=1}^N$, and $Z=${$z_k$}.Find a matrix-valued function z$\in C \rightarrow m(z;x,t|\sigma_d)$ with following properties:

$\bullet$ Analyticity: $m(z;x,t|\sigma_d)$ is meromorphic in $\mathbb{C}\setminus (\Sigma^{(2)}\cup Z\cup \bar{Z})$;

$\bullet$ Symmetry: $m_{22}(z;x,t|\sigma_d)=\overline{m_{11}(\bar{z};x,t|\sigma_d)}$, $m_{21}(z;x,t|\sigma_d)=-z\overline{m_{12}(\bar{z};x,t|\sigma_d)}$, which means $m(z;x,t|\sigma_d)=z^{\sigma_3/2}\sigma_2\overline{m(\bar{z};x,t|\sigma_d)}\sigma_2^{-1}z^{-\sigma_3/2}$;

$\bullet$ Asymptotic behaviors:
\begin{align}
&m(z;x,t|\sigma_d) \sim \left(\begin{array}{cc}
1 & 0\\
\alpha & 1
\end{array}\right)+O(z^{-1}),\hspace{0.5cm}z \rightarrow \infty
\end{align}
for a constant $\alpha$ determined by the symmetry condition above;

$\bullet$ Residue conditions: $m(z;x,t|\sigma_d)$ has simple poles at each point in $ Z\bigcup \bar{Z}$ satisfying
\begin{align}
&\res_{z=z_n}m(z;x,t|\sigma_d)=\lim_{z\to z_n}m(z;x,t|\sigma_d)\tau_k,\\
&\res_{z=\bar{z}_n}m(z;x,t|\sigma_d)=\lim_{z\to \bar{z}_n}m(z;x,t|\sigma_d)\hat{\tau}_k,
\end{align}
where $\tau_k$ is a nilpotent matrix satisfies
\begin{align}
\tau_k=\left(\begin{array}{cc}
0 & 0\\
\gamma_k & 0
\end{array}\right),\hspace{0.5cm}\hat{\tau}_k=z_k^{\sigma_3/2}\sigma_2\overline{\tau_k}\sigma_2^{-1}z_k^{-\sigma_3/2},\hspace{0.5cm} \gamma_k=c_ke^{2i[(z-\rho)x+2(z-\rho)^2t]}.
\end{align}
Moreover, the solution satisfies
\begin{equation}
\parallel m(z;x,t|\sigma_d)^{-1}\parallel_{L^\infty(C\setminus(Z\cup\bar{Z}))}\lesssim 1. \label{normm}
\end{equation}

\end{proposition}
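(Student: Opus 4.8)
The plan is to reduce \textbf{Problem 1} to a finite linear-algebra problem, to establish unique solvability by a vanishing lemma built on the Schwarz symmetry, and then to deduce the bound (\ref{normm}) from the resulting explicit rational representation.

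Since $r\equiv 0$ there is no jump across $R$, so any solution $m(z;x,t|\sigma_d)$ is a rational matrix function whose only singularities are the prescribed simple poles in $Z\cup\bar Z$. I would write the first column as $\binom{1}{\alpha}+\sum_{k=1}^N(z-z_k)^{-1}\binom{a_k}{b_k}$, use the symmetry $m(z)=z^{\sigma_3/2}\sigma_2\overline{m(\bar z)}\sigma_2^{-1}z^{-\sigma_3/2}$ to obtain the second column and to fix the constant $\alpha$, and then impose the residue relations $\res_{z=z_n}m=\lim_{z\to z_n}m\,\tau_n$. After eliminating the symmetry-dependent coefficients, this produces a square linear system \[ \mathcal{A}(x,t)\,\vec\beta=\vec b, \] whose coefficient matrix is of Cauchy type (entries built from $(z_n-\bar z_m)^{-1}$, the points $z_k$, and the data $\gamma_k=c_ke^{2i[(z_k-\rho)x+2(z_k-\rho)^2t]}$); solving for $\vec\beta$ and using Cramer's rule then presents $m$ as an explicit ratio of determinants in these quantities.

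Unique solvability is equivalent to invertibility of $\mathcal{A}(x,t)$, i.e.\ to triviality of the homogeneous problem: a rational $m_0$ with the same poles, the same nilpotent residue structure and the same symmetry, but $m_0\to0$ as $z\to\infty$. I would prove $m_0\equiv0$ by the standard vanishing-lemma argument: expressing the first-column residue coefficients of $m_0$ through its second column at the $z_n$, and the second-column coefficients through the first column at the $\bar z_n$ via the symmetry, the homogeneous residue relations collapse to a linear system for $N$ vectors whose matrix, up to harmless weights coming from the $z^{\sigma_3/2}$ twist, is the Cauchy-type matrix $\big(\tfrac{\gamma_n\overline{\gamma_m}}{z_n-\bar z_m}\big)$; because $\operatorname{Im}z_k>0$ this matrix is, after multiplication by $i$, a Gram matrix of the exponentials $s\mapsto e^{iz_k s}$ on $(0,\infty)$, hence positive definite and in particular nonsingular, forcing $m_0\equiv0$. (Equivalently, \textbf{Problem 1} is precisely the inverse-scattering Riemann--Hilbert problem for the pure $N$-soliton solution of the gauge-transformed equation (\ref{MNLS1}), which is globally defined; the vanishing lemma is the self-contained route, paralleling the $\rho=0$ treatment in \cite{Liu3}.) The one point needing care here is tracking the $z^{\sigma_3/2}$ twist so that the Cauchy matrix comes out genuinely sign-definite rather than merely invertible.

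Finally, for (\ref{normm}): since the residue matrices $\tau_n,\hat\tau_n$ are nilpotent and triangular, $\det m$ is pole-free — the same cancellation carried out in the proof of \textbf{$\bar\partial$-problem 5} for $M^{(3)}$ — hence entire, bounded and $\to1$ at $\infty$, so $\det m\equiv1$ and $m^{-1}=\operatorname{adj}(m)$ is again rational with pole set $Z\cup\bar Z$. Away from a fixed neighborhood of $Z\cup\bar Z$ the entries of $m$ and of $\operatorname{adj}(m)$ are continuous and tend to constants at infinity, hence uniformly bounded in $z$; the $(x,t)$-dependence enters only through $\vec\beta=\mathcal{A}(x,t)^{-1}\vec b$ and through $\gamma_k$, and in the application these stay bounded — indeed $\gamma_k\to0$ — because after the $T$-conjugation of Section~3 the norming constant attached to each $z_n$ carries the decaying exponential (this is the purpose of the partition $\Delta^\pm_{z_0,\eta}$). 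The remaining, and main, obstacle is \emph{uniformity in $(x,t)$}: one must bound $|\det\mathcal{A}(x,t)|$ away from zero uniformly over the admissible range. I would obtain this from $\mathcal{A}\to I$ as $|t|\to\infty$ inside the cone (since $\gamma_k\to0$), handling the complementary compact parameter set by continuity together with the vanishing lemma, which ensures $\det\mathcal{A}\neq0$ at every point. This yields (\ref{normm}), understood — as it is used subsequently — for $z$ at a fixed positive distance from $Z\cup\bar Z$, since $\bar\partial R^{(2)}$ is supported away from these points.
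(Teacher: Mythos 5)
Your core strategy coincides with the paper's: the symmetry forces the rational partial-fraction ansatz (the paper's (\ref{EXPM})), the residue conditions reduce \textbf{Problem 1} to a finite linear system, uniqueness comes from a Liouville/vanishing argument, and $\det m\equiv 1$ converts the bound on $m^{-1}=\operatorname{adj}(m)$ into a bound on $m$. Where the paper simply invokes Theorem 4.3 of \cite{Liu2} for existence, you fill in the vanishing-lemma/Gram-matrix positivity argument; that is essentially what the cited reference does for the same symmetry (DNLS is the $\rho=0$ case), so this is a legitimate, self-contained substitute. Your flagged caveat about the $z^{\sigma_3/2}$ twist is a real item to verify rather than a throwaway remark -- the extra factors of $z_k$ must be absorbed as weights of the form $d_n\overline{d_m}$ for the matrix to remain Hermitian-congruent to the Gram matrix of $e^{iz_ks}$ on $(0,\infty)$ -- but this is exactly the computation carried out in \cite{Liu2}, and it works. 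You are also right that (\ref{normm}) can only be read as a bound at fixed positive distance from $Z\cup\bar Z$, since $\operatorname{adj}(m)$ has the same poles as $m$; the paper glosses this.

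The genuine gap is in your uniformity argument for (\ref{normm}). The claim that $\mathcal{A}(x,t)\to I$ as $|t|\to\infty$ inside the cone ``since $\gamma_k\to 0$'' fails precisely in the case the theorem is about, namely $Z(I)\neq\emptyset$: for a pole with $\mathrm{Re}\,z_k\in I$ the renormalized residue coefficient is of size $e^{-2\,\mathrm{Im}z_k\,x_0}\,e^{-8t\,\mathrm{Im}z_k(\mathrm{Re}z_k-\rho-v_0/4)}$, which stays of order one when the soliton velocity matches the frame (and, for a single fixed choice of $\triangle$, can even grow when it does not), so $\mathcal{A}$ does not approach the identity and the remaining parameter region is not compact, so your continuity argument does not cover it. Note also that \textbf{Problem 1} itself carries the raw $\gamma_k=c_ke^{2i[(z_k-\rho)x+2(z_k-\rho)^2t]}$, which are unbounded over the cone; the $T$-conjugation of Section 3 is not part of this proposition. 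The correct repair -- the one implicitly behind the paper's one-line assertion and carried out in \cite{fNLS} and \cite{Liu3} -- is to conjugate by $a_\triangle(z)^{\sigma_3}$ (bounded and $(x,t)$-independent) with the triangularity $\triangle$ chosen separately on each region of the $(x,t)$-plane so that \emph{all} residue coefficients remain in a fixed compact subset of $\mathbb{C}^N$, and then to combine the vanishing lemma (which gives $\det\mathcal{A}\neq 0$ for every admissible coefficient vector, including limits where some coefficients vanish) with continuity on that compact set to bound $\mathcal{A}^{-1}$, hence $m$ and $m^{-1}$, uniformly. With that step replaced, your proposal is sound and matches the paper's (largely outsourced) argument.
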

\begin{proof}
	The uniqueness of solution follows from Liouville's Theorem. And the symmetries in \textbf{problem 1} means that the solution of it must admit a partial fraction expansion of following from
	\begin{equation}
	m(z;x,t|\sigma_d)=\left(\begin{array}{cc}
	1 & 0\\
	\alpha & 0
	\end{array}\right)+\sum_{k=1}^{N}\left[ \frac{1}{z-z_k}\left(\begin{array}{cc}
	\nu_k(x,t) & 0\\
	\beta_k(x,t) & 0
	\end{array}\right)+\frac{1}{z-\bar{z}_k}\left(\begin{array}{cc}
	0 & -z_k^{-1}\overline{\beta_k(x,t)}\\
	0 & \overline{\nu_k(x,t)}
	\end{array}\right)\right] .\label{EXPM}
	\end{equation}
	Follow the prove in The. 4.3 of \cite{Liu2} we similar have the existence of solution.
	Since det($m(z;x,t|\sigma_d)$)=1, we only need to consider $\parallel m(z;x,t|\sigma_d)\parallel_{L^\infty(C\setminus(Z\cup\bar{Z}))}$. And from (\ref{EXPM})	we simply obtain the   consequence.
\end{proof}

From Trace formula we have
\begin{equation}
a(z)=\exp\left[-\frac{1}{2\pi i}\int_R\frac{\log[1+\zeta|r(\zeta)|^2]}{\zeta-z}d\zeta\right]\prod_{n=1}^{N}\frac{z-z_n}{z-\bar{z}_n},\hspace{0.5cm}z\in C^+.\label{trace}\\
\end{equation}
Let $\triangle \subseteqq\left\lbrace 1,2,...,N\right\rbrace $, and define
\begin{equation}
a_\triangle(z)=\prod_{k\in\triangle }\dfrac{z-z_k}{z-\bar{z}_k}\exp\left[-\frac{1}{2\pi i}\int_R\frac{\log[1+\zeta|r(\zeta)|^2]}{\zeta-z}d\zeta\right].
\end{equation}
The renormalization
\begin{equation}
m^\triangle(z|\sigma_d)=m(z|\sigma_d)a^\triangle(z)^{\sigma_3},
\end{equation}
splits the poles according to the choice of $\triangle$, and it satisfies the following modified discrete Riemann-Hilbert problem.

\textbf{problem 2}. Given discrete data $\sigma_d=\left\lbrace(z_k,c_k) \right\rbrace _{k=1}^N$, and $\triangle \subseteqq\left\lbrace 1,2,...,N\right\rbrace $.Find a matrix-valued function z$\in C \rightarrow m^\triangle(z;x,t|\sigma_d)$ with following properties:

$\bullet$ Analyticity: $m^\triangle(z;x,t|\sigma_d)$ is meromorphic in $\mathbb{C}\setminus (\Sigma^{(2)}\cup Z\cup \bar{Z})$;

$\bullet$ Symmetry: $m^\triangle_{22}(z;x,t|\sigma_d)=\overline{m^\triangle_{11}(\bar{z};x,t|\sigma_d)}$, $m^\triangle_{21}(z;x,t|\sigma_d)=-z\overline{m^\triangle_{12}(\bar{z};x,t|\sigma_d)}$, which means $m^\triangle(z;x,t|\sigma_d)=z^{\sigma_3/2}\sigma_2\overline{m^\triangle(\bar{z};x,t|\sigma_d)}\sigma_2^{-1}z^{-\sigma_3/2}$;

$\bullet$ Asymptotic behaviors:
\begin{align}
&m^\triangle(z;x,t|\sigma_d) \sim \left(\begin{array}{cc}
1 & 0\\
\alpha & 0
\end{array}\right)+O(z^{-1}),\hspace{0.5cm}z \rightarrow \infty
\end{align}
for a constant $\alpha$ determined by the symmetry condition above;

$\bullet$ Residue conditions: $m(z;x,t|\sigma_d)$ has simple poles at each point in $ Z\bigcup \bar{Z}$ satisfying
\begin{align}
&\res_{z=z_n}m^\triangle(z;x,t|\sigma_d)=\lim_{z\to z_n}m^\triangle(z;x,t|\sigma_d)\tau^\triangle_k,\\
&\res_{z=\bar{z}_n}m^\triangle(z;x,t|\sigma_d)=\lim_{z\to \bar{z}_n}m^\triangle(z;x,t|\sigma_d)\hat{\tau}_k^\triangle,
\end{align}
where $\tau_k$ is a nilpotent matrix satisfies
\begin{align}
&\tau_k^\triangle=\left\{\begin{array}{ll}
\left(\begin{array}{cc}
0 & 0\\
\gamma_ka^\triangle(z)^2 & 0
\end{array}\right) & k\notin \triangle\\
\left(\begin{array}{cc}
0 & \gamma_k^{-1}a'^\triangle(z)^{-2}\\
0 & 0
\end{array}\right)  & k\in \triangle\\
\end{array}\right.,\hspace{0.5cm}\hat{\tau}_k^\triangle=z_k^{\sigma_3/2}\sigma_2\overline{\tau}_k^\triangle\sigma_2^{-1}z_k^{-\sigma_3/2},\nonumber\\
&\gamma_k=c_ke^{2i[(z-\rho)x+2(z-\rho)^2t]}.\label{tau}
\end{align}
Since $m^\triangle(z;x,t|\sigma_d)$ is a explicit transformation of $m(z;x,t|\sigma_d)$, from \textbf{proposition} \ref{unim} we obtain the existence uniqueness of the solution of \textbf{problem 2}. If $q_{sol}(x,t)=q_{sol}(x,t;\sigma_d)$ denotes the $N$-soliton solution of (\ref{MNLS1}) encoded by \textbf{problem 1}, we also have the reconstruction formula that
\begin{equation}
q_{sol}(x,t)=\lim_{z\to \infty}2iz(m^\triangle(z;x,t|\sigma_d))_{12},
\end{equation}
which show that each normalization encodes $q_{sol}(x,t)$ in the same way. If we choosing $\triangle$ appropriately, the asymptotic limits $|t|\to\infty$ with $z_0=-x/4t+\rho$ bounded are under better asymptotic control. Then we consider the long-time behavior of soliton solutions.

Give pairs points $x_1\leq x_2\in R$ and velocities $v_1\leq v_2 \in R$, and define the cone
\begin{equation}
C(x_1,x_2,v_1,v_2)=\left\lbrace (x,t)\in R^2|x=x_0+vt\text{ ,with }x_0\in[x_1,x_2]\text{, }v\in[v_1+4\rho,v_2+4\rho]\right\rbrace.\label{coneC}
\end{equation}
Denote $I=[-v_2/4+\rho,-v_1/4+\rho]$, and let
\begin{align}
&Z(I)=\left\lbrace z_k\in Z|Rez_k\in I\right\rbrace ,\hspace{2.1cm}N(I)=|Z(I)|\nonumber,\\
&Z^-(I)=\left\lbrace z_k\in Z|Rez_k< -v_2/4+\rho\right\rbrace,\hspace{0.5cm}Z^+(I)=\left\lbrace z_k\in Z|Rez_k> -v_1/4+\rho\right\rbrace\nonumber,\\
&c_k(I)=c_k\prod_{Rez_n\in I_{z_0}^\eta\setminus I}\left( \frac{z_k-z_n}{z_k-\bar{z}_n}\right) ^2\exp\left[-\frac{1}{\pi i}\int_{I_{z_0}^\eta}\frac{\log[1+\zeta|r(\zeta)|^2]}{\zeta-z}d\zeta\right].\label{dataI}
\end{align}

\begin{figure}[H]
	\centering
	\centering
	\subfigure[]{
		\begin{tikzpicture}[node distance=2cm]
		\filldraw[yellow,line width=2] (-1,-2.4) rectangle (1,2.4);
		\draw[->](-2.5,0)--(2.5,0)node[right]{Re$z$};
		\draw[->](1,-2.5)--(1,2.5)node[above]{$-v_2/4$};
		\draw[->](-1,-2.5)--(-1,2.5)node[above]{$-v_1/4$};
		\draw[->](0,0)--(-0.8,0);
		\draw[->](-0.8,0)--(-1.8,0);
		\draw[->](0,0)--(0.8,0);
		\draw[->](0.8,0)--(1.8,0);
		\coordinate (A) at (0.5,1.2);
		\coordinate (B) at (0.6,-1.2);
		\coordinate (C) at (1.5,1.6);
		\coordinate (D) at (1.6,-1.6);
		\coordinate (E) at (-1.7,0.6);
		\coordinate (F) at (-1.7,-0.6);
		\coordinate (G) at (-0.3,1.9);
		\coordinate (H) at (-0.3,-1.9);
		\fill (A) circle (1pt) node[right] {$z_1$};
		\fill (B) circle (1pt) node[right] {$\bar{z}_1$};
		\fill (G) circle (1pt) node[left] {$z_2$};
		\fill (H) circle (1pt) node[left] {$\bar{z}_2$};
		\fill (C) circle (1pt) node[right] {$z_3$};
		\fill (D) circle (1pt) node[right] {$\bar{z}_3$};
		\fill (E) circle (1pt) node[right] {$z_4$};
		\fill (F) circle (1pt) node[right] {$\bar{z}_4$};
		\label{figC}
		\end{tikzpicture}
	}
	\subfigure[]{
		\begin{tikzpicture}[node distance=2cm]
		\draw[yellow, fill=yellow] (0.6,0)--(1.5,2.5)--(-1.0,2.5)--(-0.5,0)--(-1.4,-2.5)--(1.0,-2.5)--(0.6,0);
		\draw[->](-2.5,0)--(2.5,0)node[right]{x};
		\draw[->](0.6,0)--(1.5,2.5)node[right]{$x=v_2t+x_2$};
		\draw[->](-0.5,0)--(-1.4,-2.5)node[left]{$x=v_2t+x_1$};
		\draw[->](-0.5,0)--(-1.0,2.5)node[left]{$x=v_1t+x_1$};
		\draw[->](0.6,0)--(1.0,-2.5)node[right]{$x=v_1t+x_2$};
		\draw[->](0,0)--(-0.8,0);
		\draw[->](-0.8,0)--(-1.8,0);
		\draw[->](0,0)--(0.8,0);
		\draw[->](0.8,0)--(1.8,0);
		\coordinate (A) at (0.6,0);
		\coordinate (B) at (-0.5,0);
		\coordinate (I) at (0,0);
		\fill (A) circle (2pt) node[above] {$x_1$};
		\fill (B) circle (2pt) node[above] {$x_2$};
		\fill (I) circle (0pt) node[below] {$C$};
		\label{figzero}
		\end{tikzpicture}
	}
	\caption{(a) In the example here, the original data has four pairs zero points of discrete spectrum, but insider the cone C only three pairs points with $Z(I)=\left\lbrace z_1,z_2 \right\rbrace$; (b) The cone $C(x_1,x_2,v_1,v_2)$}
	\label{figC(I)}
\end{figure}
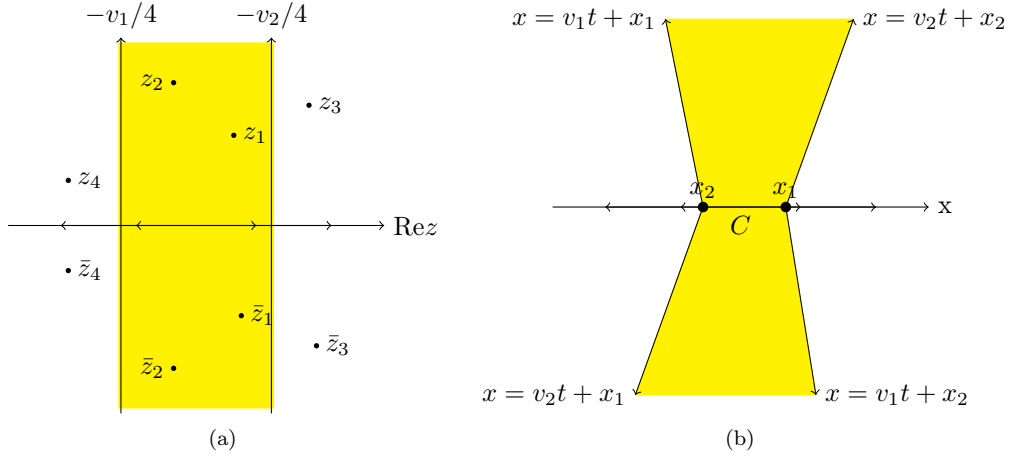

Then we have following lemma
\begin{lemma}
	Fix reflection-less data $D=\left\lbrace (z_k,c_k)\right\rbrace_{k=1}^N $, $D(I)=\left\lbrace (z_k,c_k(I))|z_k\in Z(I)\right\rbrace$. Then as $|t|\to\infty$ with $(x,t)\in C(x_1,x_2,v_1,v_2)$, we have
	\begin{equation}
	m(z;x,t|D)=\left( I+O(e^{-8\mu(I)|t|})\right) 	m(z;x,t|D(I)),
	\end{equation}
	where $\mu(I)=\min_{z_K\in Z\setminus Z(I)}\left\lbrace Im(z_k) dist(Re(z_k),I)\right\rbrace $.
\end{lemma}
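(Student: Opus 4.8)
The plan is to remove the poles attached to the discrete spectrum lying outside the cone by a Blaschke-type conjugation, which at the same time converts the residue data at the remaining nodes into exactly the renormalized norming constants $c_k(I)$ of (\ref{dataI}), and then to compare the resulting meromorphic matrix with $m(z;x,t|D(I))$ by a small-norm argument. The first ingredient is a sign analysis of the phases: by (\ref{theta}), for $z_k=a_k+ib_k\in Z$ one has $\mathrm{Re}(2it\theta_k)=-8t\,b_k(a_k-z_0)$, while on $C(x_1,x_2,v_1,v_2)$ the stationary point $z_0=-x/(4t)+\rho$ converges to a point of $I$ with $\mathrm{dist}(z_0,I)=O(|t|^{-1})$ as $|t|\to\infty$. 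Hence for $z_k\in Z\setminus Z(I)$ the sign of $a_k-z_0$ is, for $|t|$ large, constant and equal to the sign of the signed distance of $a_k$ to $I$, uniformly on $C$, and moreover $8|t|\,b_k|a_k-z_0|\ge 8\mu(I)|t|-O(1)$. Splitting $Z\setminus Z(I)$ according to $\eta=\mathrm{sgn}\,t$, the external nodes at which $|e^{2it\theta_k}|\to\infty$ form the set $\Lambda=\Lambda(I,\eta)$ equal to $Z^-(I)$ when $\eta=1$ and to $Z^+(I)$ when $\eta=-1$; at the complementary external nodes $|e^{2it\theta_k}|=O(e^{-8\mu(I)|t|})$.

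Next, set $B(z)=\prod_{z_k\in\Lambda}\frac{z-z_k}{z-\bar z_k}$. Then $B$ is analytic and nonvanishing on $\mathbb{C}\setminus(Z\cup\bar Z)$, $|B|\equiv1$ on $R$, $B(z)\to1$ as $z\to\infty$, and $\overline{B(\bar z)}=B(z)^{-1}$; this last identity ensures that $\widetilde m(z):=m(z;x,t|D)\,B(z)^{\sigma_3}$ inherits the symmetry and the normalization required in \textbf{problem 1}. A Laurent expansion at each node shows that $\widetilde m$ solves \textbf{problem 1} with new residue data: (i) at $z_k\in Z(I)$ the column-one pole persists and its coefficient is multiplied by $B(z_k)^2=\prod_{z_n\in\Lambda}\big((z_k-z_n)/(z_k-\bar z_n)\big)^2$, which by (\ref{dataI}) (reflectionless, so the $\log(1+\zeta|r|^2)$-integral is absent) equals $c_k(I)/c_k$, i.e. $\widetilde m$ carries exactly the data $D(I)$ at $Z(I)\cup\overline{Z(I)}$; (ii) at $z_k\in\Lambda$ the simple zero of $B$ cancels the column-one pole and instead produces a column-two pole whose coefficient is proportional to $\gamma_k^{-1}=c_k^{-1}e^{-2it\theta_k}$; (iii) at $z_k\in(Z\setminus Z(I))\setminus\Lambda$ the column-one pole survives with coefficient proportional to $B(z_k)^2\gamma_k$, i.e. proportional to $e^{2it\theta_k}$. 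The conjugate nodes $\bar z_k$ are handled by the symmetry. By the sign analysis above, in cases (ii) and (iii) the new residue coefficients are $O(e^{-8\mu(I)|t|})$, uniformly on $C$.

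Finally, put $E(z):=\widetilde m(z)\,m(z;x,t|D(I))^{-1}$. By \textbf{Proposition \ref{unim}}, in particular (\ref{normm}), $\|m(\cdot;x,t|D(I))^{\pm1}\|_{L^\infty(\mathbb{C}\setminus(Z\cup\bar Z))}\lesssim1$ uniformly on $C$. Since $\widetilde m$ and $m(\cdot|D(I))$ have no jump (there is no continuous spectrum) and, by (i), the same residue data at $Z(I)\cup\overline{Z(I)}$, the matrix $E$ is single-valued and analytic on $\mathbb{C}$ except for simple poles at $(Z\setminus Z(I))\cup(\overline{Z\setminus Z(I)})$, with $E(z)\to I$ as $z\to\infty$; at the external nodes its residue relations are those of (ii)--(iii) conjugated by the bounded matrix $m(\cdot|D(I))$, hence with nilpotent coefficients of size $O(e^{-8\mu(I)|t|})$. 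Writing the partial-fraction ansatz $E(z)=I+\sum_{z_k\in Z\setminus Z(I)}\big(\frac{A_k}{z-z_k}+\frac{\widehat A_k}{z-\bar z_k}\big)$ as in (\ref{EXPM}) and imposing these relations gives a finite linear system whose coefficient matrix is $I+O(e^{-8\mu(I)|t|})$, solvable for $|t|$ large with $A_k,\widehat A_k=O(e^{-8\mu(I)|t|})$. Thus $E=I+O(e^{-8\mu(I)|t|})$ uniformly on $C$, and undoing the conjugation yields
\[
m(z;x,t|D)=\big(I+O(e^{-8\mu(I)|t|})\big)\,m(z;x,t|D(I))\,B(z)^{-\sigma_3},
\]
which is the asserted identity; the $t$-independent factor $B(z)^{-\sigma_3}$, analytic and equal to $I$ at $z=\infty$, is part of the normalization of $m(z|D(I))$ and in particular leaves the reconstruction $q_{sol}=2i\lim_{z\to\infty}z\,m_{12}$ unchanged.

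The main obstacle is the bookkeeping of the conjugation step: checking that the Blaschke factors recombine exactly into the prescribed $c_k(I)$ at the surviving nodes while both the excised and the ``good'' external nodes acquire coefficients of order $e^{-8\mu(I)|t|}$, together with the uniform-in-$C$ identification, for each $\eta$, of the set $\Lambda$ that must be excised, which rests on the estimate $\mathrm{dist}(z_0,I)=O(|t|^{-1})$ in the cone and on (\ref{theta}). Once these are in place, the small-norm estimate for $E$ is routine given (\ref{normm}).
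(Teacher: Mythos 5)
Your argument is correct and its core is the same as the paper's: choosing $\Lambda$ by the sign of $\mathrm{Re}(2it\theta_k)$ and conjugating $\widetilde m=m\,B^{\sigma_3}$ is precisely the paper's passage to \textbf{problem 2} with $\triangle=\triangle^{\eta}(I)$, and your bound $8|t|\,b_k|a_k-z_0|\ge 8\mu(I)|t|-O(1)$ is the same computation that gives $\Vert\tau_k^{\triangle^{\eta}(I)}\Vert=O(e^{-8\mu(I)|t|})$ in (\ref{asytau}), while your identification $B(z_k)^2=c_k(I)/c_k$ in the reflectionless case matches (\ref{dataI}). The only genuinely different step is the closing comparison with $m(\cdot\,;x,t|D(I))$: the paper converts the exponentially small external residues into exponentially small jumps on small circles $\partial D_k$ and then invokes the standard small-norm Riemann--Hilbert theory for $m_0=\tilde m^{\triangle^\eta(I)}m(\cdot|D(I))^{-1}$, whereas you exploit the absence of continuous spectrum to note that $E=\widetilde m\,m(\cdot|D(I))^{-1}$ is rational, and close with a finite partial-fraction linear system that is a perturbation of the identity of size $O(e^{-8\mu(I)|t|})$. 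Both routes are routine once (\ref{normm}) and the residue smallness are available; yours avoids quoting small-norm machinery at the cost of writing out the linear system (and you should treat the limiting lower-triangular constant at infinity as an additional unknown, since a priori $\widetilde m$ and $m(\cdot|D(I))$ may have different $(2,1)$ entries in their expansions -- a point the paper also glosses over). Finally, your retention of the factor $B(z)^{-\sigma_3}$ in the concluding identity is in fact what the conjugation produces and is consistent with the Blaschke product that resurfaces in (\ref{mout}); the lemma's displayed formula suppresses this $t$-independent factor, so your version is, if anything, the more precise statement of what the proof yields.
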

\begin{proof}
	Let $\triangle^+(I)=\left\lbrace k|Re(z_k)<-v_2/4+\rho\right\rbrace $, $\triangle^-(I)=\left\lbrace k|Re(z_k)>-v_1/4+\rho\right\rbrace $. Then we consider the case that $\triangle$ in \textbf{problem 1} is $\triangle^\eta(I)$, where $\eta=sgn(t)$. For $z\in Z\setminus Z(I)$ and $(x,t)\in C(x_1,x_2,v_1,v_2)$, denote $x=x_0+(v_0+4\rho)t$, where $x_0\in [x_1,x_2]$ and $v_0\in[v_1,v_2]$. Note that the residue coefficients (\ref{tau}) have that
	\begin{equation}
	|\tau_k^{\triangle^\pm(I)}|=|c_ke^{-2x_0Rez}||e^{-8tImz(Rez-\rho-v_0/4)}|.
	\end{equation}
	So it have following asymptotic property
	\begin{equation}
	\parallel\tau_k^{\triangle^\pm(I)}\parallel=O(e^{-8\mu(I)|t|}),\hspace{0.5cm}  t\to \pm\infty.\label{asytau}
	\end{equation}
	$D_k$ is a small disks centred in each $z_k\in Z\setminus Z(I)$ with  radius smaller than $\mu$. Denote $\partial D_k$ is the boundary of $D_k$. Then we can  introduce a new transformation which can remove the poles $z_k\in Z\setminus Z(I)$ and these residues change to near-identity jumps.
	\begin{equation}
	\tilde{m}^{\triangle^\eta(I)}(z;x,t|D)=\left\{\begin{array}{ll}
	m^{\triangle^\eta(I)}(z;x,t|D)\left(I-\frac{\tau^{\triangle^\eta(I)}_k}{z-z_k} \right)  & z\in D_k\\
	m^{\triangle^\eta(I)}(z;x,t|D)\left(I-\frac{\hat{\tau}^{\triangle^\eta(I)}_k}{z-\bar{z}_k} \right)  & z\in \bar{D}_k\\
	m^{\triangle^\eta(I)}(z;x,t|D)  & elsewhere\\
	\end{array}\right.
	\end{equation}
	Comparing with $m^{\triangle^\eta(I)}(z|D)$, the new function has new jump in each $\partial D_k$ which denote by $\tilde{V}(Z)$. Then using (\ref{asytau})  we have
	\begin{equation}
		\parallel\tilde{V}(Z)-I\parallel_{L^\infty(\tilde{\Sigma})}=O(e^{-8\mu(I)|t|}),\hspace{0.5cm}\tilde{\Sigma}=\bigcup_{z_k\in Z\setminus Z(I)}\left( \partial D_k\cup\partial \bar{D}_k\right).\label{asytV}
	\end{equation}
	After transformation $\tilde{m}^{\triangle^\eta(I)}(z|D)$ has same poles and residue conditions as $m(z;x,t|D(I))$. So we denote $m_0(z)=\tilde{m}^{\triangle^\eta(I)}(z|D)m(z|D(I))^{-1}$, which has no poles. And it has jump matrix for $z\in\tilde{\Sigma}$,
	\begin{equation}
	m_0^+(z)=m_0^-(z)V_{m_0}(z),\hspace{0.5cm}V_{m_0}(z)=m(z|D(I))\tilde{V}(Z)m(z|D(I))^{-1}.
	\end{equation}
	From (\ref{asytV}) and (\ref{normm}) applied to $m(z|D(I))$ the theory of small norm \textbf{RHPs} \cite{RN4},\cite{RN5}, we have $m_0(z)$ exists and $m_0(z)=I+O(e^{-8\mu(I)|t|})$ for $t\to \pm\infty$. Then we have the consequence.
\end{proof}
Using reconstruction formula  to $m(z;x,t|D)$ we  immediately have following  corollary.

\begin{corollary}
	Let $q_{sol}(x,t;D)$ and $q_{sol}(x,t;D(I))$ denote the $N$-soliton solution of (\ref{MNLS1}) corresponding to discrete scattering data D and D(I) respectively. And I, $C(x_1,x_2,v_1,v_2)$, $D(I)$ is given above. As $|t|\to\infty$ with $(x,t)\in C(x_1,x_2,v_1,v_2)$, we have
	\begin{equation}
	\lim_{z\to \infty}2iz(m(z;x,t|\sigma_d))_{12}=q_{sol}(x,t;D)=q_{sol}(x,t;D(I))+O(e^{-4\mu(I)|t|}).
	\end{equation}
\end{corollary}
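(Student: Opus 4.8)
The plan is to obtain this corollary as an immediate consequence of the preceding lemma together with the reconstruction formula for the $N$-soliton solution. First I would observe that the first equality in the statement is merely a restatement of the reconstruction formula: in the reflectionless case the partial transmission coefficient $a^\triangle(z)=\prod_{k\in\triangle}\frac{z-z_k}{z-\bar z_k}$ satisfies $a^\triangle(z)=1+O(z^{-1})$ as $z\to\infty$, so right-multiplication by $a^\triangle(z)^{\sigma_3}$ does not alter $\lim_{z\to\infty}2iz[\,\cdot\,]_{12}$; hence $m(z;x,t|\sigma_d)$, $m^\triangle(z;x,t|\sigma_d)$ and $\tilde m^{\triangle^\eta(I)}(z;x,t|\sigma_d)$ all encode the same quantity, namely $q_{sol}(x,t;D)$ when $\sigma_d=D$.

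Next I would feed the factorization furnished by the lemma into this formula. The lemma produces, for $(x,t)\in C(x_1,x_2,v_1,v_2)$ and $|t|\to\infty$, a representation $m(z;x,t|D)=m_0(z)\,m(z;x,t|D(I))$ where $m_0$ solves a small-norm Riemann--Hilbert problem on the compact contour $\tilde\Sigma=\bigcup_{z_k\in Z\setminus Z(I)}(\partial D_k\cup\partial\bar D_k)$ with jump $\tilde V(Z)-I=O(e^{-8\mu(I)|t|})$, and $m_0(z)=I+O(e^{-8\mu(I)|t|})$ uniformly off $\tilde\Sigma$. Since $m_0$ is given as $I$ plus the Cauchy transform over $\tilde\Sigma$ of a density of size $O(e^{-8\mu(I)|t|})$, expanding $(s-z)^{-1}$ for large $z$ shows that $m_0(z)=I+z^{-1}m_0^{(1)}+O(z^{-2})$ with $m_0^{(1)}=O(e^{-8\mu(I)|t|})$; the uniform control needed here is precisely the bound $\|m(z;x,t|D(I))^{\pm1}\|_{L^\infty}\lesssim 1$ already recorded for the model problem.

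Then a routine expansion at $z=\infty$ finishes the argument. Writing $m(z;x,t|D(I))=M_\infty+z^{-1}m_\ast^{(1)}+O(z^{-2})$, where $M_\infty$ is the lower-triangular constant matrix fixed by the normalization (so $(M_\infty)_{12}=0$, $(M_\infty)_{22}=1$) and $2i\,(m_\ast^{(1)})_{12}=q_{sol}(x,t;D(I))$, the $(1,2)$ entry of $m_0(z)\,m(z;x,t|D(I))$ has $1/z$ coefficient $(m_\ast^{(1)})_{12}+(m_0^{(1)})_{12}$. Multiplying by $2iz$ and letting $z\to\infty$,
\begin{equation}
q_{sol}(x,t;D)=q_{sol}(x,t;D(I))+2i\,(m_0^{(1)})_{12}=q_{sol}(x,t;D(I))+O(e^{-8\mu(I)|t|}),
\end{equation}
and since $e^{-8\mu(I)|t|}\le e^{-4\mu(I)|t|}$ for $|t|$ large, the stated estimate $O(e^{-4\mu(I)|t|})$ follows; combining this with the first (definitional) equality completes the proof.

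The corollary is thus essentially immediate; the only place that calls for attention — and so the \emph{main obstacle}, albeit a mild one — is ensuring the exponential smallness of the error survives the passage to the coefficient of $z^{-1}$ at infinity, i.e.\ that one may expand the asymptotics of the lemma term by term rather than only invoke its $L^\infty$ bound. This is handled exactly as above, via the explicit Cauchy-integral representation of $m_0$ over the bounded contour $\tilde\Sigma$ on which $\tilde V(Z)-I=O(e^{-8\mu(I)|t|})$ and the uniform invertibility of the model solution $m(z;x,t|D(I))$; no estimate beyond those already established in the proof of the lemma is needed.
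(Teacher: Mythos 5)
Your argument is correct and follows the same route as the paper, which simply notes that the corollary is immediate from the preceding lemma together with the reconstruction formula $q_{sol}(x,t;D)=\lim_{z\to\infty}2iz(m^\triangle(z;x,t|\sigma_d))_{12}$. Your additional care in extracting the $z^{-1}$ coefficient of $m_0$ from its Cauchy-integral representation over the compact contour $\tilde\Sigma$ (so that the $O(e^{-8\mu(I)|t|})$ bound on $\tilde V(Z)-I$ indeed passes to the residue term and hence to the stated $O(e^{-4\mu(I)|t|})$ error) is exactly the detail the paper leaves implicit.
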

Then we back to the outer model and obtain following corollary.
\begin{corollary}
There exist a unique solution $M^{(out)}$ of \textbf{Riemann-Hilbert problem 6}	with
\begin{align}
M^{(out)}(z)&=m^{\triangle^-_{z_0,\eta}}(z|D^{(out)})\\
&=M(z;x,t|D(I))\prod_{Rez_n\in I_{z_0}^\eta\setminus I}\left( \frac{z-z_n}{z-\bar{z}_n}\right) ^{\sigma_3}+O(e^{-4\mu(I)|t|}),\label{mout}
\end{align}
where $D^{(out)}=\left\lbrace z_k,c_k(z_0)\right\rbrace_{k=1}^N $, $M(z;x,t|D(I))$ is the solution of\textbf{ Problem 1}, with $D(I)=\left\lbrace (z_k,c_k(I))|z_k\in Z(I)\right\rbrace$ and
\begin{equation*}
c_k(z_0)=c_k\exp\left[-\frac{1}{\pi i}\int_{I_{z_0}^\eta}\frac{\log[1+\zeta|r(\zeta)|^2]}{\zeta-z}d\zeta\right].
\end{equation*}
Then substitute (\ref{mout})  into (\ref{normm}) we immediately have
\begin{equation}
\parallel M^{(out)}(z)^{-1}\parallel_{L^\infty(C\setminus(Z\cup\bar{Z}))}\lesssim 1.\label{normmout}
\end{equation}
Moreover, we have reconstruction formula
\begin{equation}
\lim_{z\to \infty}2iz(M^{(out)})_{12}=q_{sol}(x,t;D^{(out)}),\label{expMout}
\end{equation}
where the $q_{sol}(x,t;D^{(out)})$ is the $N$-soliton solution of (\ref{MNLS1}) corresponding to discrete scattering data $D^{(out)}$. And
\begin{equation}
q_{sol}(x,t;D^{(out)})=q_{sol}(x,t;D(I))+O(e^{-4\mu(I)|t|}),\hspace{0.5cm} \text{for }t\to\pm\infty.\label{asyqout}
\end{equation}
\end{corollary}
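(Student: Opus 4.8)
The plan is to recognize \textbf{RHP 6} (the outer model) as a renormalized reflectionless problem of exactly the kind solved in \textbf{Proposition \ref{unim}}, to construct its solution explicitly, and then to read off (\ref{mout})--(\ref{asyqout}) from estimates already in hand. Since \textbf{RHP 6} has no jump across $R$ and only simple poles at $Z\cup\bar{Z}$ with rank-one nilpotent residue data, it is a pure discrete problem; uniqueness is immediate from Liouville's theorem, exactly as in the proof of \textbf{Proposition \ref{unim}} (the quotient of two solutions is a bounded entire matrix normalized to $I$). For existence I would exhibit $m^{\triangle^-_{z_0,\eta}}(z|D^{(out)})$ and verify that it satisfies \textbf{RHP 6}. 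By \textbf{Proposition \ref{proT}(a)}, $T(z)$ is analytic and non-vanishing near each $z_k,\bar{z}_k$ (which lie off the real axis, hence off $I_{z_0}^\eta$), with a simple pole at $z_k$ and a simple zero at $\bar{z}_k$ exactly for $k\in\triangle^-_{z_0,\eta}$; writing $T(z)=\big(\prod_{k\in\triangle^-_{z_0,\eta}}\frac{z-\bar{z}_k}{z-z_k}\big)\delta(z)$, the Blaschke factor performs the pole-splitting of the renormalization $m\mapsto m\,a^{\triangle}(z)^{\sigma_3}$ of \textbf{problem 2}, while $\delta(z)$, analytic and non-vanishing near the discrete spectrum, only rescales the norming constants, sending $c_k$ to $c_k(z_0)$. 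Repeating the residue computation from the proof of \textbf{RHP 2} (now applied to the reflectionless solution $m(z|D^{(out)})$ of \textbf{problem 1} conjugated by $T(z)^{-\sigma_3}$) confirms the residue conditions (\ref{RES11})--(\ref{RES14}), the symmetry, the normalization and the meromorphy demanded by \textbf{RHP 6}, establishing the first line of (\ref{mout}).

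For the second line of (\ref{mout}) I would factor out $\prod_{\mathrm{Re}\,z_n\in I_{z_0}^\eta\setminus I}\big(\frac{z-z_n}{z-\bar{z}_n}\big)^{\sigma_3}$, reducing $m^{\triangle^-_{z_0,\eta}}(z|D^{(out)})$ to a soliton solution carrying only the $N(I)$ poles with $\mathrm{Re}\,z_k\in I$, whose norming constants become exactly the $c_k(I)$ of (\ref{dataI}), since $c_k(I)=c_k(z_0)\prod_{\mathrm{Re}\,z_n\in I_{z_0}^\eta\setminus I}(\frac{z_k-z_n}{z_k-\bar{z}_n})^2$. The poles with $\mathrm{Re}\,z_k\notin I$ are then removed via the preceding Lemma: on the cone $C(x_1,x_2,v_1,v_2)$ their residue coefficients carry the factor $e^{-8t\,\mathrm{Im}(z_k)(\mathrm{Re}(z_k)-\rho-v_0/4)}=O(e^{-8\mu(I)|t|})$, so they deform into a near-identity small-norm jump contributing $O(e^{-4\mu(I)|t|})$ to $M^{(out)}$, which is (\ref{mout}).

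The three remaining assertions are then immediate. Inserting (\ref{mout}) into the uniform bound (\ref{normm}) of \textbf{Proposition \ref{unim}} gives (\ref{normmout}), because the Blaschke product and the exponentially small correction have uniformly bounded inverse on $\mathbb{C}\setminus(Z\cup\bar{Z})$ and $m(z|D(I))^{-1}$ is bounded there. The reconstruction identity (\ref{expMout}) follows since the Blaschke factors tend to $I$ at infinity and only perturb the $O(z^{-1})$ term, so $\lim_{z\to\infty}2iz[M^{(out)}]_{12}$ equals the same limit for $m^{\triangle^-_{z_0,\eta}}(z|D^{(out)})$, which equals $q_{sol}(x,t;D^{(out)})$ because every normalization of the discrete problem encodes the same potential. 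And (\ref{asyqout}) is precisely the preceding Corollary applied with $D^{(out)}$ in place of $D$ and $D(I)$ unchanged.

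The step I expect to be the main obstacle is the residue bookkeeping in the existence argument: one must check that conjugation by $T(z)^{-\sigma_3}$ carries the reflectionless residue data $\{z_k,c_k(z_0)\}$ into exactly the $T$-dressed conditions (\ref{RES11})--(\ref{RES14}), with the correct separation of indices into $\triangle^+_{z_0,\eta}$ and $\triangle^-_{z_0,\eta}$ and the correct powers of $T$, $T'$ and $(1/T)'$ at the split poles. Everything else is a direct citation of \textbf{Proposition \ref{unim}}, the preceding Lemma and Corollary, or a routine large-$z$ expansion.
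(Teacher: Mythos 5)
Your proposal is correct and follows essentially the same route the paper intends: identify $M^{(out)}$ with the renormalized reflectionless solution $m^{\triangle^-_{z_0,\eta}}(z|D^{(out)})$ of \textbf{problem 2} (uniqueness by Liouville, existence via Proposition \ref{unim}), then obtain (\ref{mout}) and (\ref{asyqout}) from the preceding Lemma and Corollary, and (\ref{normmout}), (\ref{expMout}) from (\ref{normm}) and the reconstruction formula. The paper states the corollary without a separate proof, and your write-up supplies exactly the residue-bookkeeping and pole-removal details it leaves implicit.
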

As we can find that  $M(z;x,t|D(I))$ is the solution of  with  reflectionless  scattering data case  $\left\lbrace r\equiv0;\left\lbrace (z_k,c_k(I))|z_k\in Z(I)\right\rbrace\right\rbrace $,  combine (\ref{intq}), (\ref{varphi-rho}) and (\ref{arho}) we have following proposition.
\begin{proposition}\label{Msolrho}
	The unique solution $M(z;x,t|D(I))$ of \textbf{ problem 1} have that
	\begin{align}
	[M(\rho;x,t|D(I))_+]_{11}&=e^{i/2\int_{-\infty}^{x}|q_{sol}(x,t|D(I))|^2dy},\\
	[M(\rho;x,t|D(I))_+]_{12}&=-e^{i/2\int_{-\infty}^{x}|q_{sol}(x,t|D(I))|^2dy}\nonumber\\
	&\int_{-\infty}^{x}q_{sol}(y,t|D(I))e^{-i\int_{-\infty}^{y}|q_{sol}(s,t|D(I))|^2ds}dy,	
	\end{align}
	and  $[M_+(\rho;x,t|D(I))]_{22}=\overline{[M_+(\rho;x,t|D(I))]_{11}}$, $[M_+(\rho;x,t|D(I))]_{21}=-\rho\overline{[M_+(\rho;x,t|D(I))]_{12}}$.
\end{proposition}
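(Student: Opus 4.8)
The plan is to identify $M(z;x,t|D(I))$ with the Riemann--Hilbert matrix that the inverse scattering construction of Section 2 attaches to the genuine $N(I)$-soliton potential $q_{sol}(x,t|D(I))$, and then to evaluate that matrix at the distinguished point $z=\rho$ by substituting $q=q_{sol}(x,t|D(I))$ into the closed forms (\ref{intq}), (\ref{varphi-rho}), (\ref{arho}) for the Jost solutions and the transmission coefficient at $\rho$. For the identification I would argue as follows: $q_{sol}(x,t|D(I))$ solves (\ref{MNLS1}) and, by its very construction, carries the reflectionless scattering data $\{r\equiv 0;\{(z_k,c_k(I))\}_{z_k\in Z(I)}\}$; running the forward transform of Section 2 on it produces Jost solutions $\varphi_\pm$ of (\ref{lax3.1}) and a sectionally meromorphic matrix $M(x,z)$ whose jump across $R$ is the identity (since $r\equiv 0$), whose only poles lie at $Z(I)\cup\overline{Z(I)}$ with residue couplings determined by the $c_k(I)$, and which has the normalization of \textbf{problem 1}. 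These are exactly the defining data of \textbf{problem 1} with $\sigma_d=D(I)$, so by the uniqueness in \textbf{Proposition \ref{unim}} (Liouville together with the partial-fraction form forced by the symmetry), $M(x,z)=M(z;x,t|D(I))$; in particular $M(z;x,t|D(I))$ is assembled from $\varphi_\pm$ and $a$ of $q_{sol}(x,t|D(I))$ exactly as in the definition of $M(x,z)$ in Section 2.

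Granting this, I would evaluate at $z=\rho$ (writing $M_+(\rho)$ for $M(\rho;x,t|D(I))_+$). Since $\rho\in R$ is not a pole of $M(\cdot|D(I))$ and, because $r\equiv 0$, not a point of discontinuity across $R$, the boundary value $M_+(\rho)$ is just the common value of the two sectional branches there. Taking the $D^+$ branch of the definition of $M(x,z)$ gives $[M_+(\rho)]_{11}=a(\rho)^{-1}\varphi_+^1(x,t,\rho)$ and $[M_+(\rho)]_{12}=\hat\varphi_-^1(x,t,\rho)$, so the first row is governed by $\varphi_+$ and by the second column of $\varphi_-$ at $\rho$. Substituting $q=q_{sol}(x,t|D(I))$ into (\ref{intq}) yields $\varphi_+^1(x,t,\rho)=e^{-\frac{i}{2}\int_x^{+\infty}|q_{sol}|^2\,dy}$, substituting into (\ref{arho}) yields $a(\rho)^{-1}=e^{\frac{i}{2}\int_{-\infty}^{+\infty}|q_{sol}|^2\,dy}$, and multiplying together with the identity $\int_{-\infty}^{+\infty}-\int_x^{+\infty}=\int_{-\infty}^x$ produces the asserted expression for $[M_+(\rho)]_{11}$; the $(1,2)$ entry of (\ref{varphi-rho}) is read off directly to produce the asserted expression for $[M_+(\rho)]_{12}$.

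The remaining two identities are not independent computations: $[M_+(\rho)]_{22}=\overline{[M_+(\rho)]_{11}}$ and $[M_+(\rho)]_{21}=-\rho\,\overline{[M_+(\rho)]_{12}}$ are exactly the symmetry reduction built into \textbf{problem 1}, namely $m_{22}(z)=\overline{m_{11}(\bar z)}$ and $m_{21}(z)=-z\overline{m_{12}(\bar z)}$, evaluated at the real point $z=\rho$ where $\bar z=\rho$; this symmetry is part of the data that $M(\cdot|D(I))$ satisfies. The only genuinely delicate step is the identification in the first paragraph — in particular checking that the norming constants in the residue conditions of \textbf{problem 1} with $\sigma_d=D(I)$ agree with those of the Jost matrix of $q_{sol}(x,t|D(I))$ — and this is essentially the defining property of $q_{sol}(\cdot|D(I))$ through the reconstruction formula; once it is in hand, what is left is a direct substitution into (\ref{intq}), (\ref{varphi-rho}), (\ref{arho}).
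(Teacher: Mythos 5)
Your proposal is correct and follows essentially the same route as the paper: identify $M(z;x,t|D(I))$ with the inverse-scattering matrix of the reflectionless potential $q_{sol}(x,t|D(I))$, then evaluate at $z=\rho$ by substituting into (\ref{intq}), (\ref{varphi-rho}) and (\ref{arho}), with the $(2,1)$ and $(2,2)$ entries following from the symmetry of \textbf{problem 1} at the real point $\rho$. The identification step you flag as delicate is exactly what the paper also takes for granted, so no gap relative to the paper's own argument.
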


\section{Local model Riemann-Hilbert problem near $z_0$}
\quad From \textbf{proposition} \ref{pro3v2} we find that $V^{(2)}-I$ doesn't have a uniformly small jump for large time. So in the neighborhood $U_{z_0}$ of $z_0$, we establish local model to arrive at a uniformly small jump Riemann-Hilbert problem for function $E(Z)$. Notice that $\forall p \in Z\cup\bar{Z}$, $p\notin U_{z_0}$. Let $\xi=\xi(z)$ denote the local variable
\begin{equation}
\xi(z)=\sqrt{|8t|}(z-z_0).
\end{equation}
Then we have
\begin{equation*}
2t\theta=\xi^2/2-4t(z_0-\rho)^2,\hspace{0.5cm}(\eta(z-z_0))^{2i\eta k(z_0)}=(\eta\xi)^{2i\eta k(z_0)}e^{-i\eta k(z_0)log|8t|}.
\end{equation*}
Let
\begin{equation}
r_{z_0}=-r(z_0)T_0(z_0)^2e^{-i\eta k(z_0)log|8t|}e^{4ti(z_0-\rho)^2},\hspace{0.5cm}s_{z_0}=z\bar{r}_{z_0},
\end{equation}
then we have $1+r_{z_0}s_{z_0}=1+z|r(z_0)|^2\neq 0$. And since $1-X_Z(z)=1$ for $z\in U_{z_0}$, the jump matrix $V^{(2)}$ limit in $U_{z_0}$ denoted as $V^{(pc)}$ has become
\begin{equation}
V^{(pc)}=\left\{\begin{array}{llll}
\left(\begin{array}{cc}
1 & 0\\
s_{z_0}(\eta\xi)^{-2i\eta k(z_0)}e^{\xi^2i/2} & 1
\end{array}\right) & z\in \Sigma_1\\
\left(\begin{array}{cc}
1 & \dfrac{r_{z_0}}{1+r_{z_0}s_{z_0}}(\eta\xi)^{2i\eta k(z_0)}e^{-\xi^2i/2}\\
0 & 1
\end{array}\right)  &z\in \Sigma_2\\
\left(\begin{array}{cc}
1 & 0\\
\dfrac{s_{z_0}}{1+r_{z_0}s_{z_0}}(\eta\xi)^{-2i\eta k(z_0)}e^{\xi^2i/2} & 1
\end{array}\right)  &z\in \Sigma_3\\
\left(\begin{array}{cc}
1 & r_{z_0}(\eta\xi)^{2i\eta k(z_0)}e^{-\xi^2i/2}\\
0 & 1
\end{array}\right)  &z\in \Sigma_4\\
\end{array}\right..
\end{equation}
Then we have following \textbf{RHP 6 } which doesn't possess the symmetry condition shared by \textbf{RHP 4}, because we only use this model for  bounded values z.

\textbf{Parabolic Cylinder Model Riemann-Hilbert problem 6} Find an analytic function $M^{(pc)}(z;z_0,\eta)$: $C\setminus\Sigma^{(2)}\to SL_2(C)$ such that

$\bullet$ Analyticity: $M^{pc}(z;z_0,\eta)$ is meromorphic in $\mathbb{C}\setminus (\Sigma^{(2)}\cup Z\cup \bar{Z})$;

$\bullet$ Jump condition: $M^{pc}$ has continuous boundary values $M^{pc}_\pm$ on $R$ and
\begin{equation}
M^{pc}_+(z;z_0,\eta)=M^{pc}_-(z;z_0,\eta)V^{(pc)}(z;z_0,\eta),\hspace{0.5cm}z \in \Sigma^{(2)};\label{jumppc}
\end{equation}

$\bullet$ Asymptotic behaviors:
\begin{align}
&M^{pc}(z;z_0,\eta) \sim I+O(z^{-1}),\hspace{0.5cm}z \rightarrow \infty.
\end{align}
The precise details of the construction for this solution, which differ only slightly from the construction for KdV\cite{RN6}. In fact,This type of model problem is typical in integrable systems whenever there is a phase function. Here in our system is $\theta$, which has a quadratic critical point along the real line.  Here we only give the necessary details, and to arrive our consequence, we also need its boundedness property shown in following \textbf{Lemma} which proof can be find in \cite{RN8},Appendix D.

\begin{proposition}\label{Mpc}
The solution of \textbf{RHP 6 }is given as follow:
\begin{align}
M^{pc}(z;z_0,+)=F(\xi(z);r_{z_0},s_(z_0)),\nonumber\\
M^{pc}(z;z_0,-)=\sigma_2F(-\xi(z);r_{z_0},s_(z_0))\sigma_2,\label{mpc}
\end{align}
where $\kappa=k(z_0)$
\begin{equation}
F(\xi;r,s)=\Phi(\xi;r,s)P(\xi;r,s)e^{i\xi^2\sigma_3/4}\xi^{-i\kappa\sigma_3}
\end{equation}
\begin{equation}
P(\xi;r,s)=\left\{\begin{array}{llll}
\left(\begin{array}{cc}
1 & 0\\
s & 1
\end{array}\right) & \xi\in \Omega_1\\
\left(\begin{array}{cc}
1 & \dfrac{r}{1+rs}\\
0 & 1
\end{array}\right)  & \xi\in \Omega_3\\
\left(\begin{array}{cc}
1 & 0\\
\dfrac{-s}{1+rs} & 1
\end{array}\right)  & \xi\in \Omega_4\\
\left(\begin{array}{cc}
1 & -r\\
0 & 1
\end{array}\right)  & \xi\in \Omega_6\\
I  &z\in \Omega_2\cup\Omega_5\\
\end{array}\right.,
\end{equation}
\begin{equation}
\Phi(\xi;r,s)=\left\{\begin{array}{llll}
\left(\begin{array}{cc}
e^{-3\pi\kappa/4}D_{i\kappa}(\xi e^{-3i\pi/4}) & -i\beta_{12}e^{\pi(\kappa-i)/4}D{-i\kappa-1}(\xi e^{-i\pi/4})\\
i\beta_{21}e^{-3\pi(\kappa+i)/4}D{i\kappa-1}(\xi e^{-3i\pi/4}) & e^{\pi\kappa/4}D_{-i\kappa}(\xi e^{-i\pi/4})
\end{array}\right) & \xi\in C^+\\
\left(\begin{array}{cc}
e^{\pi\kappa/4}D_{i\kappa}(\xi e^{i\pi/4}) & -i\beta_{12}e^{-3\pi(\kappa-i)/4}D{-i\kappa-1}(\xi e^{3i\pi/4})\\
i\beta_{21}e^{\pi(\kappa+i)/4}D{i\kappa-1}(\xi e^{i\pi/4}) & e^{-3\pi\kappa/4}D_{-i\kappa}(\xi e^{3i\pi/4})
\end{array}\right)  & \xi\in C^-\\
\end{array}\right..
\end{equation}	
Here $D_a(a)$ denote the parabolic cylinder functions, and $\beta_{21}$ and $\beta_{12}$ are complex constants
\begin{equation}
\beta_{12}=\beta_{12}(r,s)=\dfrac{\sqrt{2\pi}e^{-\kappa\pi/2}e^{i\pi/4}}{s\Gamma(-i\kappa)},\hspace{0.5cm}\beta_{21}=\beta_{21}(r,s)=\dfrac{-\sqrt{2\pi}e^{-\kappa\pi/2}e^{-i\pi/4}}{r\Gamma(i\kappa)}.
\end{equation}
Then we consider the asymptotic behavior of the solution. When $\xi\to\infty$,
\begin{equation}
F(\xi;r,s)=I+\dfrac{1}{\xi}\left(\begin{array}{cc}
0 & -i\beta_{12}(r,s)\\
i\beta_{21}(r,s) & 0
\end{array}\right)+O(\xi^{-2}),
\end{equation}
from which we have
\begin{equation}
M^{pc}(z;z_0,\eta)=I+\frac{|8t|^{-1/2}}{z-z_0}\left(\begin{array}{cc}
0 & -iA_{12}(z_0,\eta)\\
iA_{21}(z_0,\eta) & 0
\end{array}\right)+O(|t|^{-1}),\hspace{0.5cm}z\in \partial U_{z_0},\label{asympc}
\end{equation}
with
\begin{align}
A_{12}(z_0,+)=\beta_{12}(r_{z_0},s_{z_0}),\hspace{0.5cm}A_{21}(z_0,+)=\beta_{21}(r_{z_0},s_{z_0}),\nonumber\\
A_{12}(z_0,-)=-\beta_{21}(r_{z_0},s_{z_0}),\hspace{0.5cm}A_{12}(z_0,-)=-\beta_{12}(r_{z_0},s_{z_0}),
\end{align}
satisfying following conditions
\begin{align*}
&|A_{12}(z_0,\eta)|^2=\dfrac{k(z_0)}{z_0},\hspace{0.5cm}A_{12}(z_0,\eta)=z_0\overline{A_{12}(z_0,\eta)},\\
arg\left(A_{12}(z_0,+) \right) =&\frac{\pi}{4}+arg\Gamma(ik(z_0))-arg(-z_0\overline{r(z_0)})+\frac{1}{\pi}\int_{-\infty}^{z_0}log|s-z_0|dslog(1+z|r(z)|^2)\\
&-k(z_0)log|8t|+4tz_0^2,\\
arg\left(A_{12}(z_0,-) \right) =&\frac{\pi}{4}-arg\Gamma(ik(z_0))-arg(-z_0\overline{r(z_0)})+\frac{1}{\pi}\int_{z_0}^{\infty}log|s-z_0|dslog(1+z|r(z)|^2)\\
&+k(z_0)log|8t|+4tz_0^2.
\end{align*}
And the $\infty$-norm of $M^{pc}$ has
\begin{equation}
\parallel M^{pc}(z;z_0,\eta)\parallel_{\infty}\lesssim 1,\hspace{0.5cm}\parallel M^{pc}(z;z_0,\eta)^{-1}\parallel_{\infty}\lesssim 1.
\end{equation}
\end{proposition}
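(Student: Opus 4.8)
The plan is to follow the standard construction of the parabolic-cylinder parametrix: pull \textbf{RHP 6} back to the rescaled plane, strip off the oscillatory and power factors so that the remaining Riemann--Hilbert problem has \emph{constant} jumps on the four rays, solve that problem in closed form via Weber's (parabolic cylinder) equation, and then read off the $\xi\to\infty$ expansion together with the uniform bounds.

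First I would set $\xi=\sqrt{|8t|}(z-z_0)$ and use the two identities recorded just above the statement, $2t\theta=\xi^2/2-4t(z_0-\rho)^2$ and $(\eta(z-z_0))^{2i\eta k(z_0)}=(\eta\xi)^{2i\eta k(z_0)}e^{-i\eta k(z_0)\log|8t|}$, which move all of the $z$-dependence of $V^{(pc)}$ into $\xi$ once the constants are absorbed into $r_{z_0},s_{z_0}$. Writing $F=\Phi\,P\,e^{i\xi^2\sigma_3/4}\xi^{-i\kappa\sigma_3}$ with the piecewise-constant $P$ from the statement, one checks directly that the jump of $\Phi$ across each ray $\Sigma_k$, and on the branch cut of $\xi^{-i\kappa\sigma_3}$, is a \emph{constant} triangular matrix; this turns the problem for $\Phi$ into a RHP on a single ray with constant jump and normalization $\Phi(\xi)\,\xi^{i\kappa\sigma_3}e^{-i\xi^2\sigma_3/4}\to I$ as $\xi\to\infty$. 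Because the jump is constant, $\Phi_\xi\Phi^{-1}$ is entire, and the normalization forces it to be linear: $\Phi_\xi=\bigl(\tfrac{i}{2}\xi\sigma_3+B\bigr)\Phi$ with $B$ constant and off-diagonal. Eliminating one component shows each column solves Weber's equation, so the entries of $\Phi$ are linear combinations of $D_{i\kappa},D_{-i\kappa},D_{i\kappa-1},D_{-i\kappa-1}$; imposing the prescribed sectorial behaviour and the connection (Stokes) relations for $D_a$ across consecutive rays fixes the combination uniquely and produces $\beta_{12},\beta_{21}$ in the $\sqrt{2\pi}/\Gamma(\mp i\kappa)$ form. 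The $\eta=-1$ solution then follows from the $\eta=+1$ one via $M^{pc}(z;z_0,-)=\sigma_2F(-\xi;r_{z_0},s_{z_0})\sigma_2$, checked by tracking how $\xi\mapsto-\xi$ permutes the rays and how conjugation by $\sigma_2$ swaps the factorizations; uniqueness of the parametrix is the usual Liouville argument since $\det M^{pc}\equiv1$ and $M^{pc}=I+O(\xi^{-1})$.

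Next, the classical large-$\xi$ asymptotics of $D_a$ give $F(\xi;r,s)=I+\xi^{-1}\bigl(\begin{smallmatrix}0&-i\beta_{12}\\ i\beta_{21}&0\end{smallmatrix}\bigr)+O(\xi^{-2})$, and substituting $\xi=\sqrt{|8t|}(z-z_0)$ yields (\ref{asympc}) with $A_{12},A_{21}$ equal to $\beta_{12},\beta_{21}$ evaluated at $(r_{z_0},s_{z_0})$ (up to the $\sigma_2$-induced sign swaps when $\eta=-1$). To obtain the stated $|A_{12}|$ and $\arg A_{12}$ I would use $\Gamma(-i\kappa)=\overline{\Gamma(i\kappa)}$, $\Gamma(i\kappa)\Gamma(-i\kappa)=\pi/(\kappa\sinh\pi\kappa)$ and $1+z_0|r(z_0)|^2=e^{-2\pi k(z_0)}$ together with $1+r_{z_0}s_{z_0}=1+z_0|r(z_0)|^2$ to collapse $\beta_{12}\beta_{21}$ to $k(z_0)$, hence $|A_{12}|^2=k(z_0)/z_0$ in view of the reality relation $A_{21}(z_0,\eta)=z_0\overline{A_{12}(z_0,\eta)}$, which is itself automatic from $s_{z_0}=z_0\bar r_{z_0}$ and the structure of $\beta_{12},\beta_{21}$. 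For the phase, I would accumulate the contributions of $\arg(-z_0\overline{r(z_0)})$, of $\arg\Gamma(ik(z_0))$, of $|8t|^{\mp i\eta k(z_0)}$ (the $\mp k(z_0)\log|8t|$ term), of $e^{4ti(z_0-\rho)^2}$ (the $4tz_0^2$ term up to the harmless constant shift), and of $T_0^2$, where $\arg T_0$ supplies the Blaschke-product angles and, through $\beta(z_0,z_0,\eta)$ after integrating by parts to turn $\int k(s)(s-z_0)^{-1}\,ds$ into $\tfrac1{2\pi}\int\log|s-z_0|\,d\log(1+s|r(s)|^2)$, exactly the integral term displayed in the statement.

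Finally, the bounds $\|M^{pc}\|_\infty\lesssim1$ and $\|(M^{pc})^{-1}\|_\infty\lesssim1$ follow because $M^{pc}$ is explicit, entire in each sector, and has $\det M^{pc}\equiv1$: it is bounded on compacts, while off $\Sigma^{(2)}$ near $\infty$ the factor $e^{i\xi^2\sigma_3/4}\xi^{-i\kappa\sigma_3}$ is cancelled precisely by the oscillatory/power part of the $D_a$ asymptotics so that $M^{pc}\to I$, and control of $M^{pc}$ controls $(M^{pc})^{-1}$ since the determinant is $1$. I expect the main obstacle to be bookkeeping rather than anything conceptual — getting every phase in $\arg A_{12}(z_0,\pm)$ right, in particular reconciling the $\Gamma$-function phases from the connection formulas with $\arg T_0$ and with the sign $\eta$ (which enters both the ordering of the rays and the exponent of $|8t|$), and matching the $\eta=\pm1$ cases consistently. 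For the detailed parabolic-cylinder computations and the precise boundedness estimate I would refer to \cite{RN6} and \cite{RN8}, Appendix D, as the text indicates.
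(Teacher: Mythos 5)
Your proposal is correct and follows essentially the same route as the paper, which itself does not carry out the construction but points to the standard parabolic-cylinder parametrix of Deift--Zhou \cite{RN6} and the boundedness estimate in \cite{RN8}, Appendix D: rescaling to $\xi$, reducing to a constant-jump problem so that $\Phi_\xi\Phi^{-1}$ is linear, solving via Weber's equation and the connection formulas for $D_a$, obtaining the $\eta=-1$ case by $\sigma_2$-conjugation, and reading off the $O(\xi^{-1})$ term and the modulus/phase of $A_{12}$ from $\beta_{12}\beta_{21}=\kappa$ and the accumulated phases. The remaining work in your sketch is indeed only bookkeeping, consistent with what the cited references supply.
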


Using (\ref{mpc}) we can define the local model
\begin{equation}
M^{(z_0)}(z)=M^{(out)}(z)M^{pc}(z;z_0,\eta),\hspace{0.5cm}z\in U_{z_0},
\end{equation}
which have the same jump matrix $V^{(2)}$ as $M^{RHP}$ from $M^{pc}$ and same residue conditions as $M^{RHP}$ from $M^{(out)}$. So we have that it is a bounded function in $U_{z_0}$.

\section{The small norm Riemann-Hilbert problem for E(z)}
\quad In this subsection we consider the error $E(z)$. From the definition of it (\ref{transm4}) and the analyticity of $M^{(out)}$ and $M^{(z_0)}$ we can obtain that $E(z)$ is analytic in $C\setminus \Sigma^{(E)}$, where
\begin{equation*}
 \Sigma^{(E)}=\partial U_{z_0}\cup (\Sigma^{(2)}\setminus U_{z_0}).
\end{equation*}
We will proof that for large times, the error $E(z)$ solves following small norm Riemann-Hilbert problem which we can expand asymptotically.

\textbf{Riemann-Hilbert problem 7}. Find a matrix-valued function $E(z)$: z$\in C \rightarrow m^\triangle(z;x,t|\sigma_d)$ with following properties:

$\bullet$ Analyticity: $E(z)$ is meromorphic in $\mathbb{C}\setminus (\Sigma^{(E)}\cup Z\cup \bar{Z})$;

$\bullet$ Symmetry: $E_{22}(z)=\overline{E_{11}(\bar{z})}$, $E_{21}(z)=-z\overline{E_{12}(\bar{z})}$, which means $E(z)=z^{\sigma_3/2}\sigma_2\overline{E(\bar{z})}\sigma_2^{-1}z^{-\sigma_3/2}$;

$\bullet$ Asymptotic behaviors:
\begin{align}
&E(z) \sim \left(\begin{array}{cc}
1 & 0\\
\alpha_E & 0
\end{array}\right)+O(z^{-1}),\hspace{0.5cm}|z| \rightarrow \infty
\end{align}
for a constant $\alpha_E$ determined by the symmetry condition above;

$\bullet$ Jump condition: $E$ has continuous boundary values $E_\pm$ on $\Sigma^{(E)}$ satisfy $E_+(z)=E_-(z)V^{(E)}$, where
\begin{equation}
V^{(E)}=\left\{\begin{array}{llll}
M^{(out)}(z)V^{(2)}(z)M^{(out)}(z)^{-1} & \xi\in \Sigma^{(2)}\setminus U_{z_0}\\
M^{(out)}(z)M^{pc}(z)^{-1}M^{(out)}(z)^{-1}  & \xi\in \partial U_{z_0}\\
\end{array}\right..\label{deVE}
\end{equation}

From \textbf{Proposition \ref{pro3v2}}, (\ref{asympc}), (\ref{normmout}), the jump matrix have
\begin{equation}
|V^{(E)}-I|=\left\{\begin{array}{llll}
e^{-4|t(z-z_0)^2|} & \xi\in \Sigma^{(2)}\setminus U_{z_0}\\
|t|^{-1/2}  & \xi\in \partial U_{z_0}\\
\end{array}\right.,\label{VE-I}
\end{equation}
which implies that for $k\in N$, $p\leq 1$
\begin{equation}
\parallel V^{(E)}-I\parallel_{L^{p,k}(R)\cup L^\infty(R)}=O(|t|^{-1/2}).\label{VE}
\end{equation}

This uniformly approach to zero as $|t| \to \infty$ of $V^{(E)}-I$ make \textbf{Riemann-Hilbert problem 7} are small-norm Riemann-Hilbert problem, for which there is a well known existence and uniqueness theorem\cite{RN5,RN9,RN10}. Then we have following Lemma.
\begin{lemma}
	There exists a unique solution $E(z)$ of \textbf{RHP 7} stratifies
	\begin{equation}
	\parallel E-\left(\begin{array}{cc}
	1 & 0\\
	\alpha_E & 0
	\end{array}\right)\parallel_{L^\infty(C\setminus\Sigma^{(E)})}\lesssim |t|^{-1/2}.
	\end{equation}
	And when $z\to \infty$,
	\begin{equation}
	E(z) \sim \left(\begin{array}{cc}
	1 & 0\\
	\alpha_E & 0
	\end{array}\right)+\frac{E_1}{z}+O(z^{-2}),\label{asyzE}
	\end{equation}
	$\bar{\alpha}_E=(E_1)_{12}$ and it has
	\begin{equation}
	2i(E_1)_{12}=|t|^{-1/2}f(x,t) +O(|t|^{-1}),\label{expE}
	\end{equation}
	where
	\begin{equation}
	f(x,t)=2^{-1/2}\left[ A_{12}(z_0,\eta)M_{11}^{(out)}(z_0)^2+A_{21}(z_0,\eta)M_{12}^{(out)}(z_0)^2\right].\label{f(x,t)}
	\end{equation}
\end{lemma}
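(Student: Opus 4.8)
\emph{Plan.} The argument is a standard small-norm estimate followed by a localization computation near $z_0$, the only genuinely new part being the evaluation of the $1/z$-coefficient $E_1$. First I would invoke (\ref{VE}): since $\|V^{(E)}-I\|$ is $O(|t|^{-1/2})$ in the relevant $L^1\cap L^2\cap L^\infty$ norms on $\Sigma^{(E)}$, for $|t|$ large \textbf{RHP 7} is a small-norm Riemann--Hilbert problem, so the standard theory of \cite{RN4,RN5} gives a unique solution $E$. I would note that $E$ is pole-free: $M^{RHP}$ and $M^{(out)}$ share the same residue relations on $Z\cup\bar Z$, so $E=M^{RHP}(M^{(out)})^{-1}$ has only removable singularities there, exactly as for \textbf{$\bar\partial$-problem 5}; and that the symmetry of $E$ descends from that of $V^{(E)}$ (a conjugation by the symmetric $M^{(out)}$ of $V^{(2)}$, resp.\ of $(M^{pc})^{-1}$) together with uniqueness. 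Writing $E_\infty$ for the constant leading matrix (whose first row is $(1,0)$), the solution has the representation
\[
E(z)=E_\infty+\frac{1}{2\pi i}\int_{\Sigma^{(E)}}\frac{E_-(s)\left(V^{(E)}(s)-I\right)}{s-z}\,ds,
\]
together with the resolvent estimate $\|E_- -E_\infty\|_{L^2(\Sigma^{(E)})}=O(|t|^{-1/2})$ coming from the invertibility of $1-C_{V^{(E)}}$; feeding this back into the representation yields $\|E-E_\infty\|_{L^\infty(\mathbb C\setminus\Sigma^{(E)})}=O(|t|^{-1/2})$, which is the first assertion.

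\emph{Expansion at infinity.} Next I would expand $(s-z)^{-1}=-z^{-1}-sz^{-2}-\dots$ in the representation above to obtain (\ref{asyzE}) with $E_1=-\tfrac{1}{2\pi i}\int_{\Sigma^{(E)}}E_-(s)(V^{(E)}(s)-I)\,ds$, and I would read off $\bar\alpha_E=(E_1)_{12}$ by matching the $z^0$ and $z^{-1}$ coefficients on the two sides of the symmetry relation $E_{21}(z)=-z\overline{E_{12}(\bar z)}$.

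\emph{Localization near $z_0$.} Then I would split $\Sigma^{(E)}=\partial U_{z_0}\cup(\Sigma^{(2)}\setminus U_{z_0})$; by (\ref{VE-I}) the integral over $\Sigma^{(2)}\setminus U_{z_0}$ contributes only $O(e^{-ct})$ to $E_1$. On $\partial U_{z_0}$, (\ref{deVE}) and the expansion (\ref{asympc}) of $M^{pc}$ (equivalently $(M^{pc})^{-1}=I-\tfrac{|8t|^{-1/2}}{z-z_0}\left(\begin{smallmatrix}0&-iA_{12}\\ iA_{21}&0\end{smallmatrix}\right)+O(|t|^{-1})$) give $V^{(E)}(z)-I=\tfrac{|8t|^{-1/2}}{z-z_0}g(z)+O(|t|^{-1})$ with $g(z)=M^{(out)}(z)\left(\begin{smallmatrix}0& iA_{12}(z_0,\eta)\\ -iA_{21}(z_0,\eta)&0\end{smallmatrix}\right)(M^{(out)}(z))^{-1}$, which is analytic and bounded on $U_{z_0}$ because $z_0\notin Z\cup\bar Z$ and $M^{(out)}$ carries no jump on $\Sigma^{(2)}$. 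I would then replace $E_-$ by $E_\infty$ in the integrand (an $O(|t|^{-1})$ error, by Cauchy--Schwarz from the two $L^2$-bounds of size $O(|t|^{-1/2})$), discard the $O(|t|^{-1})$ tail of $V^{(E)}-I$, and apply Cauchy's integral formula with $\partial U_{z_0}$ positively oriented:
\[
E_1=-\frac{|8t|^{-1/2}}{2\pi i}\int_{\partial U_{z_0}}\frac{E_\infty g(s)}{s-z_0}\,ds+O(|t|^{-1})=-|8t|^{-1/2}E_\infty g(z_0)+O(|t|^{-1}).
\]

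\emph{Extracting $(E_1)_{12}$, and the hard part.} Finally, since the first row of $E_\infty$ is $(1,0)$, I would get $(E_1)_{12}=-|8t|^{-1/2}g(z_0)_{12}+O(|t|^{-1})$; writing $M^{(out)}(z_0)=\left(\begin{smallmatrix}a&b\\ c&d\end{smallmatrix}\right)$ with $ad-bc=1$, a short $2\times2$ multiplication yields $g(z_0)_{12}=i(a^2A_{12}(z_0,\eta)+b^2A_{21}(z_0,\eta))$, so that, using $|8t|^{-1/2}=2^{-3/2}|t|^{-1/2}$, $2i(E_1)_{12}=2|8t|^{-1/2}(A_{12}(z_0,\eta)M^{(out)}_{11}(z_0)^2+A_{21}(z_0,\eta)M^{(out)}_{12}(z_0)^2)+O(|t|^{-1})=|t|^{-1/2}f(x,t)+O(|t|^{-1})$, which is (\ref{expE})--(\ref{f(x,t)}). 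The main obstacle is the bookkeeping in the localization step: producing the quantitative $L^2$-bound $\|E_- -E_\infty\|_{L^2}=O(|t|^{-1/2})$ from the resolvent of $1-C_{V^{(E)}}$ and combining it with the $L^2$-smallness of $V^{(E)}-I$ to legitimize the two replacements, and keeping the orientation of $\partial U_{z_0}$ and the exact phase of the off-diagonal entry in (\ref{asympc}) consistent so that the constant in $f(x,t)$ comes out precisely; both are routine once \textbf{Proposition \ref{Mpc}} is in hand.
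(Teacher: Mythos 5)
Your proposal is correct and follows essentially the same route as the paper: the small-norm machinery based on $\|V^{(E)}-I\|=O(|t|^{-1/2})$, a Cauchy (Beals--Coifman type) representation giving both the $L^\infty$ bound and the formula for $E_1$, localization of $E_1$ to $\partial U_{z_0}$ through the expansion of $M^{pc}$, and the explicit conjugation by $M^{(out)}(z_0)$ producing $f(x,t)$ with the same constant $2|8t|^{-1/2}=2^{-1/2}|t|^{-1/2}$. The only presentational difference is that the paper, since the normalization at infinity involves the unknown constant $\alpha_E$, builds $E$ row by row (first row normalized to $(1,0)$, then $\alpha_E$ read off via the symmetry, then the second row), a point you absorb into your matrix-form representation without affecting the leading-order computation, which uses only the first row.
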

	\begin{proof}
		Because \textbf{RHP 7} isn't the standard Riemann-Hilbert problem, we need to construct the solution $E(z)$ row-by-row. For the first row, we denote $e^1=(E_{11},E_{12})$, which have following property
		
		1. $e^1\sim$(1, 0) as $z\to\infty$;
		
		2.$e^1$ has continuous boundary values $e^1_\pm$ on $\Sigma^{(E)}$ satisfy $e^1_+(z)=e^1_-(z)V^{(E)}$.
		
		Then by Plemelj formula we have
		\begin{equation}
		e^1(z)=(1, 0)+\frac{1}{2\pi i}\int_{\Sigma^{(E)}}\dfrac{\left( (1,0)+\mu_1(s)\right) (V^{(E)}-I)}{s-z}ds,\label{e^1}
		\end{equation}
		where the $\mu_1\in L^2(\Sigma^{(E)})$ is the unique solution of following equation:
		\begin{equation}
		(1-C_E)\mu_1=C_E\left((1,0) \right).
		\end{equation}
		 $C_E$ is a integral operator defined as
		 \begin{equation}
		 C_E(f)(z)=C_-\left( f(V^{(E)}-I)\right) ,
		 \end{equation}
		 where the $C_-$ is the usual Cauchy projection operator on $\Sigma^{(E)}$
		 \begin{equation}
		 C_-(f)(s)=\lim_{z\to \Sigma^{(E)}_-}\frac{1}{2\pi i}\int_{\Sigma^{(E)}}\dfrac{f(s)}{s-z}ds.
		 \end{equation}
		 Then by (\ref{VE}) we have
		 \begin{equation}
		 \parallel C_E\parallel\leq\parallel C_-\parallel \parallel V^{(E)}-I\parallel_\infty \lesssim O(t^{-1/2}),
		 \end{equation}
		 which means for sufficiently large t, $\parallel C_E\parallel<1$. Then we have 1-$C_E$ is invertible, so the $\mu_1$ is unique existence. Moreover,
		 \begin{equation}
		 \parallel \mu_1\parallel_{L^2(\Sigma^{(E)})}\lesssim\dfrac{\parallel C_E\parallel}{1-\parallel C_E\parallel}\lesssim|t|^{-1/2}.\label{mu1}
		 \end{equation}
		 Then we can proof that $e^1$ nearly identity.
		 \begin{equation}
		 |e^1(z)-(1,0)|\leq|\frac{1}{2\pi i}\int_{\Sigma^{(E)}}\dfrac{ (1,0) (V^{(E)}-I)}{s-z}ds|+|\frac{1}{2\pi i}\int_{\Sigma^{(E)}}\dfrac{\mu_1(s) (V^{(E)}-I)}{s-z}ds|.
		 \end{equation}
		 When there exist a constant $d>0$ such that $\inf_{z\in\Sigma^{(E)}}|s-z|>d$, we have
		 \begin{equation}
		 |e^1(z)-(1,0)|\leq\dfrac{1}{2\pi d}\left(\parallel V^{(E)}-I\parallel _{L^1}+\parallel \mu_1\parallel _{L^2}\parallel V^{(E)}-I\parallel _{L^2} \right)\lesssim|t|^{-1/2}.
		 \end{equation}
		 And for z approaching to $\Sigma^{(E)}$, because the jump matrix on the contours $\Sigma^{(E)}$ are locally analytic, we can make a invertible transformation$e^1\to \tilde{e}^1$( for example,  inversion transformation of a  circle center at $z_0$ with radius of $\mu/6$), which transform $\Sigma^{(E)}$ to a new contours $\tilde{\Sigma}^{(E)}$ with different points of self-intersection. Similarly we have $|\tilde{e}^1(z)-(1,0)|$ is bounded on $\Sigma^{(E)}$, then we obtain the boundedness of $e^1$.
		
		 Then we consider the second row $e^2=(E_{21},E_{22})$, similarly it have following property
		
		 1. $e^2\sim$(1, $\alpha_E$) as $z\to\infty$;
		
		 2.$e^2$ has continuous boundary values $e^1_\pm$ on $\Sigma^{(E)}$ satisfy $e^2_+(z)=e^2_-(z)V^{(E)}$,
		
		 where from the symmetry $E_{21}(z)=-z\overline{E_{12}(\bar{z})}$ and (\ref{e^1}) making $z\to\infty$ we obtain
		 \begin{equation}
		 \alpha_E=-\left[ \frac{1}{2\pi i}\int_{\Sigma^{(E)}}\left( (1,0)+\mu_1(s)\right) (V^{(E)}-I)ds\right] _2,
		 \end{equation}
		 where the subscript 2 means the second element. By (\ref{VE}) and (\ref{mu1}) we have
		 \begin{equation}
		 |\alpha_E|\lesssim|t^{-1/2}|\label{aE}.
		 \end{equation}
		 In the same way we obtain
		 \begin{equation}
		 e^2(z)=(\alpha_E, 1)+\frac{1}{2\pi i}\int_{\Sigma^{(E)}}\dfrac{\left( (\alpha_E, 1)+\mu_2(s)\right) (V^{(E)}-I)}{s-z}ds,\label{e^2}
		 \end{equation}
		 where the $\mu_2\in L^2(\Sigma^{(E)})$ is the unique solution of following equation:
		 \begin{equation}
		 (1-C_E)\mu_1=C_E\left((\alpha_E, 1) \right).
		 \end{equation}
		 Also we have $\parallel \mu_2\parallel_{L^2(\Sigma^{(E)})}\lesssim|t|^{-1/2}$ and $|e^2(z)-(\alpha_E, 1)|\lesssim|t|^{-1/2}$.
		 Now we denote $E=(e^1,e^2)^T$, $\mu=(\mu^1,\mu^2)^T$, then $E$ has expansion for $z\to\infty$
		 \begin{equation}
		 E=\left(\begin{array}{cc}
		 1 & 0\\
		 \alpha_E & 0
		 \end{array}\right)+\frac{E_1}{z}+O(z^{-2}),
		 \end{equation}
		 where
		 \begin{equation}
		 E_1=\frac{-1}{2\pi i}\int_{\Sigma^{(E)}}\left(\left(\begin{array}{cc}
		 1 & 0\\
		 \alpha_E & 0
		 \end{array}\right)+\mu(s)\right) (V^{(E)}-I)ds.
		 \end{equation}
		 Using (\ref{aE}), (\ref{VE}), (\ref{deVE}) (\ref{asympc}) and (\ref{mu1}) we have
		 \begin{align}
		 E_1=&\frac{-1}{2\pi i}\int_{\partial U_{z_0}} (V^{(E)}-I)ds+O(|t^{-1}|)\nonumber\\
		 =&M^{(out)}(Z_0)A(z_0,\eta)M^{(out)}(Z_0)^{-1}|8t|^{-1/2}+O(|t^{-1}|).
		 \end{align}
	\end{proof}
When estimating the solution $u(x,t)$ of the MNLS equation (\ref{MNLS}), we need the following result which provides the large-time behavior of the error term $E(\rho)$.
\begin{proposition}\label{Erho}
	When $|t|\to \infty$, the unique solution $E(z)$ of \textbf{RHP 7} described by  above Lemma  satisfies:
	
	\textbf{1.} when $\rho\in U_{z_0}$,
	\begin{align}
	&E_{11}(\rho)=1-\nonumber\\
	&\sum_{s=z_0,\rho}\frac{|t|^{-1/2}}{4\sqrt{2}\pi }\left[ A_{21}(z_0,\eta)M^{(out)}_{12}(s)\overline{M^{(out)}_{11}(s)}-sA_{12}(z_0,\eta)M^{(out)}_{11}(s)\overline{M^{(out)}_{12}(s)}\right] \nonumber\\
	&+O(|t|^{-1}),\\
	&E_{12}(\rho)=\sum_{s=z_0,\rho}\frac{|t|^{-1/2}}{4\sqrt{2}\pi }\left[ A_{21}(z_0,\eta)\overline{M^{(out)}_{11}(s)}^2+s^2A_{12}(z_0,\eta)\overline{M^{(out)}_{12}(s)}^2\right] \nonumber\\
	&+O(|t|^{-1});
	\end{align}
	\textbf{2.} when $\rho\notin U_{z_0}$,
	\begin{align}
	&E_{11}(\rho)=1-\nonumber\\
	&\frac{|t|^{-1/2}}{4\sqrt{2}\pi }\left[ A_{21}(z_0,\eta)M^{(out)}_{12}(z_0)\overline{M^{(out)}_{11}(z_0)}-z_0A_{12}(z_0,\eta)M^{(out)}_{11}(z_0)\overline{M^{(out)}_{12}(z_0)}\right] \nonumber\\
	&+O(|t|^{-1}),\\
	&E_{12}(\rho)=\frac{|t|^{-1/2}}{4\sqrt{2}\pi }\left[ A_{21}(z_0,\eta)\overline{M^{(out)}_{11}(z_0)}^2+z_0^2A_{12}(z_0,\eta)\overline{M^{(out)}_{12}(z_0)}^2\right] \nonumber\\
	&+O(|t|^{-1}).
	\end{align}
	$A_{12}(z_0,\eta)$ and $A_{12}(z_0,\eta)$ is given in\textbf{ Proposition \ref{Mpc}}. Together with $E_{22}(\rho)=\overline{E_{11}(\rho)}$, $E_{21}(\rho)=-z\overline{E_{12}(\rho)}$ we obtain the long-time behavior of $E(\rho)$.
\end{proposition}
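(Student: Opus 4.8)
The plan is to specialize the small-norm solution formula for $E(z)$ from the preceding Lemma to the point $z=\rho$ and extract the leading behaviour by a residue computation on $\partial U_{z_0}$. Writing $E=(e^1,e^2)^{T}$ with $e^1=(E_{11},E_{12})$, the Lemma gives
\[
e^1(\rho)=(1,0)+\frac{1}{2\pi i}\int_{\Sigma^{(E)}}\frac{\big((1,0)+\mu_1(s)\big)\big(V^{(E)}(s)-I\big)}{s-\rho}\,ds ,
\]
and similarly for $e^2(\rho)$ with $(1,0)$ replaced by $(\alpha_E,1)$ and $\mu_1$ by $\mu_2$. First I would split $\Sigma^{(E)}=\partial U_{z_0}\cup(\Sigma^{(2)}\setminus U_{z_0})$ and discard the negligible pieces: on $\Sigma^{(2)}\setminus U_{z_0}$ the estimate $|V^{(E)}-I|=O(e^{-4|t(z-z_0)^2|})$ from (\ref{VE-I}) makes that part of the integral exponentially small in $t$; and on $\partial U_{z_0}$ the $\mu_j$-terms are $O(|t|^{-1})$ by Cauchy--Schwarz together with $\|\mu_j\|_{L^2(\Sigma^{(E)})}=O(|t|^{-1/2})$ and $\|V^{(E)}-I\|_{L^2(\partial U_{z_0})}=O(|t|^{-1/2})$. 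Hence, up to $O(|t|^{-1})$, only the integral of $(1,0)\big(V^{(E)}(s)-I\big)/(s-\rho)$ over $\partial U_{z_0}$ survives (and its analogue for $e^2$).

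Next I would substitute the explicit leading form of the jump on the circle. Combining (\ref{deVE}) with the large-$\xi$ expansion (\ref{asympc}) of $M^{pc}$ gives, uniformly for $s\in\partial U_{z_0}$,
\[
V^{(E)}(s)-I=-\,\frac{|8t|^{-1/2}}{s-z_0}\,M^{(out)}(s)\begin{pmatrix}0&-iA_{12}(z_0,\eta)\\[2pt] iA_{21}(z_0,\eta)&0\end{pmatrix}M^{(out)}(s)^{-1}+O(|t|^{-1}).
\]
The crucial point is that the parabolic-cylinder expansion holds not just on the circle but in a thin annulus around it (since $\sqrt{|8t|}\,|s-z_0|$ remains large there), while $M^{(out)}$ is holomorphic and invertible on a neighbourhood of $\overline{U}_{z_0}$ because $U_{z_0}$ contains no point of $Z\cup\bar Z$; therefore the leading term above is the boundary value of a function meromorphic near $\overline{U}_{z_0}$ whose only interior singularity is a simple pole at $s=z_0$. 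I would then evaluate the contour integral by residues. When $\rho\notin U_{z_0}$ the factor $(s-\rho)^{-1}$ is holomorphic inside the circle, so only the pole at $s=z_0$ contributes; when $\rho\in U_{z_0}$ there is an extra simple pole at $s=\rho$, and the two residues assemble into exactly the sum $\sum_{s=z_0,\rho}$ in the statement.

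Finally I would carry out the residue arithmetic explicitly: using $M^{(out)}\in SL_2(\mathbb{C})$, so $(M^{(out)})^{-1}=\bigl(\begin{smallmatrix}M^{(out)}_{22}&-M^{(out)}_{12}\\ -M^{(out)}_{21}&M^{(out)}_{11}\end{smallmatrix}\bigr)$, together with the symmetries $M^{(out)}_{22}(s)=\overline{M^{(out)}_{11}(\bar s)}$, $M^{(out)}_{21}(s)=-s\,\overline{M^{(out)}_{12}(\bar s)}$ evaluated at the real points $s=z_0,\rho$, the first row $(1,0)M^{(out)}(s)\bigl(\begin{smallmatrix}0&-iA_{12}\\ iA_{21}&0\end{smallmatrix}\bigr)M^{(out)}(s)^{-1}$ collapses to combinations of $A_{21}M^{(out)}_{12}\overline{M^{(out)}_{11}}$, $sA_{12}M^{(out)}_{11}\overline{M^{(out)}_{12}}$, $A_{21}\overline{M^{(out)}_{11}}^{\,2}$ and $s^2A_{12}\overline{M^{(out)}_{12}}^{\,2}$; collecting the elementary constants ($|8t|^{-1/2}=2^{-3/2}|t|^{-1/2}$, the factor $1/2\pi i$ from the Cauchy kernel, and the $i$'s from the off-diagonal matrix) then produces the stated expressions for $E_{11}(\rho)$ and $E_{12}(\rho)$, and the relations $E_{22}(\rho)=\overline{E_{11}(\bar\rho)}$, $E_{21}(\rho)=-\rho\,\overline{E_{12}(\bar\rho)}$ coming from the symmetry of \textbf{RHP 7} finish the proof. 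I expect the main obstacle to be the residue bookkeeping in the case $\rho\in U_{z_0}$ — including the pole at $s=\rho$ with the correct sign and normalization so that the two contributions combine into the displayed symmetric sum — together with securing uniformity of the $O(|t|^{-1})$ errors in the borderline regime where $\rho$ approaches $\partial U_{z_0}$ and the Cauchy kernel $(s-\rho)^{-1}$ ceases to be bounded, which can be dealt with by a small contour deformation (e.g.\ an inversion in a circle around $z_0$) of the type already used in the Lemma.
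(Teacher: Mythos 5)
Your proposal is correct and follows essentially the same route as the paper's proof: split $\Sigma^{(E)}$ into $\partial U_{z_0}$ and $\Sigma^{(2)}\setminus U_{z_0}$, discard the exponentially small part and the $\mu_1$-contributions via (\ref{VE-I}), (\ref{mu1}) and Cauchy--Schwarz, insert the expansion (\ref{asympc}) of $M^{pc}$ into (\ref{deVE}) on the circle, evaluate the resulting Cauchy integral by residues (pole at $z_0$ only, or at $z_0$ and $\rho$ when $\rho\in U_{z_0}$), and finish with $\det M^{(out)}=1$ and the symmetries of $M^{(out)}$ and $E$. Your additional remarks on the sign coming from $(M^{pc})^{-1}-I$ and on uniformity when $\rho$ nears $\partial U_{z_0}$ are refinements of, not departures from, the paper's argument.
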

\begin{proof}
	We only calculate $e_1(\rho)$, because $e_2(\rho)$ can obtain by symmetry of $E(z)$. From (\ref{e^1}) and (\ref{deVE}) we have
	\begin{align}
	e_1(\rho)-(1, 0)&=\frac{1}{2\pi i}\int_{\Sigma^{(E)}}\dfrac{\left( (1,0)+\mu_1(s)\right) (V^{(E)}-I)}{s-\rho}ds\nonumber\\
	&=\frac{1}{2\pi i}\int_{\Sigma^{(2)}\setminus U_{z_0}}\dfrac{\left( (1,0)+\mu_1(s)\right) }{s-\rho}M^{(out)}(V^{(2)}-I)[M^{(out)}]^{-1}ds\nonumber\\
	&+\frac{1}{2\pi i}\int_{\partial U_{z_0}}\dfrac{\left( (1,0)+\mu_1(s)\right) }{s-\rho}M^{(out)}(M^{pc}-I)[M^{(out)}]^{-1}ds.
	\end{align}
	For the first integral, we calculate on $\Sigma_1\setminus U_{z_0}$. The others are similar. Let $s=z_0+le^{i\pi/4}$, $l\in[\mu/3,+\infty]$, and note that $|s-\rho|>\sqrt{2}\mu/6$. Then together with (\ref{VE-I}) and (\ref{mu1}) we obtain
	\begin{align}
	&\parallel \int_{\Sigma^{(2)}\setminus U_{z_0}}\dfrac{\left( (1,0)+\mu_1(s)\right) (V^{(E)}-I)}{s-\rho}ds\parallel_\infty\nonumber\\
	&\lesssim\int_{\mu/3}^{+\infty}e^{-4|tl^2|}dl+\int_{\mu/3}^{+\infty}|\mu_1(l)e^{-4|tl^2|}dl\nonumber\\
	&\lesssim|t|^{-1}+|t|^{-1/2}\parallel \mu_1\parallel_{L^2}\lesssim|t|^{-1}.
	\end{align}
	For the second  integral, from (\ref{asympc}) and (\ref{mu1}) we have
	\begin{align}
	&\frac{1}{2\pi i}\int_{\partial U_{z_0}}\dfrac{\left( (1,0)+\mu_1(s)\right) }{s-\rho}M^{(out)}(M^{pc}-I)[M^{(out)}]^{-1}ds\nonumber\\
	&=-\frac{|t|^{-1/2}}{4\sqrt{2}\pi i}\int_{\partial U_{z_0}}\dfrac{ (1,0) }{(s-\rho)(s-z_0)}M^{(out)}A(z_0,\eta)[M^{(out)}]^{-1}ds+O(|t|^{-1}),
	\end{align}
	where
	\begin{align}
	&(1,0)M^{(out)}A(z_0,\eta)[M^{(out)}]^{-1}\nonumber\\
	&=i\left( A_{21}M^{(out)}_{12}M^{(out)}_{22}+A_{12}M^{(out)}_{11}M^{(out)}_{21}, A_{21}[M^{(out)}_{22}]^2+A_{12}[M^{(out)}_{21}]^2\right) .
	\end{align}
	Then by residue theorem and the symmetry we obtain the result.
\end{proof}

Now combine above Lemmas and proposition about the boundedness of $M^{(out)}$, $E$, $M^{pc}$ we have
\begin{proposition}
	$M^{RHP}$ is the unique solution of \textbf{RHP 4}, which is in $L^\infty\left( C\setminus(\Sigma^{(2)}\cap supp(1-X_Z))\right) $ and
	\begin{equation}
	\parallel M^{RHP}(z)^{\pm1}\parallel_{L^\infty(\Sigma^{(2)}\cap supp(1-X_Z)}\lesssim 1.\label{normMRHp}
	\end{equation}
\end{proposition}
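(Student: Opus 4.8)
The point is that all the real analysis has already been done: $M^{(out)}$ exists and is unique by the corollary that supplies the outer model, $M^{pc}$ is given explicitly in \textbf{Proposition \ref{Mpc}}, and $E(z)$ solves the small-norm \textbf{RHP 7} by the preceding Lemma. What remains is to check that the piecewise product $M^{RHP}$ defined in (\ref{transm4}) --- that is, $M^{RHP}=E\,M^{(out)}$ on $\mathbb{C}\setminus U_{z_0}$ and $M^{RHP}=E\,M^{(z_0)}=E\,M^{(out)}M^{pc}$ on $U_{z_0}$ --- satisfies every clause of \textbf{RHP 4}, to derive uniqueness from a Liouville argument, and to obtain the $L^\infty$ bounds by combining the bounds already available for the three factors.

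First I would verify the analyticity, residue and jump clauses. Because $E$ and $M^{pc}$ are pole free and $M^{(out)}$ extends holomorphically across $\Sigma^{(2)}$, the only singularities of $M^{RHP}$ are the poles of $M^{(out)}$, which lie at $Z\cup\bar Z$ and are therefore disjoint from $U_{z_0}$; near each of them $M^{RHP}=E\,M^{(out)}$ with $E$ holomorphic and invertible, so left-multiplication by $E$ carries the residue relations of $M^{(out)}$ prescribed in \textbf{RHP 6} over to exactly (\ref{RES11})--(\ref{RES14}), and on $\mathrm{supp}(1-X_Z)$, i.e.\ off the $\mu/3$-neighbourhoods of those points, $M^{RHP}$ stays bounded. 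For the jump on $\Sigma^{(2)}\cap\mathrm{supp}(1-X_Z)$ one has $M^{RHP}_\pm=E_\pm M^{(out)}$ with a common analytic $M^{(out)}$, and the first line of (\ref{deVE}) reads $E_+=E_-\,M^{(out)}V^{(2)}(M^{(out)})^{-1}$, so $M^{RHP}_+=M^{RHP}_-V^{(2)}$, which is (\ref{jump5}). Across the auxiliary circle $\partial U_{z_0}$ the second line of (\ref{deVE}) states precisely that the jump of $E$ equals $M^{(out)}(M^{pc})^{-1}(M^{(out)})^{-1}$, whence $E_{\mathrm{in}}M^{(out)}M^{pc}=E_{\mathrm{out}}M^{(out)}$ on $\partial U_{z_0}$ and $M^{RHP}$ is continuous there; this is the one step in which the whole construction --- the parabolic-cylinder model $M^{pc}$ designed to match the jump of $M^{(2)}$ inside $U_{z_0}$, together with the choice (\ref{deVE}) of $V^{(E)}$ --- is actually used, and I regard it as the heart of the verification. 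The symmetry clause is inherited factor by factor from the symmetries of $E$, $M^{(out)}$ and $M^{pc}$, and the $z\to\infty$ expansion (\ref{asymbehv6}) comes from composing the expansions of $E$ and $M^{(out)}$, with the constant $\alpha$ pinned down by the symmetry exactly as in \textbf{Problem 1}.

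Uniqueness is then standard: if $\widetilde M^{RHP}$ were a second solution, $\widetilde M^{RHP}(M^{RHP})^{-1}$ would have no jump on $\Sigma^{(2)}$ (same $V^{(2)}$), removable singularities at $Z\cup\bar Z$ --- by the same Laurent-expansion argument used earlier to show $M^{(3)}$ is pole free --- and would tend to $I$ at infinity, hence equal $I$ by Liouville's theorem. For the bounds, off the $\mu/3$-neighbourhoods of $Z\cup\bar Z$ the explicit partial-fraction representation (\ref{EXPM}) of $M^{(out)}$, whose coefficients are controlled by the bounded $N(I)$-soliton data, gives $\|M^{(out)}\|\lesssim1$, while $\|(M^{(out)})^{-1}\|\lesssim1$ is (\ref{normmout}); $\|(M^{pc})^{\pm1}\|_\infty\lesssim1$ is part of \textbf{Proposition \ref{Mpc}}; and the $E$-Lemma gives that $E$ differs from a fixed matrix with $O(1)$ entries by $O(|t|^{-1/2})$ in $L^\infty$, with $\det E\equiv1$, so $\|E^{\pm1}\|_\infty\lesssim1$ once $|t|$ is large. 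Multiplying these factor bounds, and using that each factor has continuous boundary values on the relevant contour, yields $\|(M^{RHP})^{\pm1}\|_{L^\infty(\Sigma^{(2)}\cap\mathrm{supp}(1-X_Z))}\lesssim1$ and $M^{RHP}\in L^\infty$ on the complement (away from its poles), uniformly in $(x,t)$ inside the cone $C(x_1,x_2,v_1,v_2)$. The main obstacle here is not any single hard estimate --- those live in the lemmas already proved --- but the bookkeeping: making the jump cancel exactly across $\partial U_{z_0}$, confirming that the gluing creates no new poles, and checking that the chain of estimates is uniform in $t$.
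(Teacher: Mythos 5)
Your proposal is correct and follows essentially the same route as the paper: the paper justifies this proposition in a single line by combining the already-established boundedness of $M^{(out)}$ (and its inverse), $M^{pc}$, and the small-norm solution $E$, exactly as you do, and your added verifications (jump cancellation across $\partial U_{z_0}$ via (\ref{deVE}), transfer of the residue conditions by left multiplication by $E$, Liouville uniqueness) are the standard details the paper leaves implicit.
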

And to estimate the gauge factor of the solution $u(x,t)$ of the MNLS equation (\ref{MNLS}), we also need the large-time behavior of  $M^{RHP}_+(\rho)$.
\begin{proposition}\label{MRHPrho}
	$M^{RHP}$ is the unique solution of \textbf{RHP 4}, then as $|t|\to\infty$,  $M^{RHP}_+(\rho)$ have
	
	\textbf{1.} when $\rho\in U_{z_0}$,
	\begin{align}
	[M^{RHP}_+(\rho)]_{11}&=[M_+(\rho;x,t|D(I))]_{11}\prod_{Rez_n\in I_{z_0}^\eta\setminus I}\left( \frac{\rho-z_n}{\rho-\bar{z}_n}\right)\nonumber\\
	&+|t|^{-1/2}G_1(\rho,z_0;x,t|D(I)+O(|t|^{-1});\\
	[M^{RHP}_+(\rho)]_{12}&=[M_+(\rho;x,t|D(I))]_{12}\prod_{Rez_n\in I_{z_0}^\eta\setminus I}\left( \frac{\rho-\bar{z}_n}{\rho-z_n}\right)\nonumber\\
	&+|t|^{-1/2}G_2(\rho,z_0;x,t|D(I))+O(|t|^{-1}),
	\end{align}
	where
	\begin{align}
	&G_1(\rho,z_0;x,t|D(I))=\nonumber\\
	&-[M_+(\rho;x,t|D(I))]_{11}\prod_{Rez_n\in I_{z_0}^\eta\setminus I}\left( \frac{\rho-z_n}{\rho-\bar{z}_n}\right)\sum_{s=z_0,\rho}\frac{1}{4\sqrt{2}\pi }\nonumber\\
	&A_{21}(z_0,\eta)[M_+(\rho;x,t|D(I))]_{12}\overline{M_{11}(s;x,t|D(I))}\prod_{Rez_n\in I_{z_0}^\eta\setminus I}\left( \frac{s-\bar{z}_n}{s-z_n}\right)^2\nonumber\\
	&-[M_+(\rho;x,t|D(I))]_{11}\prod_{Rez_n\in I_{z_0}^\eta\setminus I}\left( \frac{\rho-z_n}{\rho-\bar{z}_n}\right)\sum_{s=z_0,\rho}\frac{1}{4\sqrt{2}\pi }\nonumber\\
	&sA_{12}(z_0,\eta)M_{11}(s;x,t|D(I))\overline{M_{12}(s;x,t|D(I))}\prod_{Rez_n\in I_{z_0}^\eta\setminus I}\left( \frac{s-z_n}{s-\bar{z}_n}\right)^2\nonumber\\
	&-\rho\overline{[M_+(\rho;x,t|D(I))]_{12}}\prod_{Rez_n\in I_{z_0}^\eta\setminus I}\left( \frac{\rho-z_n}{\rho-\bar{z}_n}\right)\nonumber\\
	&\sum_{s=z_0,\rho}\frac{1}{4\sqrt{2}\pi } A_{21}(z_0,\eta)\overline{M_{11}(s;x,t|D(I))}^2\prod_{Rez_n\in I_{z_0}^\eta\setminus I}\left( \frac{s-\bar{z}_n}{s-z_n}\right)^2\nonumber\\
	&-\rho\overline{[M_+(\rho;x,t|D(I))]_{12}}\prod_{Rez_n\in I_{z_0}^\eta\setminus I}\left( \frac{\rho-z_n}{\rho-\bar{z}_n}\right)\nonumber\\
	&\sum_{s=z_0,\rho}\frac{1}{4\sqrt{2}\pi }s^2A_{12}(z_0,\eta)\overline{M_{12}(s;x,t|D(I))}^2\prod_{Rez_n\in I_{z_0}^\eta\setminus I}\left( \frac{s-z_n}{s-\bar{z}_n}\right)^2;\label{G1}
	\end{align}
	and
	\begin{align}
	&G_2(\rho,z_0;x,t|D(I))=-[M_+(\rho;x,t|D(I))]_{12}\prod_{Rez_n\in I_{z_0}^\eta\setminus I}\left( \frac{\rho-\bar{z}_n}{\rho-z_n}\right)\nonumber\\
	&\sum_{s=z_0,\rho}\frac{1}{4\sqrt{2}\pi } A_{21}(z_0,\eta)M_{12}(s;x,t|D(I))\overline{M_{11}(s;x,t|D(I))}\prod_{Rez_n\in I_{z_0}^\eta\setminus I}\left( \frac{s-\bar{z}_n}{s-z_n}\right)^2\nonumber\\
	&+[M_+(\rho;x,t|D(I))]_{12}\prod_{Rez_n\in I_{z_0}^\eta\setminus I}\left( \frac{\rho-\bar{z}_n}{\rho-z_n}\right)\sum_{s=z_0,\rho}\frac{1}{4\sqrt{2}\pi }\nonumber\\
	&sA_{12}(z_0,\eta)M_{11}(s;x,t|D(I))\overline{M_{12}(s;x,t|D(I))}\prod_{Rez_n\in I_{z_0}^\eta\setminus I}\left( \frac{s-z_n}{s-\bar{z}_n}\right)^2\nonumber\\
&+\overline{[M_+(\rho;x,t|D(I))]_{11}}\prod_{Rez_n\in I_{z_0}^\eta\setminus I}\left( \frac{\rho-z_n}{\rho-\bar{z}_n}\right)\nonumber\\
&\sum_{s=z_0,\rho}\frac{1}{4\sqrt{2}\pi } A_{21}(z_0,\eta)\overline{M_{11}(s;x,t|D(I))}^2\prod_{Rez_n\in I_{z_0}^\eta\setminus I}\left( \frac{s-\bar{z}_n}{s-z_n}\right)^2\nonumber\\
&+\overline{[M_+(\rho;x,t|D(I))]_{11}}\prod_{Rez_n\in I_{z_0}^\eta\setminus I}\left( \frac{\rho-z_n}{\rho-\bar{z}_n}\right)\nonumber\\
&\sum_{s=z_0,\rho}\frac{1}{4\sqrt{2}\pi }s^2A_{12}(z_0,\eta)\overline{M_{12}(s;x,t|D(I))}^2\prod_{Rez_n\in I_{z_0}^\eta\setminus I}\left( \frac{s-z_n}{s-\bar{z}_n}\right)^2\label{G2}.
	\end{align}

	\textbf{2.} when $\rho\notin U_{z_0}$,
	\begin{align}
	[M^{RHP}_+(\rho)]_{11}&=[M_+(\rho;x,t|D(I))]_{11}\prod_{Rez_n\in I_{z_0}^\eta\setminus I}\left( \frac{\rho-z_n}{\rho-\bar{z}_n}\right)\nonumber\\
	&+|t|^{-1/2}H_1(\rho,z_0;x,t|D(I)+O(|t|^{-1});\\
	[M^{RHP}_+(\rho)]_{12}&=[M_+(\rho;x,t|D(I))]_{12}\prod_{Rez_n\in I_{z_0}^\eta\setminus I}\left( \frac{\rho-\bar{z}_n}{\rho-z_n}\right)\nonumber\\
	&+|t|^{-1/2}H_2(\rho,z_0;x,t|D(I))+O(|t|^{-1}),
	\end{align}
	where
	\begin{align}
	&H_1(\rho,z_0;x,t|D(I))=\nonumber\\
	&-|[M_+(\rho;x,t|D(I))]_{11}|^2\prod_{Rez_n\in I_{z_0}^\eta\setminus I}\left( \frac{\rho-\bar{z}_n}{\rho-z_n}\right)\frac{1}{4\sqrt{2}\pi } A_{21}(z_0,\eta)[M_+(\rho;x,t|D(I))]_{12}\nonumber\\
	&-[M_+(\rho;x,t|D(I))]_{11}^2\prod_{Rez_n\in I_{z_0}^\eta\setminus I}\left( \frac{\rho-z_n}{\rho-\bar{z}_n}\right)^3\frac{1}{4\sqrt{2}\pi }\rho A_{12}(z_0,\eta)\overline{M_{12}(\rho ;x,t|D(I))}\nonumber\\
	&-\frac{1}{4\sqrt{2}\pi } A_{21}(z_0,\eta)\overline{M_{11}(\rho ;x,t|D(I))}^2\rho\overline{M_{12}(\rho ;x,t|D(I))}\prod_{Rez_n\in I_{z_0}^\eta\setminus I}\left( \frac{\rho-\bar{z}_n}{\rho-z_n}\right)\nonumber\\
	&-\frac{1}{4\sqrt{2}\pi }\rho^3A_{12}(z_0,\eta)\overline{M_{12}(\rho ;x,t|D(I))}^3\prod_{Rez_n\in I_{z_0}^\eta\setminus I}\left( \frac{\rho-z_n}{\rho-\bar{z}_n}\right)^3\nonumber\\
	&+\frac{iA_{21}(z_0,\eta)}{2\sqrt{2}(\rho-z_0) }\prod_{Rez_n\in I_{z_0}^\eta\setminus I}\left( \frac{\rho-\bar{z}_n}{\rho-z_n}\right)\left[ M_{12}(\rho ;x,t|D(I))+\overline{M_{11}(\rho ;x,t|D(I))}\right] ;\label{H1}	
	\end{align}
	\begin{align}
	&H_2(\rho,z_0;x,t|D(I))=\nonumber\\
	&-\frac{iA_{12}(z_0,\eta)}{2\sqrt{2}(\rho-z_0) }\prod_{Rez_n\in I_{z_0}^\eta\setminus I}\left( \frac{\rho-\bar{z}_n}{\rho-z_n}\right)\left[ M_{12}(\rho ;x,t|D(I))+\overline{M_{11}(\rho ;x,t|D(I))}\right]\nonumber\\
	&-[M_+(\rho;x,t|D(I))]_{12}^2\prod_{Rez_n\in I_{z_0}^\eta\setminus I}\left( \frac{\rho-\bar{z}_n}{\rho-z_n}\right)^3\frac{1}{4\sqrt{2}\pi } A_{21}(z_0,\eta)\overline{[M_+(\rho;x,t|D(I))]_{11}}\nonumber\\
	&+|[M_+(\rho;x,t|D(I))]_{12}|^2\frac{1}{4\sqrt{2}\pi }\rho A_{12}(z_0,\eta)[M_+(\rho;x,t|D(I))]_{11}\prod_{Rez_n\in I_{z_0}^\eta\setminus I}\left( \frac{\rho-z_n}{\rho-\bar{z}_n}\right)\nonumber\\
	&+\frac{1}{4\sqrt{2}\pi } A_{21}(z_0,\eta)\overline{[M_+(\rho;x,t|D(I))]_{11}}^3\prod_{Rez_n\in I_{z_0}^\eta\setminus I}\left( \frac{\rho-\bar{z}_n}{\rho-z_n}\right)^3\nonumber\\
	&+\overline{[M_+(\rho;x,t|D(I))]_{11}}\frac{1}{4\sqrt{2}\pi }\rho^2A_{12}(z_0,\eta)\overline{[M_+(\rho;x,t|D(I))]_{12}}^2\prod_{Rez_n\in I_{z_0}^\eta\setminus I}\left( \frac{\rho-z_n}{\rho-\bar{z}_n}\right)
	\label{H2}.
	\end{align}
	And from the symmetry of $M^{RHP}$: $M^{RHP}_{22}(z)=\overline{M^{RHP}_{11}(\bar{z})}$, $M^{RHP}_{21}(z)=-z\overline{M^{RHP}_{12}(\bar{z})}$ we obtain the whole result of $M^{RHP}_+(\rho)$
\end{proposition}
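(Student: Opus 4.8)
The plan is to exploit the two explicit representations of $M^{RHP}$ recorded in (\ref{transm4}) together with the large-time expansions already established for each constituent factor, and then to extract $M^{RHP}_+(\rho)$ by a short residue computation.

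First I would split according to the location of $\rho$. When $\rho\notin U_{z_0}$, (\ref{transm4}) gives $M^{RHP}_+(\rho)=E_+(\rho)M^{(out)}_+(\rho)$; when $\rho\in U_{z_0}$ it gives instead $M^{RHP}_+(\rho)=E_+(\rho)M^{(out)}_+(\rho)M^{pc}(\rho;z_0,\eta)$. In both cases I would feed in the expansion of $M^{(out)}$ from the corollary containing (\ref{mout}), namely $M^{(out)}(z)=M(z;x,t|D(I))\prod_{\mathrm{Re}z_n\in I_{z_0}^\eta\setminus I}\left(\frac{z-z_n}{z-\bar z_n}\right)^{\sigma_3}+O(e^{-4\mu(I)|t|})$, whose entries at $\rho$ are then made explicit through Proposition \ref{Msolrho}; at leading order this already reproduces the principal terms $[M_+(\rho;x,t|D(I))]_{11}\prod(\cdots)$ and $[M_+(\rho;x,t|D(I))]_{12}\prod(\cdots)^{-1}$ of the statement. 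For $E_+(\rho)$ I would use the appropriate part of Proposition \ref{Erho}: the version with the sum $\sum_{s=z_0,\rho}$ when $\rho\in U_{z_0}$, the version without it when $\rho\notin U_{z_0}$. Recall that in the proof of Proposition \ref{Erho} the sum over $\{z_0,\rho\}$ arises precisely because, when $\rho$ lies inside the circle $\partial U_{z_0}$, the Cauchy integral $\frac{1}{2\pi i}\int_{\partial U_{z_0}}\frac{(1,0)(V^{(E)}-I)}{s-\rho}\,ds$ encloses both poles $s=z_0$ and $s=\rho$, each contributing the Cauchy kernel evaluated at the other pole. Finally, when $\rho\in U_{z_0}$, I would take $M^{pc}(\rho;z_0,\eta)$ from the large-argument expansion in Proposition \ref{Mpc}: since $\rho\neq z_0$ and, inside the fixed cone, $\rho-z_0$ stays bounded away from $0$ for $|t|$ large, one has $\xi(\rho)=\sqrt{|8t|}(\rho-z_0)\to\infty$ and $M^{pc}(\rho;z_0,\eta)=I+\frac{|8t|^{-1/2}}{\rho-z_0}B(z_0,\eta)+O(|t|^{-1})$, where $B(z_0,\eta)$ is the matrix with $(1,2)$-entry $-iA_{12}(z_0,\eta)$, $(2,1)$-entry $iA_{21}(z_0,\eta)$, and zero diagonal.

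The second step is purely mechanical: multiply the two (or three) factors, each of the form $I+|t|^{-1/2}(\,\cdot\,)+O(|t|^{-1})$ up to a diagonal Blaschke factor, keep all contributions through order $|t|^{-1/2}$, and discard the $O(|t|^{-1})$ remainders. In the case $\rho\in U_{z_0}$ the order-$|t|^{-1/2}$ part collects the $z_0$- and $\rho$-residue pieces of $E_+(\rho)$ together with the $B(z_0,\eta)/(\rho-z_0)$ piece carried by $M^{pc}(\rho)$; after multiplying by the Blaschke factors of $M^{(out)}$ and substituting $M^{(out)}(\rho)=M(\rho;x,t|D(I))\prod(\cdots)^{\sigma_3}$ this is exactly the content of $G_1,G_2$ in (\ref{G1})--(\ref{G2}). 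In the case $\rho\notin U_{z_0}$ only the $z_0$-residue piece of $E_+(\rho)$ survives, and the residual factor $1/(\rho-z_0)$ visible in (\ref{H1})--(\ref{H2}) is just the Cauchy kernel $1/(s-\rho)$ evaluated at the enclosed pole $s=z_0$. At every stage I would simplify using the symmetries $M^{(out)}_{22}(z)=\overline{M^{(out)}_{11}(\bar z)}$, $M^{(out)}_{21}(z)=-z\overline{M^{(out)}_{12}(\bar z)}$ (hence at the real point $\rho$, $M^{(out)}_{22}(\rho)=\overline{M^{(out)}_{11}(\rho)}$ and $M^{(out)}_{21}(\rho)=-\rho\overline{M^{(out)}_{12}(\rho)}$), the corresponding symmetry of $M(z;x,t|D(I))$ from Proposition \ref{Msolrho}, the reality of $\rho$ (which gives $\overline{\prod(\frac{\rho-z_n}{\rho-\bar z_n})}=\prod(\frac{\rho-z_n}{\rho-\bar z_n})^{-1}$), and the explicit relations $|A_{12}(z_0,\eta)|^2=k(z_0)/z_0$ and $A_{12}(z_0,\eta)=z_0\overline{A_{12}(z_0,\eta)}$ of Proposition \ref{Mpc}. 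The $22$- and $21$-entries of $M^{RHP}_+(\rho)$ then follow at once from $M^{RHP}_{22}(z)=\overline{M^{RHP}_{11}(\bar z)}$, $M^{RHP}_{21}(z)=-z\overline{M^{RHP}_{12}(\bar z)}$, so it suffices to compute the first row.

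The main difficulty is organizational rather than conceptual: one has to keep precise track of which poles ($z_0$ alone, or both $z_0$ and $\rho$) are enclosed by $\partial U_{z_0}$ in the Cauchy representation of $E$, and then carry out a long but routine expansion of a product of near-identity matrices, collecting the $|t|^{-1/2}$ coefficient and rewriting it via the symmetry relations; the length of $G_1,G_2,H_1,H_2$ reflects exactly this bookkeeping. A secondary point that needs attention is the standing assumption that $\rho-z_0$ stays bounded away from zero inside the cone (equivalently, that $\rho$ is not the limiting value of the stationary point $z_0$), which is what legitimises using the large-$\xi$ asymptotics of $M^{pc}$ at $z=\rho$ when $\rho\in U_{z_0}$; in the degenerate direction where this fails one would instead have to work with $M^{pc}$ evaluated directly at the small argument $\xi(\rho)$.
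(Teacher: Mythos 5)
Your toolkit is the same as the paper's (decompose $M^{RHP}_+(\rho)$ into $E(\rho)$, $M^{(out)}_+(\rho)$ and possibly $M^{pc}(\rho;z_0,\eta)$, then insert Proposition \ref{Erho}, the outer-model formula (\ref{mout}) with Proposition \ref{Msolrho}, the expansion (\ref{asympc}), and the symmetries, keeping terms through $|t|^{-1/2}$), but you attach the factor $M^{pc}$ to the opposite case from the one the paper actually computes, and the stated coefficients are tied to the paper's pairing. The paper's proof takes $M^{RHP}_+(\rho)=E(\rho)M^{(out)}_+(\rho)$ when $\rho\in U_{z_0}$ and $M^{RHP}_+(\rho)=E(\rho)M^{(out)}_+(\rho)M^{pc}_+(\rho;z_0,\eta)$ when $\rho\notin U_{z_0}$; accordingly the parabolic-cylinder correction $\frac{iA_{21}(z_0,\eta)}{2\sqrt{2}(\rho-z_0)}[\cdots]$ coming from (\ref{asympc}) appears in $H_1,H_2$ and nowhere in $G_1,G_2$, while the $\sum_{s=z_0,\rho}$ structure of $G_1,G_2$ comes solely from the two enclosed poles in Proposition \ref{Erho}. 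In particular the $1/(\rho-z_0)$ factor in (\ref{H1})--(\ref{H2}) is produced by $M^{pc}_+(\rho)$, not, as you suggest, by the Cauchy kernel of $E$ evaluated at the enclosed pole. Your assignment is the literal reading of (\ref{transm4}), but carried out faithfully it interchanges these contributions: your $G$'s would acquire the $1/(\rho-z_0)$ terms and your $H$'s would lose them, so the proposal does not land on the formulas being claimed. To count as a proof of this statement you would have to notice and explicitly resolve the discrepancy between (\ref{transm4}) and the case split used in this proposition, rather than silently follow one of the two readings.

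Independently of the pairing issue, within your own reading the case $\rho\in U_{z_0}$ is not actually closed. You justify evaluating $M^{pc}$ at $z=\rho$ by the large-$\xi$ expansion on the grounds that $\rho-z_0$ stays bounded away from zero in the cone, but with $z_0=\rho-x/(4t)$ and $x=x_0+vt$, $v\in[v_1+4\rho,v_2+4\rho]$, one has $\rho-z_0=x_0/(4t)+v/4$; whenever $-4\rho\in[v_1,v_2]$ the admissible choice $v=0$ gives $\rho-z_0=O(|t|^{-1})$, hence $\xi(\rho)=\sqrt{8|t|}\,(\rho-z_0)\to 0$ and (\ref{asympc}) cannot be invoked at $z=\rho$. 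That degenerate regime forces $\rho\in U_{z_0}$ for all large $|t|$, so the caveat you defer to a closing remark is the substantive content of that case, not a side issue; by contrast, the argument as organized in the paper only invokes (\ref{asympc}) at points whose distance to $z_0$ is bounded below ($|\rho-z_0|>\mu/3$ in its second case, and $|s-z_0|=\mu/3$ on $\partial U_{z_0}$ inside Proposition \ref{Erho}).
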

\begin{proof}
	\textbf{1.} when $\rho\in U_{z_0}$, from (\ref{transm4}), we have
	\begin{align}
	M^{RHP}_+(\rho)=E(\rho)M_+^{(out)}(\rho),
	\end{align}
	from which we have
	\begin{align}
	[M^{RHP}_+(\rho)]_{11}&=E_{11}(\rho)[M_+^{(out)}(\rho)]_{11}+E_{12}(\rho)[M_+^{(out)}(\rho)]_{21},\\
	[M^{RHP}_+(\rho)]_{12}&=E_{11}(\rho)[M_+^{(out)}(\rho)]_{12}+E_{12}(\rho)[M_+^{(out)}(\rho)]_{22}.
	\end{align}
Combining with (\ref{mout}) and \textbf{proposition \ref{Erho}} we come to the result.
	
	\textbf{2.} when $\rho\notin U_{z_0}$, from (\ref{transm4}) we have
	\begin{equation}
	M^{RHP}_+(\rho)=E(\rho)M_+^{(out)}(\rho)M_+^{pc}(\rho;z_0,\eta),
	\end{equation}
	from which we have
	\begin{align}
	[M^{RHP}_+(\rho)]_{11}=&E_{11}(\rho)[M_+^{(out)}(\rho)]_{11}[M_+^{PC}(\rho)]_{11}+E_{12}(\rho)[M_+^{(out)}(\rho)]_{21}[M_+^{PC}(\rho)]_{11}\nonumber\\
	+&E_{11}(\rho)[M_+^{(out)}(\rho)]_{12}[M_+^{PC}(\rho)]_{21}+E_{12}(\rho)[M_+^{(out)}(\rho)]_{22}[M_+^{PC}(\rho)]_{21},\\
	[M^{RHP}_+(\rho)]_{12}=&E_{11}(\rho)[M_+^{(out)}(\rho)]_{11}[M_+^{PC}(\rho)]_{12}+E_{12}(\rho)[M_+^{(out)}(\rho)]_{21}[M_+^{PC}(\rho)]_{12}\nonumber\\
	+&E_{11}(\rho)[M_+^{(out)}(\rho)]_{12}[M_+^{PC}(\rho)]_{22}+E_{12}(\rho)[M_+^{(out)}(\rho)]_{22}[M_+^{PC}(\rho)]_{22}.
	\end{align}
	Combining with (\ref{mout}), (\ref{asympc}) and \textbf{proposition \ref{Erho}} we come to the result.
\end{proof}

\section{$\bar{\partial}$ Problem}
\quad $\bar{\partial}$-Problem 5 of $M^{(3)}$ is equivalent to the integral equation
\begin{equation}
M^{(3)}(z)=(1,0)+\frac{1}{\pi}\int_C\dfrac{\bar{\partial}M^{(3)}(s)}{z-s}dm(s)=(1,0)+\frac{1}{\pi}\int_C\dfrac{M^{(3)}(s)W^{(3)}(s)}{z-s}dm(s),
\end{equation}
where $W^{(3)}(s)=M^{RHP}(s)\bar{\partial}R^{(2)}(s)M^{RHP}(s)^{-1}$, and $m(s)$ is the Lebegue measure on the $C$. If we denote $C_z$ is the left Cauchy-Green integral  operator,
\begin{equation*}
fC_z(z)=\frac{1}{\pi}\int_C\dfrac{f(s)W^{(3)}(s)}{z-s}dm(s),
\end{equation*}
then
\begin{equation}
M^{(3)}(z)=(1,0)\left(I-C_z \right) ^{-1}.\label{deM3}
\end{equation}
To proof the existence of operator $\left(I-C_z \right) ^{-1}$, we have following Lemma.
\begin{lemma}\label{Cz}
	There exists a constant C such that the operator $C_z$ satisfies that
	\begin{equation}
	\parallel C_z\parallel_{L^\infty\to L^\infty}\leq C|t|^{-1/4}.
	\end{equation}
\end{lemma}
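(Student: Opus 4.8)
The plan is to estimate $C_z$ directly from its kernel. Since $fC_z(z)=\frac1\pi\int_C\frac{f(s)W^{(3)}(s)}{z-s}\,dm(s)$ with $W^{(3)}(s)=M^{RHP}(s)\,\bar\partial R^{(2)}(s)\,M^{RHP}(s)^{-1}$, the uniform bounds $\|M^{RHP}{}^{\pm1}\|_{L^\infty}\lesssim1$ proved above, see (\ref{normMRHp}), give $\|W^{(3)}(s)\|\lesssim|\bar\partial R^{(2)}(s)|$, so that
\[
\|C_z\|_{L^\infty\to L^\infty}\ \le\ \frac1\pi\,\sup_{z\in C}\int_C\frac{\|W^{(3)}(s)\|}{|z-s|}\,dm(s)\ \lesssim\ \sup_{z\in C}\int_C\frac{|\bar\partial R^{(2)}(s)|}{|z-s|}\,dm(s),
\]
and everything reduces to bounding the last integral by $C|t|^{-1/4}$ uniformly in $z$.

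By (\ref{DBARR2}), $\bar\partial R^{(2)}$ is supported only in $\Omega_1\cup\Omega_3\cup\Omega_4\cup\Omega_6$, and these sectors are interchanged by the symmetries $z\mapsto\bar z$ and $\eta\mapsto-\eta$; it therefore suffices to bound the contribution of one sector, say $\Omega_1$ with $\eta=+1$. I parametrize $s=z_0+u+iv$, so that $\Omega_1=\{u>0,\ 0<v<u\}$ and, by (\ref{theta}), $|e^{2it\theta(s)}|=e^{-8tuv}\le e^{-8tv^2}$ --- a genuine Gaussian decay in the direction transverse to $\mathbb{R}$. Inserting the pointwise estimate (\ref{dbarRj}) of Proposition \ref{proR}, on $\Omega_1$ we have $|\bar\partial R_1(s)|\lesssim|\bar\partial X_Z(s)|+|s\,p_1'(\mathrm{Re}\,s)|+|s-z_0|^{-1/2}$, and I bound the three pieces separately. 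The $\bar\partial X_Z$--piece is supported on a fixed compact annulus around $Z\cup\bar Z$ on which $s$ stays a positive distance from $\mathbb{R}$ and, within $\Omega_1$, $\mathrm{Re}\,s-z_0$ is bounded below, so there $|e^{2it\theta(s)}|\le e^{-c|t|}$ and this piece is $O(e^{-c|t|})$.

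For the two remaining pieces I use the standard $\bar\partial$-steepest-descent estimates (cf. \cite{DandMNLS,Liu3}): one combines the Gaussian factor $e^{-8tv^2}$ with Cauchy--Schwarz/H\"older in the two real variables, together with $\int_0^\infty e^{-16tv^2}\,dv\lesssim|t|^{-1/2}$ and the uniform bound $\big(\int_{\mathbb R}\frac{dv}{|z-s|^2}\big)^{1/2}\lesssim|\mathrm{Re}\,z-z_0-u|^{-1/2}$. For $|s\,p_1'(\mathrm{Re}\,s)|$ one uses $|s|\lesssim\langle\mathrm{Re}\,s\rangle$ on $\Omega_1$ and the fact that $r\in H^{2,2}(R)$ (Assumption \ref{initialdata}), which by interpolation provides the needed integrability of $\langle\cdot\rangle\,r'(\cdot)$ along $\mathbb{R}$, splitting the $u$-integral at the scale $u\sim|t|^{-1/2}$ set by the Gaussian; for $|s-z_0|^{-1/2}$ one passes to polar coordinates centred at $z_0$ and uses that $\rho^{-1/2}$ is locally square integrable in the plane, again splitting at $u\sim|t|^{-1/2}$. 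Carrying out these (elementary but fiddly) integrals gives $\int_C\frac{|\bar\partial R^{(2)}(s)|}{|z-s|}\,dm(s)\lesssim|t|^{-1/4}$, hence $\|C_z\|_{L^\infty\to L^\infty}\le C|t|^{-1/4}$.

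The step I expect to be the main obstacle is extracting the sharp rate $|t|^{-1/4}$ (rather than $|t|^{-1/4+\varepsilon}$) from the piece $|s-z_0|^{-1/2}$: near the stationary point $z_0$ the singularity of $\bar\partial R^{(2)}$ and the Cauchy kernel singularity $|z-s|^{-1}$ can be large simultaneously, and controlling their product requires the $u$-splitting at scale $|t|^{-1/2}$ dictated by the Gaussian decay and the precise exponent $1/2$ of the singularity. Once the bound $\|C_z\|\le C|t|^{-1/4}$ is in hand, $\|C_z\|<1$ for $|t|$ large, so $(I-C_z)^{-1}$ exists on $L^\infty$ and (\ref{deM3}) determines $M^{(3)}$ uniquely.
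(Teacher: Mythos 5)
Your overall strategy is the same as the paper's: bound $\|C_z\|_{L^\infty\to L^\infty}$ by $\sup_z\frac1\pi\int_C|z-s|^{-1}|W^{(3)}(s)|\,dm(s)$, use (\ref{normMRHp}) to replace $|W^{(3)}|$ by $|\bar\partial R^{(2)}|$, restrict to one sector by symmetry, and estimate the three contributions coming from the decomposition (\ref{dbarRj}) of Proposition \ref{proR}, using the Gaussian decay $|e^{2it\theta}|=e^{-8tv(u-z_0)}\le e^{-8tv^2}$ on $\Omega_1$. Your treatment of the $\bar\partial X_Z$ piece (exponential smallness because the support of $\bar\partial X_Z$ stays a fixed distance from $\mathbb{R}$) is fine, and in fact stronger than the paper's $O(t^{-1/4})$ bound for that piece.

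The gap is exactly where you flagged it: the two singular pieces are only sketched, and the specific assembly you indicate does not close as written. Your displayed hint $\bigl(\int_{\mathbb R}|z-s|^{-2}dv\bigr)^{1/2}\lesssim|\mathrm{Re}\,z-z_0-u|^{-1/2}$ corresponds to doing Cauchy--Schwarz in $v$ first; the leftover $u$-integrals are then of the form $\int_0^\infty \langle u\rangle|p_1'(u+z_0)|\,|\mathrm{Re}\,z-z_0-u|^{-1/2}du$ and $\int_0^\infty u^{-1/2}|\mathrm{Re}\,z-z_0-u|^{-1/2}du$, which are not controlled by another Cauchy--Schwarz: $|w|^{-1/2}$ is not locally in $L^2(\mathbb{R})$ and the second integral also diverges logarithmically at $u\to\infty$, so one needs case analysis in $\mathrm{Re}\,z$, the short-range bound of the inner $v$-integral, and the splitting at $u\sim|t|^{-1/2}$ you mention (and for the $p_1'$ piece some integrability of $\langle\cdot\rangle p_1'$ beyond $L^2$, which is not immediate from $r\in H^{2,2}$). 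Likewise, a two-dimensional Cauchy--Schwarz based on local square integrability of $|s-z_0|^{-1/2}$ pairs it against $|z-s|^{-2}$, which is not locally integrable in the plane. The paper avoids all of this by integrating along horizontal lines \emph{first}: for fixed $v$ it applies Cauchy--Schwarz in $u$ for the $\bar\partial X_Z$ and $\langle u\rangle p_1'$ pieces, using $\|(s-z)^{-1}\|_{L^2_u(v+z_0,\infty)}\le\sqrt{\pi/|v-\mathrm{Im}\,z|}$ and $\|\langle\cdot\rangle p_1'\|_{L^2_u}\lesssim(1+v)\|p_1\|_{H^2}$, and H\"older with exponents $p>2$, $q<2$ for the $|s-z_0|^{-1/2}$ piece, giving $\|(s-z_0)^{-1/2}\|_{L^p_u}\lesssim v^{1/p-1/2}$ and $\|(s-z)^{-1}\|_{L^q_u}\lesssim|v-\mathrm{Im}\,z|^{1/q-1}$; the remaining one-dimensional $v$-integrals against $e^{-8tv^2}$, after the rescaling $v\mapsto\sqrt{t}\,v$, each produce exactly $|t|^{-1/4}$ uniformly in $z$. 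So your plan identifies the right ingredients, but the rate-critical estimates -- which constitute essentially the whole content of the paper's proof -- still need to be carried out, preferably in the paper's order of integration.
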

\begin{proof}
	For any $f\in L^\infty$,
	\begin{equation}
	\parallel fC_z \parallel_{L^\infty}\leq \parallel f \parallel_{L^\infty}\frac{1}{\pi}\int_C\dfrac{|W^{(3)}(s)|}{|z-s|}dm(s),
	\end{equation}
	where
	\begin{equation*}
	|W^{(3)}(s)|\leq 	\parallel M^{RHP} \parallel_{L^\infty}|\bar{\partial}R^{(2)} (s)|\parallel M^{RHP} \parallel_{L^\infty}^{-1}\lesssim |\bar{\partial}R^{(2)} (s)|.
	\end{equation*}
	So we only need to  estimate
	\begin{equation*}
	\frac{1}{\pi}\int_C\dfrac{|\bar{\partial}R^{(2)} (s)|}{|z-s|}dm(s).
	\end{equation*}
	We only proof the case $\eta=+1$, the case $\eta=-1$ can be proof in the same way. For $\bar{\partial}R^{(2)} (s)$ is a piece-wise function, we detail the case in the region $\Omega_1$, the other regions are similar. From (\ref{dbarRj}) and (\ref{DBARR2}) we have
	\begin{equation}
	\parallel C_z\parallel_{L^\infty\to L^\infty}\leq C(I_1+I_2+I_3),
	\end{equation}
	where for $s=u+vi$,
	\begin{align}
	&I_1=\int\int_{\Omega_1}\dfrac{|\bar{\partial}X_Z(s)e^{-8tv(u-z_0)}|}{|z-s|}dudv,\hspace{0.5cm}I_2=\int\int_{\Omega_1}\dfrac{|sp_1'(u)e^{-8tv(u-z_0)}|}{|z-s|}dudv,\nonumber\\
	&I_3=\int\int_{\Omega_1}\dfrac{|s-z_0|^{-1/2}e^{-8tv(u-z_0)}}{|z-s|}dudv.
	\end{align}
	First we bound $I_1$, For $z=\alpha+\beta i$, note that
	\begin{equation}
	\parallel (s-z)^{-1}\parallel_{L^2(v+z_0, +\infty)}^2=\int^\infty_{v+z_0}\frac{1}{v-\beta}\left[  \left( \frac{u-\alpha}{v-\beta}\right) ^2+1\right]  ^{-1} d\left( \frac{u-\alpha}{v-\beta}\right) \leq\frac{\pi}{v-\beta},
	\end{equation}
	then we have
	\begin{align}
	I_1&=\int_0^\infty\int^\infty_{v+z_0}\dfrac{|\bar{\partial}X_Z(s)e^{-8tv(u-z_0)}|}{|z-s|}dudv\nonumber\\
	&\leq\int_0^\infty \parallel\bar{\partial}X_Z(s)\parallel_{L^2_u(v+z_0,\infty)}\parallel (s-z)^{-1}\parallel_{L^2(v+z_0, +\infty)} e^{-8tv^2}dv\nonumber\\
	&\lesssim \int_0^\infty\frac{ e^{-8tv^2}}{|v-\beta|^{1/2}}dv\leq t^{-1/4} \int_R\frac{ e^{-8(\sqrt{t}\beta+w)^2}}{|w|^{1/2}}dw\leq C_1 t^{-1/4} .
	\end{align}
	And for $I_2$,
	\begin{align}
	I_2&\leq\int_0^\infty \parallel (u^2+v^2)^{1/2}p_1'(u)\parallel_{L^2_u(v+z_0,\infty)}\parallel (s-z)^{-1}\parallel_{L^2(v+z_0, +\infty)} e^{-8tv^2}dudv\nonumber\\
	&\lesssim \int_0^\infty \parallel p_1(u)\parallel_{H^2_u(R)}\frac{ (1+v)e^{-8tv^2}}{|v-\beta|^{1/2}}dv.
	\end{align}
	And using $e^{-m}\leq m^{-1/4}$ for $m\geq 0$ we obtain
	\begin{align}
	 \int_0^\beta \beta^{-1/2}\frac{ ve^{-8t\beta^{2}(v/\beta)^2}}{\beta|v/\beta-1|^{1/2}}d(v/\beta)&=\int_0^1 \beta^{-1/2}\frac{ we^{-8t\beta^{2}w^2}}{|w-1|^{1/2}}dw\nonumber\\
	 &\lesssim t^{-1/4} \int_0^1  w(1-w)^{-1/2}dw\leq Ct^{-1/4}.
	\end{align}
	Together with
	\begin{align}
	&\int_\beta^\infty  ve^{-8tv^2}(v-\beta)^{-1/2}dv\nonumber\\
	&=\int_0^\infty  t^{-3/4}e^{-8[\sqrt{t}(w+\beta)]^2}(\sqrt{t}w)^{1/2}+\beta t^{-1/4}e^{-8[\sqrt{t}(w+\beta)]}(\sqrt{t}w)^{-1/2}d\sqrt{t}w\nonumber\\
	&\lesssim t^{-1/4},
	\end{align}
	we have that there exists a constant $C_2>0$ such that $I_2\leq C_2t^{-1/4}$.
	For $I_3$, we choose $p>2$ and $q$ H$\ddot{o}$lder conjugate to $p$, and notice that
	\begin{align*}
	\parallel(s-z)^{-1}\parallel_{L^q_u(v+z_0,\infty)}^q&=\int^\infty_{v+z_0}\left[1+\left( \frac{u-\alpha}{v-\beta}\right) ^2 \right]^{-q/2} |v-\beta|^{-q+1} d\left( \frac{u-\alpha}{v-\beta}\right)\\
	&\leq C_q|v-\beta|^{-q+1},\\
	\parallel(s-z_0)^{-1/2}\parallel_{L^p_u(v+z_0,\infty)}^p&=\int^\infty_{v+z_0}\left[1+\left( \frac{u-z_0}{v}\right) ^2 \right]^{-p/4} |v|^{-p/2+1} d\left( \frac{u-z_0}{v}\right)\\
	&\leq C_p|v|^{-p/2+1},
	\end{align*}
	by H$\ddot{o}$lder inequality we have
	\begin{align}
	I_3&\leq\int^\infty_{0}e^{-8tv^2}\parallel(s-z_0)^{-1/2}\parallel_{L^p_u(v+z_0,\infty)}\parallel(s-z)^{-1}\parallel_{L^q_u(v+z_0,\infty)}dv\nonumber\\
	&\leq	max[ C_p,C_q] \int^\infty_{0}e^{-8tv^2}v^{1/p-1/2}|v-\beta|^{1/q-1}dv.
	\end{align}
	And using the same way as estimating $I_2$, we obtain a constant $C_3>0$ such that $I_3\leq C_3t^{-1/4}$. Finally we come to the  result by combining above equations.
\end{proof}
From this Lemma, we obtain that for sufficiently large t, $	\parallel C_z\parallel_{L^\infty\to L^\infty}<1$, so the operator $\left(I-C_z \right) ^{-1}$ exists, which means $M^{(3)}$ unique exist with   property
\begin{equation}
\parallel M^{(3)}\parallel_\infty\lesssim1.\label{normM3}
\end{equation}
To recover the long-time asymptotic behavior of $q(x,t)$ by reconstruction formula, we need to consider the asymptotic behavior of $M^{(3)}_1$, where $M^{(3)}_1$ is given by the Laurent-expansion of $M^{(3)}$ as $z\to\infty$
 \begin{equation}
 M^{(3)}(z)=(1,0)+\frac{M^{(3)}_1(x,t)}{z}+\frac{1}{z\pi}\int_C\dfrac{sM^{(3)}(s)W^{(3)}(s)}{z-s}dm(s),\label{expM3}
 \end{equation}
and
\begin{equation}
M^{(3)}_1(x,t)=\frac{1}{\pi}\int_CM^{(3)}(s)W^{(3)}(s)dm(s).
\end{equation}
Then we start to estimate $M^{(3)}_1$.
\begin{lemma}
	For $z=iy$, $y\in R$ and $y\to+\infty$, we have
	\begin{equation}
	|M^{(3)}_1(x,t)|\lesssim |t|^{-3/4}.\label{M31}
	\end{equation}
\end{lemma}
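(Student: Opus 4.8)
The plan is to read off $M^{(3)}_1$ from the integral representation in \eqref{expM3}. Writing
\[
M^{(3)}_1(x,t)=\frac{1}{\pi}\int_C M^{(3)}(s)\,W^{(3)}(s)\,dm(s),\qquad W^{(3)}(s)=M^{RHP}(s)\,\bar\partial R^{(2)}(s)\,M^{RHP}(s)^{-1},
\]
I would first use the uniform bounds $\|M^{(3)}\|_\infty\lesssim 1$ from \eqref{normM3} and $\|M^{RHP}(z)^{\pm1}\|_{L^\infty}\lesssim 1$ from \eqref{normMRHp} to get $|W^{(3)}(s)|\lesssim|\bar\partial R^{(2)}(s)|$, so that the statement reduces to
\[
\int_C|\bar\partial R^{(2)}(s)|\,dm(s)\lesssim|t|^{-3/4}.
\]
Along $z=iy$ with $y\to+\infty$ the remainder term in \eqref{expM3} is controlled by the same estimate multiplied by $|s|/|iy-s|$, which stays bounded; carrying one extra power of $|s|$ through the argument below produces an $O(|t|^{-1})$ bound, so it is of lower order and $M^{(3)}_1$ is well defined by the displayed formula. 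By \eqref{DBARR2}, $\bar\partial R^{(2)}$ vanishes on $\Omega_2\cup\Omega_5$, so it suffices to bound the integral over the four sectors $\Omega_1,\Omega_3,\Omega_4,\Omega_6$; I will treat $\Omega_1$ with $\eta=+1$ in detail, the others being identical after obvious sign and orientation changes, and the case $t\to-\infty$ following by the symmetry $\eta=-1$.

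On $\Omega_1$ introduce the polar chart $z=z_0+le^{i\psi}$, $l>0$, $0<\psi<\pi/4$, so that $dm(s)=l\,dl\,d\psi$ and, by \eqref{theta}, $|e^{2it\theta}|=e^{-8t\,\mathrm{Im}\,z(\mathrm{Re}\,z-z_0)}=e^{-4tl^2\sin 2\psi}$, which decays as $t\to+\infty$ throughout the sector. Inserting the pointwise bound \eqref{dbarRj} on $\bar\partial R_1$ splits the contribution into three pieces. The $\bar\partial X_Z$ piece is supported on the compact annuli $\{\mu/3\le\mathrm{dist}(z,Z\cup\bar Z)\le2\mu/3\}$; on the part of these annuli lying in $\Omega_1$ one has $\mathrm{Im}\,z\ge c_0>0$ (since $2\mu/3<\mathrm{dist}(Z,\mathbb{R})$) and $\mathrm{Re}\,z-z_0>\mathrm{Im}\,z\ge c_0$ (since $\psi<\pi/4$), hence $|e^{2it\theta}|\le e^{-8c_0^2t}$ and this piece is exponentially small in $t$. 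The $|z^{m_1}p_1'(\mathrm{Re}\,z)|$ piece is bounded using $|z^{m_1}p_1'(\mathrm{Re}\,z)|\le(1+|\mathrm{Re}\,z|+\mathrm{Im}\,z)|p_1'(\mathrm{Re}\,z)|$, the Gaussian scaling $\int_0^\infty l^a e^{-4tl^2\sin2\psi}\,dl\lesssim(t\sin2\psi)^{-(a+1)/2}$, and the Cauchy--Schwarz inequality in $l$ together with $p_1'=\bar r'\in L^2$ and $(1+|u|)p_1'\in L^2$ (consequences of $r\in H^{2,2}(R)$ and the Assumption); this produces a bound of the form $|t|^{-3/4}\int_0^{\pi/4}(\cos\psi)^{-1/2}(\sin2\psi)^{-3/4}\,d\psi$ plus an $O(|t|^{-5/4})$ term, where one must keep a factor $\sin\psi$ out of the $\mathrm{Im}\,z$ part so the angular integral $\int_0^{\pi/4}(\sin\psi)^{-1/4}(\cos\psi)^{-7/4}\,d\psi$ stays convergent near $\psi=0$, hence $O(|t|^{-3/4})$. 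Alternatively this piece can be handled by splitting the real variable at $|u-z_0|=|t|^{-1/2}$, using Cauchy--Schwarz on the interval of length $|t|^{-1/2}$ (which gains $|t|^{-1/4}$) and $\|(u-z_0)^{-1}\|_{L^2(|u-z_0|>|t|^{-1/2})}\lesssim|t|^{1/4}$ on the complement. Finally the $|z-z_0|^{-1/2}=l^{-1/2}$ piece is
\[
\int_0^{\pi/4}\!\!\int_0^\infty l^{1/2}e^{-4tl^2\sin2\psi}\,dl\,d\psi\lesssim|t|^{-3/4}\int_0^{\pi/4}(\sin2\psi)^{-3/4}\,d\psi=O(|t|^{-3/4}),
\]
the angular integral converging because $3/4<1$.

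Summing the four sectors gives $\int_C|\bar\partial R^{(2)}(s)|\,dm(s)\lesssim|t|^{-3/4}$ and hence $|M^{(3)}_1(x,t)|\lesssim|t|^{-3/4}$; all implied constants depend only on $q_0$ and on an upper bound for $|z_0|=|-x/(4t)+\rho|$, hence are uniform for $(x,t)$ in the cone $C(x_1,x_2,v_1,v_2)$. The main obstacle is the $p_1'$ piece: in Lemma~\ref{Cz} the extra Cauchy kernel $(s-z)^{-1}$ absorbed a factor $|t|^{1/4}$ and only $|t|^{-1/4}$ was needed, whereas here the full decay rate $|t|^{-3/4}$ must be squeezed out of the quadratic phase, which forces a genuine use of the weighted-$L^2$ regularity of $r$ on an interval of effective width $|t|^{-1/2}$ and a careful accounting of the powers of $\sin\psi$ and $\cos\psi$ so that every remaining angular integral is finite; by contrast the $\bar\partial X_Z$ and $|z-z_0|^{-1/2}$ pieces are routine.
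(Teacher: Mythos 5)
Your proposal is correct and follows essentially the same strategy as the paper: reduce, via $\|M^{(3)}\|_\infty\lesssim 1$ and $\|[M^{RHP}]^{\pm1}\|_\infty\lesssim 1$, to showing $\int_C|\bar{\partial}R^{(2)}|\,dm\lesssim|t|^{-3/4}$, then estimate the three pieces of \eqref{dbarRj} against the Gaussian decay of $e^{2it\theta}$ sector by sector. The only differences are cosmetic: you work in polar coordinates with Cauchy--Schwarz in $l$ (where the Jacobian absorbs the $|z-z_0|^{-1/2}$ singularity, avoiding the paper's H\"older step with $2<p<4$, and the $\bar{\partial}X_Z$ piece even comes out exponentially small), while the paper uses Cartesian variables $s-z_0=u+iv$ with Cauchy--Schwarz/H\"older in $u$ and the $v$-integration supplying the same $|t|^{-3/4}$ rate.
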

\begin{proof}
	From \textbf{Lemma \ref{Cz}} and (\ref{deM3}), we have $\parallel M^{(3)}\parallel_\infty \lesssim1$. And we only estimate the integral on $\Omega_1$ since the other estimates are similar. Like in the above Lemma, by (\ref{dbarRj}) and (\ref{DBARR2}) we obtain
	\begin{equation}
	|\frac{1}{\pi}\int_{\Omega_1}M^{(3)}(s)W^{(3)}(s)dm(s)|\lesssim \frac{1}{\pi}\int_{\Omega_1}|W^{(3)}(s)|dm(s)\lesssim I_4+I_5+I_6,
	\end{equation}
	where for $s-z_0=u+vi$
	\begin{align}
	&I_4=\int\int_{\Omega_1}|\bar{\partial}X_Z(s)|e^{-8tvu}dudv,\hspace{0.5cm}I_5=\int\int_{\Omega_1}|sp_1'(u)|e^{-8tvu}dudv,\nonumber\\
	&I_6=\int\int_{\Omega_1}|s-z_0|^{-1/2}e^{-8tvu}dudv.
	\end{align}
	By Cauchy-Schwarz inequality we have
	\begin{align}
	|I_4|\leq& \int_0^\infty\parallel\bar{\partial}X_Z(s)\parallel_{L^2_u(v+z_0,\infty)}\left(\int_{v}^\infty e^{-16tvu}du\right) ^{1/2}dv\nonumber\\
	\lesssim&\int_0^\infty t^{-1/2}v^{-1/2}e^{-8tv^2}dv=\int_0^\infty t^{-3/4}(\sqrt{t}v)^{-1/2}e^{-8(\sqrt{t}v)^2}d(\sqrt{t}v)\nonumber\\
	\leq&C_4t^{-3/4}.
	\end{align}
	And for $I_5$,
	\begin{align}
	|I_5|\leq& \int_0^\infty\parallel\left( (u+z_0)^2+v^2\right) ^{1/2}p_1'\parallel_{L^2_u(v+z_0,\infty)}\left(\int_{v}^\infty e^{-16tvu}du\right) ^{1/2}dv\nonumber\\
	\lesssim &\int_0^\infty v^{-1/2}t^{-1/2}\parallel p_1(u)\parallel_{H^2_u(R)} (1+v)e^{-8tv^2}dv\nonumber\\
	\lesssim &\int_0^\infty t^{-3/4}(\sqrt{t}v)^{-1/2}e^{-8(\sqrt{t}v)^2}+t^{-5/4}(\sqrt{t}v)^{1/2}e^{-8(\sqrt{t}v)^2} d(\sqrt{t}v)\nonumber\\
	\leq&C_5 t^{-3/4}.
	\end{align}
	Finally we consider $I_6$ in the same way as we doing with $I_3$ by H$\ddot{o}$lder inequality with $2<p<4$
	\begin{align}
	|I_6|&\leq\int^\infty_{0}\left(\int_{v}^\infty e^{-8qtvu}du\right) ^{1/q}v^{1/p-1/2}dv\lesssim\int^\infty_{0}t^{-1/q}v^{2/p-3/2}e^{-8tv^2}dv\nonumber\\
	&=\int^\infty_{0}t^{-3/4}(\sqrt{t}v)^{2/p-3/2}e^{-8(\sqrt{t}v)^2}d(\sqrt{t}v)\nonumber\\
	&\leq C_6 t^{-3/4}.
	\end{align}
	These estimates together show the consequence.
\end{proof}
But our eventually aim is to obtain the long-time asymptotic behavior of $u(x,t)$, so we need following estimation about $M^{(3)}(\rho)$.
\begin{proposition}\label{M3rho}
	The unique solution $M^{(3)}$ of $\bar{\partial}$-Problem 5 stratifies
	\begin{equation}
	M^{(3)}(\rho)=(1,0)+O(t^{-3/4}),
	\end{equation}
	for sufficiently large times $|t|>0$, where the implied constant is independent of $t$.
\end{proposition}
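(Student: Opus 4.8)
The plan is to evaluate the Cauchy--Green integral equation of \textbf{$\bar{\partial}$-problem 5} at the fixed real point $z=\rho$ and to estimate the resulting area integral by the same scheme already used for $M^{(3)}_1$. Starting from
\[
M^{(3)}(z)=(1,0)+\frac{1}{\pi}\int_C\frac{M^{(3)}(s)W^{(3)}(s)}{z-s}\,dm(s),\qquad W^{(3)}=M^{RHP}\,\bar{\partial}R^{(2)}\,(M^{RHP})^{-1},
\]
I would set $z=\rho$ and use $\|M^{(3)}\|_\infty\lesssim1$ from \eqref{normM3} together with $\|(M^{RHP})^{\pm1}\|_\infty\lesssim1$ from \eqref{normMRHp} to reduce everything to the scalar bound $|M^{(3)}(\rho)-(1,0)|\lesssim\int_C|\rho-s|^{-1}|\bar{\partial}R^{(2)}(s)|\,dm(s)$, which must be shown to be $O(|t|^{-3/4})$. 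Because $\bar{\partial}R^{(2)}\equiv0$ on $\Omega_2\cup\Omega_5$ and wherever $\mathrm{dist}(s,Z\cup\bar{Z})<\mu/3$, only the sectors $\Omega_1,\Omega_3,\Omega_4,\Omega_6$ contribute; on each of them \eqref{dbarRj} gives $|\bar{\partial}R_j(s)|\lesssim|\bar{\partial}X_Z(s)|+|s^{m_k}p_k'(\mathrm{Re}\,s)|+|s-z_0|^{-1/2}$, multiplied by $|e^{\pm2it\theta}|=e^{-8t|\mathrm{Im}\,s|\,|\mathrm{Re}\,s-z_0|}$ from \eqref{theta}.

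I would then split each sector integral into the part with $|s-\rho|\ge\delta$ and the part with $|s-\rho|<\delta$, for a small fixed $\delta>0$. On the far part $|\rho-s|^{-1}$ is bounded, so what remains is exactly the computation in the lemma bounding $M^{(3)}_1$: Cauchy--Schwarz in $\mathrm{Re}\,s$ for the $p_k'$ term, H\"older in $\mathrm{Re}\,s$ (with exponent in $(2,4)$) for the $|s-z_0|^{-1/2}$ term, both against the Gaussian $e^{-8t(\mathrm{Im}\,s)^2}$ supplied by the sector constraint $|\mathrm{Re}\,s-z_0|\ge|\mathrm{Im}\,s|$, and the $\bar{\partial}X_Z$ term treated as the $I_4$-integral there; this gives $O(|t|^{-3/4})$. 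On the near part the only genuinely new feature is the singular Cauchy kernel, which I would handle exactly as in \textbf{Lemma \ref{Cz}}: with $s=\sigma+iv$ one has $\|(\sigma-\rho+iv)^{-1}\|_{L^p_\sigma}\asymp|v|^{1/p-1}$, so the $\sigma$-integral is distributed by generalised H\"older among this kernel, the $L^2$ (or $L^p$) factor coming from $\bar{\partial}R_j$, and the exponential $e^{-8tv|\mathrm{Re}\,s-z_0|}$, after which one integrates in $v$; the one power $|t|^{-1/2}$ gained from integrating the exponential along the semi-infinite $\sigma$-ray, together with the $|t|^{-1/4}$ gained from the $v$-integral, is what should produce $|t|^{-3/4}$.

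As in \textbf{Proposition \ref{Erho}} and \textbf{Proposition \ref{MRHPrho}}, the argument bifurcates according to whether $\rho\notin U_{z_0}$ or $\rho\in U_{z_0}$. When $\rho$ stays a fixed distance from $z_0$, $|s-z_0|^{-1/2}$ is bounded on $B(\rho,\delta)$, only the $p_k'$ contribution near $\rho$ is delicate, and uniformity in $(x,t)\in C(x_1,x_2,v_1,v_2)$ follows since $z_0$ remains in the bounded interval $I$ while the $H^{2,2}$-norms of $r$ and $p_k$ are fixed. The hard part will be the case $\rho\in U_{z_0}$, in which $\rho$ may sit within $O(|t|^{-1})$ of the stationary point $z_0$: then $|\rho-s|^{-1}$ and $|s-z_0|^{-1/2}$ are singular at essentially the same place and the crude estimates only give $O(|t|^{-1/4})$. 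To recover $O(|t|^{-3/4})$ I would pass to polar coordinates $s-z_0=le^{i\psi}$ around $z_0$ and use the finer form of $\bar{\partial}R_j$ exhibited inside the proof of \textbf{Proposition \ref{proR}}, where the leading $|s-z_0|^{-1/2}$ piece in fact carries the angular factor $\sin(2\psi)=2\,\mathrm{Im}(s)(\mathrm{Re}(s)-z_0)/|s-z_0|^2$; arranging the radial integration of $\sin(2\psi)\,l^{-1/2}e^{-4tl^2\sin(2\psi)}$ against the kernel, and checking that it does not eat either the semi-infinite-ray gain or the $v$-gain, is the crux on which the sharp $O(|t|^{-3/4})$ rests.
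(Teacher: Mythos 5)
Your reduction is the same one the paper uses: evaluate the Cauchy--Green representation at $z=\rho$, invoke $\|M^{(3)}\|_\infty\lesssim1$, $\|(M^{RHP})^{\pm1}\|_\infty\lesssim1$ and the off-diagonal structure of $\bar\partial R^{(2)}$ (whence the factors $1+|s|^2$), and bound $|M^{(3)}(\rho)-(1,0)|$ by $\int_C|\rho-s|^{-1}(1+|s|^{2})|\bar\partial R^{(2)}(s)|\,dm(s)$, estimated sector by sector with the Cauchy--Schwarz/H\"older devices of Lemma \ref{Cz} and of the lemma for $M^{(3)}_1$; the paper essentially stops there, asserting the kernel-laden integrals obey the same $O(|t|^{-3/4})$ bound as the kernel-free one. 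Your far/near splitting is the right way to make that honest, and the far region together with the case $\mathrm{dist}(\rho,z_0)\gtrsim 1$ does close (near $\rho$ use $p_k'\in L^\infty$, or H\"older with the kernel in $L^{4/3}_u$ and $p_k'$ in $L^{4}_u$, against the decay $e^{-ctv}$ available there). But your bookkeeping ``$|t|^{-1/2}$ gained from the $\sigma$-ray integral of the exponential plus $|t|^{-1/4}$ from the $v$-integral'' cannot hold literally once the Cauchy kernel takes a share of the $\sigma$-integrability: in any three-factor H\"older split the kernel costs $v^{1/p-1}$, the exponential only returns $(tv)^{-1/q}$ with $1/p+1/q=1/2$, and the $v$-integration then always lands on $|t|^{-1/4}$ in the worst configuration.

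That worst configuration is exactly the regime you flag and leave open, $|\rho-z_0|\lesssim|t|^{-1/2}$, which the cone does contain (along lines $x=x_0$ one has $|\rho-z_0|=|x_0|/(4|t|)$, even $=0$). There absolute-value estimates saturate: taking $\rho=z_0$ and polar coordinates $s-z_0=le^{i\psi}$, the $|s-z_0|^{-1/2}$ piece of (\ref{dbarRj}) gives $\int_0^{\pi/4}\int_0^\infty l^{-1/2}e^{-4tl^2\sin2\psi}\,dl\,d\psi\asymp|t|^{-1/4}$, and the finer form inside the proof of Proposition \ref{proR} that you invoke only inserts the factor $\sin2\psi$, which improves the angular integrand to $(\sin2\psi)^{3/4}$ but leaves the power at $|t|^{-1/4}$; even the $p_k'$ piece is only $O(|t|^{-1/2})$ by size alone in that regime. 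So the device on which you rest the sharp rate does not produce $|t|^{-3/4}$, and closing the crux needs an input absent from your sketch (and, it must be said, from the paper's own two-line justification, which cites the kernel-free $M^{(3)}_1$ bound and never addresses the Cauchy singularity or the coincidence $\rho\approx z_0$): for instance the oscillation of $e^{2it\theta}$ near $z_0$, a genuine cancellation in $\bar\partial R_j$ beyond its modulus, or a separate treatment of the sub-case $\rho\in U_{z_0}$.
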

\begin{proof}
	\begin{align}
	M^{(3)}(\rho)&=(1,0)+\frac{1}{\pi}\int_C\dfrac{M^{(3)}(s)W^{(3)}(s)}{\rho-s}dm(s),
	\end{align}
	where
	\begin{align}
	&M^{(3)}(s)W^{(3)}(s)=\nonumber\\
	&\left(M^{(3)}_1(s)W^{(3)}_{11}(s)+M^{(3)}_2(s)W^{(3)}_{21}(s), M^{(3)}_1(s)W^{(3)}_{12}(s)+M^{(3)}_2(s)W^{(3)}_{22}(s)\right) .
	\end{align}
	Note that $\bar{\partial} R^{(2)}$ has zeros on its diagonal,  so together with $W^{(3)}=M^{RHP}\bar{\partial} R^{(2)}[M^{RHP}]^{-1}$ we obtain
	\begin{align}
	&W^{(3)}_{11}=\bar{\partial} R^{(2)}_{21}M^{RHP}_{12}M^{RHP}_{22}-\bar{\partial} R^{(2)}_{12}M^{RHP}_{11}M^{RHP}_{21},\\
	&W^{(3)}_{12}=-\bar{\partial} R^{(2)}_{21}[M^{RHP}_{12}]^2+\bar{\partial} R^{(2)}_{12}[M^{RHP}_{11}]^2,\\
	&W^{(3)}_{21}=\bar{\partial} R^{(2)}_{21}[M^{RHP}_{22}]^2-\bar{\partial} R^{(2)}_{12}[M^{RHP}_{21}]^2,\\	
	&W^{(3)}_{22}=-\bar{\partial} R^{(2)}_{21}M^{RHP}_{12}M^{RHP}_{22}+\bar{\partial} R^{(2)}_{12}M^{RHP}_{11}M^{RHP}_{21}.
	\end{align}
	Then using (\ref{normM3}) and (\ref{normMRHp}) to control the size of each term in the integral, and  the symmetry of $M^{RHP}$ we have
	\begin{align}
	 &|M^{(3)}_1(\rho)-1|\lesssim\int_C |\dfrac{s}{\rho-s}\bar{\partial} R^{(2)}_{21}|+|\dfrac{\bar{\partial} R^{(2)}_{12}}{\rho-s}|dm(s)=O(|t|^{-3/4}),\\
	 &|M^{(3)}_0(\rho)|\lesssim\int_C |\dfrac{s^2}{\rho-s}\bar{\partial} R^{(2)}_{21}|+|\dfrac{s}{\rho-s}\bar{\partial} R^{(2)}_{21}|+|\dfrac{\bar{\partial} R^{(2)}_{12}}{\rho-s}|dm(s)=O(|t|^{-3/4})
	\end{align}
	where the last equality of each  estimation we use the same way which used to bound $\int_{C}|W^{(3)}(z)|dm(z)$ in above Lemma to establish the result.
\end{proof}

\section{Long-time asymptotics for modified NLS equation }
\quad Now we begin to consider the long time asymptotics of $q(x,t)$ at first, which is the solution of (\ref{MNLS1}). Inverting the sequence of transformations (\ref{transm1}), (\ref{transm2}), (\ref{transm3}) and (\ref{transm4}), we have
\begin{align}
M(z)=&M^{(3)}(z)M^{RHP}(z)R^{(2)}(z)^{-1}T(z)^{\sigma_3}\nonumber\\
=&M^{(3)}(z)E(z)M^{(out)}(z)R^{(2)}(z)^{-1}T(z)^{\sigma_3},\hspace{0.5cm}\text{when }z \in C\setminus U_{z_0}
\end{align}
To  reconstruct the solution $q(x,t)$, we take $z\to\infty$ along the Straight line $z_0+R^+i$. Then we have that eventually $z\in \Omega_2$, which means $ R^{(2)}(z)=I$. From (\ref{expT}), (\ref{expMout}), (\ref{asyzE}) and (\ref{expM3}), we have
\begin{equation}
M=\left( I+\frac{M_1^{(3)}}{z}+...\right) \left( I+\frac{E_1}{z}+...\right) \left( I+\frac{M_1^{(out)}}{z}+...\right) \left( I+\frac{T_1^{\sigma_3}}{z}+...\right) ,
\end{equation}
which means the coefficient of the $z^{-1}$ in the Laurent expansion of $M$ is
\begin{equation}
M_1=M_1^{(3)}+E_1+M_1^{(out)}+T_1^{\sigma_3}.
\end{equation}
So from (\ref{reconsq}), (\ref{asyqout}), (\ref{expE}) and (\ref{M31}) we have
\begin{equation}
q(x,t)=q_{sol}(x,t;D(I))+|t|^{-1/2}f(x,t)+O(|t|^{-3/4}),\label{resultq}
\end{equation}
where $f(x,t)$ is given in (\ref{f(x,t)}),

Now we begin to construct the solution $u(x,t)$ of (\ref{MNLS})  with initial data $u_0$ by the transformation
\begin{equation}
u(x,t)=q(x,t)e^{-i\int_{-\infty}^x |q(y,t)|^2dy}.
\end{equation}
Because we already have the long time asymptotics of $q(x,t)$, we only need to consider $e^{-i\int_{-\infty}^x |q(y,t)|^2dy}$. From (\ref{intq})and (\ref{arho}), we have
\begin{align}
e^{-i\int_{-\infty}^x |q(y,t)|^2dy}&=\left(\frac{a(\rho)}{\varphi_+^1(\rho)} \right) ^{2}=M^+_{1}(\rho)^{-2}\nonumber\\
&=\left[ M^{(3)}(\rho)M^{RHP}_+(\rho)R_+^{(2)}(\rho)^{-1}T(\rho)^{\sigma_3}\right]_{1} ^{-2}\nonumber\\
&=\left[ M^{(3)}(\rho)M_+^{RHP}(\rho)R_+^{(2)}(\rho)^{-1}\right]_{1} ^{-2}T(\rho)^{-2}\label{eintq}
\end{align}
From the definition of $R^{(2)}$ in Figure \ref{figR2}, we have following situation.

\textbf{$\bullet$} when $x>0$, which means when $\eta=+1$, $\rho<z_0$ or when $\eta=-1$, $\rho>z_0$, $R^{(2)}$ is a upper triangular matrix $\left(\begin{array}{cc}
1 & *   \\
0 & 1
\end{array}\right)$. So
\begin{align}
\left[ M^{(3)}(\rho)M_+^{RHP}(\rho)R_+^{(2)}(\rho)^{-1}\right]_{1} =&[M^{(3)}(\rho)]_{1}[M_+^{RHP}(\rho)]_{11}+[M^{(3)}(\rho)]_{2}[M_+^{RHP}(\rho)]_{21}\nonumber\\
=&[M_+^{RHP}(\rho)]_{11}+O(|t|^{-3/4}).
\end{align}
From Proposition \ref{MRHPrho}, when $\rho\notin U_{z_0}$ we have
\begin{align}
\left[ M^{(3)}(\rho)M_+^{RHP}(\rho)R_+^{(2)}(\rho)^{-1}\right]_{1} &=[M_+(\rho;x,t|D(I))]_{11}\prod_{Rez_n\in I_{z_0}^\eta\setminus I}\left( \frac{\rho-z_n}{\rho-\bar{z}_n}\right)\nonumber\\
&+|t|^{-1/2}H_1(\rho,z_0;x,t|D(I)+O(|t|^{-3/4}),
\end{align}
where $H_1$ is given in (\ref{H1}).
And when $\rho\in U_{z_0}$,
\begin{align}
\left[ M^{(3)}(\rho)M_+^{RHP}(\rho)R_+^{(2)}(\rho)^{-1}\right]_{1}&=[M_+(\rho;x,t|D(I))]_{11}\prod_{Rez_n\in I_{z_0}^\eta\setminus I}\left( \frac{\rho-z_n}{\rho-\bar{z}_n}\right)\nonumber\\
&+|t|^{-1/2}G_1(\rho,z_0;x,t|D(I)+O(|t|^{-3/4}),
\end{align}
where $G_1$ is given in (\ref{G1}).

\textbf{$\bullet$  } when $x<0$, which means when $\eta=+1$, $\rho>z_0$ or when $\eta=-1$, $\rho<z_0$, $R_+^{(2)}(\rho)$ is a lower triangular matrix $\left(\begin{array}{cc}
1 & 0   \\
-R_1(\rho) & 1
\end{array}\right)$, where we note that $\theta(\rho)=0$. So
\begin{align}
\left[ M^{(3)}(\rho)M_+^{RHP}(\rho)R_+^{(2)}(\rho)^{-1}\right]_{1}&=[M^{(3)}(\rho)]_{1}[M_+^{RHP}(\rho)]_{11}+[M^{(3)}(\rho)]_{2}[M_+^{RHP}(\rho)]_{21}\nonumber\\
&-R_1(\rho)\left\lbrace [M^{(3)}(\rho)]_{1}[M_+^{RHP}(\rho)]_{12}+[M^{(3)}(\rho)]_{2}[M_+^{RHP}(\rho)]_{22}\right\rbrace \nonumber\\
&=[M^{(3)}(\rho)]_{1}[M_+^{RHP}(\rho)]_{11}-R_1(\rho) [M^{(3)}(\rho)]_{1}[M_+^{RHP}(\rho)]_{12}\nonumber\\
&+O(|t|^{-3/4}).
\end{align}
From Proposition \ref{MRHPrho}, when $\rho\notin U_{z_0}$ we have
\begin{align}
&\left[ M^{(3)}(\rho)M_+^{RHP}(\rho)R_+^{(2)}(\rho)^{-1}\right]_{1} =\nonumber\\
&[M_+(\rho;x,t|D(I))]_{11}\prod_{Rez_n\in I_{z_0}^\eta\setminus I}\left( \frac{\rho-z_n}{\rho-\bar{z}_n}\right)-R_1(\rho)[M_+(\rho;x,t|D(I))]_{12}\prod_{Rez_n\in I_{z_0}^\eta\setminus I}\left( \frac{\rho-\bar{z}_n}{\rho-z_n}\right)
\nonumber\\
&+|t|^{-1/2}\left[ H_1(\rho,z_0;x,t|D(I)-R_1(\rho)H_2(\rho,z_0;x,t|D(I))\right] +O(|t|^{-3/4}),
\end{align}
where $H_1$ and $H_2$ is given in (\ref{H1}) and (\ref{H2})  respectively.
And when $\rho\in U_{z_0}$,
\begin{align}
&\left[ M^{(3)}(\rho)M_+^{RHP}(\rho)R_+^{(2)}(\rho)^{-1}\right]_{1}=\nonumber\\
&[M_+(\rho;x,t|D(I))]_{11}\prod_{Rez_n\in I_{z_0}^\eta\setminus I}\left( \frac{\rho-z_n}{\rho-\bar{z}_n}\right)-R_1(\rho)[M_+(\rho;x,t|D(I))]_{12}\prod_{Rez_n\in I_{z_0}^\eta\setminus I}\left( \frac{\rho-\bar{z}_n}{\rho-z_n}\right)
\nonumber\\
&+|t|^{-1/2}\left[ G_1(\rho,z_0;x,t|D(I)-R_1(\rho)G_2(\rho,z_0;x,t|D(I))\right] +O(|t|^{-3/4}),
\end{align}
where $G_1$ and $G_2$ is given in (\ref{G1}) and (\ref{G2})  respectively.

Combine the above results and \textbf{proposition \ref{Msolrho}} we have following result.
\begin{theorem}
	Let $u(x,t)$ be the solution  of (\ref{MNLS})  with initial data $u_0=u(x,t=0)\in H^{2,2}(R)$ which has  corresponding scatteriing data $\left\lbrace r,\left\lbrace z_k,c_k\right\rbrace _{k=1}^N \right\rbrace $. Fixed $x_1,x_2,v_1,v_2\in R$ with $x_1\leq x_2$ and $v_1\leq v_2$. Let $I=[-v_2/4,-v_1/4]$, and $z_0=-x/4t+\rho$. Denote $u_{sol}(x,t)$ be the $N(I)$ soliton corresponding to reflection-less scatteriing data $\left\lbrace r\equiv0,\left\lbrace z_k,c_k(I)\right\rbrace _{k=1}^N \right\rbrace$ which given in (\ref{dataI}). As $|t|\to\infty$ with $(x,t)\in C(x_1,x_2,v_1,v_2)$, we have
	
	\textbf{$\bullet$  } when $x>0$, from which we have when $\eta=+1$, $\rho<z_0$ or when $\eta=-1$, $\rho>z_0$.
	
	If $\rho\notin U_{z_0}$ we obtain
	\begin{align}
	u(x,t)=u_{sol}(x,t)S(\rho)\left( 1+F_1(x,t)|t|^{-1/2}\right) +O(|t|^{-3/4}),
	\end{align}
	and if $\rho\in U_{z_0}$ we obtain
	\begin{align}
	u(x,t)=u_{sol}(x,t)S(\rho)\left( 1+F_2(x,t)|t|^{-1/2}\right) +O(|t|^{-3/4}),
	\end{align}
	where $S(\rho)=\prod_{Rez_n\in I_{z_0}^\eta\setminus I}\left( \frac{\rho-\bar{z}_n}{\rho-z_n}\right)^2T(\rho)^2$ is a constant depending on $\rho$, $\eta$, $z_0$  and two sets of scatteriing data,
	\begin{align}
	F_1(x,t)=\frac{f(x,t)}{q_{sol}(x,t)}-2exp\left( -\frac{i}{2}\int_{-\infty}^{x}|q_{sol}(y,t)|^2dy\right)\prod_{Rez_n\in I_{z_0}^\eta\setminus I}\left( \frac{\rho-\bar{z}_n}{\rho-z_n}\right)H_1, \\
	F_2(x,t)=\frac{f(x,t)}{q_{sol}(x,t)}-2exp\left( -\frac{i}{2}\int_{-\infty}^{x}|q_{sol}(y,t)|^2dy\right)\prod_{Rez_n\in I_{z_0}^\eta\setminus I}\left( \frac{\rho-\bar{z}_n}{\rho-z_n}\right)G_1.
	\end{align}
	
	\textbf{$\bullet$  } when $x<0$, from which we have when $\eta=+1$, $\rho>z_0$ or when $\eta=-1$, $\rho<z_0$.
	
	If $\rho\notin U_{z_0}$ we obtain
	\begin{align}
	u(x,t)=u_{sol}(x,t)B(x,t)^2\left( 1+F_3(x,t)|t|^{-1/2}\right) +O(|t|^{-3/4}),\label{longtime1}
	\end{align}
	and if $\rho\in U_{z_0}$ we obtain
	\begin{align}
	u(x,t)=u_{sol}(x,t)B(x,t)^2\left( 1+F_4(x,t)|t|^{-1/2}\right) +O(|t|^{-3/4}),\label{longtime2}
	\end{align}
	where
	\begin{align}
	&B(x,t)=\nonumber\\
	&\left(\prod_{Rez_n\in I_{z_0}^\eta\setminus I}\left( \frac{\rho-\bar{z}_n}{\rho-z_n}\right)+\prod_{Rez_n\in I_{z_0}^\eta\setminus I}\left( \frac{\rho-z_n}{\rho-\bar{z}_n}\right)R_1(\rho)\int_{x}^{+\infty}u_{sol}(y,t)dy \right)^{-1},\\
	&F_3(x,t)=\frac{f(x,t)}{q_{sol}(x,t)}-2B(x,t)\left( H_1-R_1(\rho)H_2\right), \\
	&F_4(x,t)=\frac{f(x,t)}{q_{sol}(x,t)}-2B(x,t)\left( G_1-R_1(\rho)G_2\right).
	\end{align}
\end{theorem}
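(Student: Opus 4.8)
The plan is to assemble the theorem by inverting the chain of transformations \eqref{transm1}--\eqref{transm4} and feeding in, at the single point $z=\rho$, all the asymptotic data accumulated in Sections 6--11. Since the solution of \eqref{MNLS} is recovered through $u(x,t)=q(x,t)e^{-i\int_{-\infty}^x|q(y,t)|^2dy}$ and the long-time expansion of $q(x,t)$ is already in hand from \eqref{resultq}, the only genuinely new object to produce is the large-time expansion of the gauge factor $e^{-i\int_{-\infty}^x|q(y,t)|^2dy}$; the theorem then follows by multiplying the two expansions and simplifying.

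First I would start from the identity \eqref{eintq}, which writes the gauge factor as $[M^{(3)}(\rho)M^{RHP}_+(\rho)R^{(2)}_+(\rho)^{-1}]_1^{-2}\,T(\rho)^{-2}$, i.e. the reciprocal square of the first entry of the first row of the fully-reduced vector $M$ evaluated at $z=\rho$. Each factor here is controlled: $M^{(3)}(\rho)=(1,0)+O(t^{-3/4})$ by Proposition \ref{M3rho}; $M^{RHP}_+(\rho)$ by Proposition \ref{MRHPrho}, which supplies the leading Blaschke-product term together with the $|t|^{-1/2}$ correction (the coefficients $G_1,G_2$ when $\rho\in U_{z_0}$, and $H_1,H_2$ when $\rho\notin U_{z_0}$); $R^{(2)}_+(\rho)$ is explicit from Figure \ref{figR2}; and $T(\rho)$ is handled by Proposition \ref{proT}, contributing the factor $T(\rho)^2$ that, together with the Blaschke product $\prod_{Rez_n\in I^\eta_{z_0}\setminus I}$ emerging from $M^{(out)}$, makes up the constant $S(\rho)$.

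The decisive case split is dictated by the shape of $R^{(2)}_+(\rho)$: whether $\rho$ sits to the left or to the right of the stationary phase point $z_0$, which is equivalent to the sign of $x$. When $x>0$ the matrix $R^{(2)}_+(\rho)$ is unit upper triangular, so the row-vector product reduces to $[M^{RHP}_+(\rho)]_{11}$ up to $O(t^{-3/4})$; when $x<0$ it is unit lower triangular with off-diagonal entry $-R_1(\rho)$ (here one uses that $\theta(\rho)=0$), so the product becomes the combination $[M^{RHP}_+(\rho)]_{11}-R_1(\rho)[M^{RHP}_+(\rho)]_{12}$, which is why the integral $\int_x^{+\infty}u_{sol}(y,t)\,dy$ (appearing through $[M_+(\rho;x,t|D(I))]_{12}$ as computed in Proposition \ref{Msolrho}) enters $B(x,t)$. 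Within each of these two cases one further distinguishes $\rho\in U_{z_0}$ from $\rho\notin U_{z_0}$. In every sub-case I would substitute the expansions and use $(a+b|t|^{-1/2}+O(|t|^{-3/4}))^{-2}=a^{-2}\bigl(1-2(b/a)|t|^{-1/2}\bigr)+O(|t|^{-3/4})$ to write the gauge factor as a constant (namely $S(\rho)$ or $B(x,t)^2$) times $1+(\text{subleading})|t|^{-1/2}$ plus $O(|t|^{-3/4})$. Multiplying by \eqref{resultq}, the leading term is $q_{sol}(x,t;D(I))$ times the gauge constant, which by Proposition \ref{Msolrho} is precisely $u_{sol}(x,t)$; the $|t|^{-1/2}$ coefficient combines the $f(x,t)/q_{sol}(x,t)$ term of \eqref{resultq} with the $-2(b/a)$ term of the gauge factor, yielding the $F_j(x,t)$; all products of two $O(|t|^{-1/2})$ quantities, together with the contributions of $M^{(3)}(\rho)-(1,0)$ and of the $\Sigma^{(2)}\setminus U_{z_0}$ part of $E$, are absorbed into the $O(|t|^{-3/4})$ remainder.

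I expect the main obstacle to be bookkeeping rather than analysis: keeping track of which triangular structure of $R^{(2)}_+(\rho)$ occurs in each sub-case, carrying the Blaschke products $\prod_{Rez_n\in I^\eta_{z_0}\setminus I}(\cdot)$ through the inversion of the transformation chain with the correct exponents so that they land consistently inside $S(\rho)$ and inside $B(x,t)$, and — since the two sharpest inputs, \eqref{M31} and Proposition \ref{M3rho}, are only of size $t^{-3/4}$ — being careful to claim the error as $O(|t|^{-3/4})$ and not better. No new estimate is required; every analytic ingredient has already been established.
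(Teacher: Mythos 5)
Your proposal follows essentially the same route as the paper's own argument in Section 10: it recovers the gauge factor from the identity (\ref{eintq}) at $z=\rho$, splits into cases according to whether $R^{(2)}_+(\rho)$ is upper or lower triangular (sign of $x$, using $\theta(\rho)=0$) and whether $\rho\in U_{z_0}$, inserts the expansions from Propositions \ref{M3rho}, \ref{MRHPrho}, \ref{Msolrho} together with (\ref{resultq}), and expands the reciprocal square to isolate the $|t|^{-1/2}$ coefficients $F_j$ while absorbing all remaining contributions into $O(|t|^{-3/4})$. No essential step is missing, so the proposal is correct and matches the paper's proof.
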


\hspace*{\parindent}
\\


\begin{thebibliography}{10}


\bibitem{Mio1}
K. Mio, T. Ogino, K. Minami, S. Takeda,
\newblock {Modified nonlinear Schr\"{o}dinger equation for Alfv$\grave{e}$n waves propagating along the magnetic field in cold plasmas},
\newblock {\em J. Phys. Soc. Jpn.}, 41(1976), 265-271.

\bibitem{Stiassnie1}
M. Stiassnie,
\newblock {Note on the modified nonlinear schr\"{o}dinger equation for deep water waves},
\newblock {\em Wave Motion}, 6(1984), 431-433 .


\bibitem{Al}
G. P. Agrawal,
\newblock {Nonlinear Fiber Optics, 4th ed.},
Academic Press, Boston, 2007.


\bibitem{Yang}
J. K. Yang,
\newblock {Nonlinear Waves in Integrable and Nonintegrable Systems},
SIAM, Philadelphia, 2010.


\bibitem{Nakatsuka1}
H. Nakatsuka, D. Grischkowsky, A. C. Balant,
\newblock {Nonlinear picosecond-Pulse propagation through optical fibers with positive group velocity dispersion},
\newblock {\em Phys. Rev.  Lett.}, 47(1981), 910-913.


\bibitem{Tzoar1}
N. Tzoar, M. Jain,
\newblock {Self-phase modulation in long-geometry optical waveguides},
\newblock {\em Phys. Rev. A}, 23(1981), 1266-1270.

\bibitem{RN13}
L. Brizhik, A. Eremko, B. Piette,
\newblock {solutions of a D-dimensional modified nonlinear Schr\"{o}dinger equation},
\newblock {\em Nonlinearity}, 13(2003), 16, 1481-1497.

\bibitem{Wadati1979}
	 M. Wadati,  K. Konno, Y. H. Ichikawa,
	\newblock {A generalization of inverse scattering method}.
	\newblock {\em J. Phys. Soc. Jpn.}, 46(1979), 1965-1966.









\bibitem{DM1993}
D. Mihalache,  N.  Truta, N.  Panoiu, D.  Baboiu,
\newblock {Analytic method for solving the modified nonlinear Schr\"{o}dinger equation describing soliton propagation along optical fibers},
\newblock {\em Phys. Rev. A}, 47(1993), 3190-3194.

\bibitem{yangxiao}
L.S. Liu, W. Z. Wang,
\newblock {Exact N-soliton solution of the modified nonlinear Schr\"{o}dinger equation},
\newblock {\em Phys. Rev. E}, 48(1993), 3054-3059.

\bibitem{RN59}
J. He, S. Xu, Y. Cheng,
\newblock {The rational solutions of the mixed nonlinear Schr\"{o}dinger equation},
\newblock {\em AIP Advances}, 5(2015), 017105.

\bibitem{wenXY}
X. Y. Wen, Y. Q. Yang, Z. Y. Yan,
\newblock {Generalized perturbation (n,M)-fold Darboux transformations and multi-rogue-wave structures for the modified self-steepening nonlinear Schr\"{o}dinger equation},
\newblock {\em Phys. Rev. E}, 92(2015), 012917.




\bibitem{TK2011}
T. Karsten,
\newblock {Hamiltonian form of the modified nonlinear Schr\"{o}dinger equation for gravity waves on arbitrary depth},
\newblock {\em J. Fluid. Mech.}, 670(2011), 404-426.



\bibitem{chen1990}
	Z. Y. Chen, N. N. Huang,
	\newblock {Explicit N-soliton solution of the modified nonlinear Schr\"{o}dinger equation},
	\newblock {\em Phys. Rev. A}, 41(1990), 4066-4069.

\bibitem{chen1991}
	Z. Y. Chen,
	\newblock {An inverse scattering transformation for the modified nonlinear Schr\"{o}dinger equation},
	\newblock {\em Commun. Theor. Phys.}, 15(1991), 271-276.



	\bibitem{Doktorov}
	E. V. Doktorov, S. B. Leble,
	\newblock {A Dressing Method in Mathematical Physics},  Springer, The Newtherlands,   2007.	



\bibitem{RN11}
Z. Y. Chen,
\newblock Asymptotic behaviors of multi-soliton solutions of the cubic and the modified nonlinear Schrodinger equation
\newblock {\em Commun. Theor. Phys.}, 13(1990), 299-306.


\bibitem{AVKitaev}
A. V. Kitaev, A. H. Vartanian
\newblock {Leading-order temporal asymptotics of the modified nonlinear Schr$\ddot{o}$dinger equation: solitonless sector},
\newblock {\em  Inverse Prooblems}, 13(1997), 1311-1339.


	\bibitem{yangfan2019}
	Y.L Yang, E. G. Fan,
    Riemann-Hilbert approach  to the modified nonlinear Schr\"{o}dinger  equation with non-vanishing asymptotic  boundary conditions, arXiv:1910.07720


	\bibitem{Manakov1974} S.V. Manakov, Nonlinear Fraunhofer diffraction,  Sov. Phys.-JETP 38 (1974), 693-696.


\bibitem{ZM1976}
V. E. Zakharov, S. V.  Manakov,
\newblock {Asymptotic behavior of nonlinear wave systems integrated by the inverse scattering method},
\newblock {\em  Soviet Physics JETP,}   44(1976), 106-112.



\bibitem{SPC}
P. C. Schuur,
\newblock {Asymptotic analysis of soliton products},
\newblock {\em Lecture Notes in Mathematics,}  1232, 1986.


\bibitem{BRF}
R. F. Bikbaev,
\newblock {Asymptotic-behavior as t-infinity of the solution to the cauchy-problem for the landau-lifshitz equation},
\newblock {\em  Theor. Math. Phys,}  77(1988), 1117-1123.

\bibitem{Foka}
R. F. Bikbaev,
\newblock {Soliton generation for initial-boundary-value problems  },
\newblock {\em  Phys. Rev. Lett.}, 68(1992), 3117-3120.





\bibitem{RN6}
X. Zhou, P. Deift,
\newblock  A steepest descent method for oscillatory Riemann-Hilbert problems.
\newblock {\em Ann. Math.}, 137(1993),  295-368.


\bibitem{RN9}
X. Zhou, P. Deift,
\newblock  Long-time behavior of the non-focusing nonlinear Schr$\ddot{o}$dinger equation--a case study,
\newblock {\em Lectures in Mathematical Sciences}, Graduate School of Mathematical Sciences, University of Tokyo, 1994.


\bibitem{RN10}
P. Deift, X. Zhou,
\newblock Long-time asymptotics for solutions of the NLS equation with initial data in a weighted Sobolev space,
\newblock {\em Comm. Pure Appl. Math.}, 56(2003), 1029-1077.




\bibitem{Grunert2009}
K. Grunert and  G. Teschl,
\newblock   Long-time asymptotics for  the Korteweg de Vries equation  via  noninear  steepest descent.
\newblock {\em Math. Phys. Anal. Geom.},     12 (2009), 287-324.

 \bibitem{xu2015}
 J. Xu, E.G. Fan,
\newblock  Long-time asymptotics for the Fokas¨CLenells equation with decaying initial value problem: Without solitons,
\newblock {\em  J. Differential Equations,}    259(2015), 1098-1148.

 \bibitem{xufan2013}
 J. Xu, E.G. Fan, Y. Chen,
\newblock  Long-time Asymptotic for the Derivative Nonlinear
Schrodinger Equation with Step-like Initial Value,
\newblock {\em  Math. Phys. Anal. Geom.,}  16 (2013): 253-288.




\bibitem{MandM2006}
K. T. R. McLaughlin, P. D. Miller,
\newblock {The $\bar{\partial}$ steepest descent method and the asymptotic behavior of polynomials orthogonal on the unit circle with fixed and exponentially varying non-analytic weights},
\newblock {\em Int.  Math. Res. Not.}, (2006), Art. ID 48673.

\bibitem{MandM2008}
K. T. R. McLaughlin, P. D. Miller,
\newblock {The $\bar{\partial}$ steepest descent method for orthogonal polynomials on the real line with varying weights},
\newblock {\em  Int. Math. Res. Not.}, IMRN (2008), Art. ID   075.

\bibitem{DandMNLS}
M. Dieng, K. D. T. McLaughlin,
\newblock {Long-time Asymptotics for the NLS equation via dbar methods},
\newblock {\em}   arXiv: 0805.2807.

\bibitem{SandRNLS}
S. Cuccagna, R. Jenkins,
\newblock {On asymptotic stability of N-solitons of the defocusing nonlinear Schr$\ddot{o}$dinger equation, }
\newblock {\em  Comm. Math. Phys}, 343(2016), 921-969.

\bibitem{fNLS}
M. Borghese, R. Jenkins, K. T. R. McLaughlin,
\newblock { Long-time asymptotic behavior of the focusing nonlinear Schr$\ddot{o}$dinger equation, }
\newblock {\em  Ann. I. H. Poincar$\acute{e}$ Anal}, 35(2018), 887-920.

\bibitem{VEZ1972}
V. E. Zakharov, A. B. Shabat,
\newblock {  Exact theory of two-dimensional self-focusing and one-dimensional selfmodulation of waves in nonlinear media, }
\newblock {\em  Sov. Phys. JETP,}  34(1972),  62-69.

\bibitem{Liu3}
R. Jenkins, J. Liu, P. Perry, C. Sulem,
\newblock    Global well-posedness for the derivative nonlinear Schr$\ddot{o}$dinger equation,
\newblock {\em Commun. Math. Phys.},  363(2018), 1003-1049.




 \bibitem{RN3}
X. Zhou,
\newblock  Direct and inverse scattering transforms with arbitrary spectral singularities,
\newblock {\em Commun. Theor. Phys.}, 42(1989),  895-938.

\bibitem{Liu2}
R. Jenkins, J. Liu, P. Perry, C. Sulem,
\newblock    Global well-posedness for the derivative nonlinear Schr$\ddot{o}$dinger equation,
\newblock {\em Commun. P.D.E}.



\bibitem{RN4}
T. Trogdon, S. Olver,
\newblock   Riemann-Hilbert problems, their numerical solution, and the computation of nonlinear special functions,
\newblock {\em SIAM},  Philadelphia, PA, 2016.




\bibitem{RN5}
X. Zhou,
\newblock  The Riemann-Hilbert problem and inverse scattering,
\newblock {\em SIAM J. Math. Anal.}, 20(1989),  966-986.



\bibitem{RN7}
R. Jenkins, J. Liu, P. Perry, C. Sulem,
\newblock  Soliton Resolution for the Derivative Nonlinear Schr$\ddot{o}$dinger Equation
\newblock {\em Commun. Math. Phys.}, 363(2018),  1003-1049.

\bibitem{RN8}
J. Liu, P. Perry, C. Sulem,
\newblock  Long-time behavior of solutions to the derivative nonlinear Schr$\ddot{o}$dinger equation for soliton-free initial data
\newblock {\em Ann. Inst. Henri Poincar Anal. Non Lin$\acute{e}$aire}, 35(2018),   217-265.








\end{thebibliography}
\end{document}